\documentclass{LMCS}

\usepackage{latexsym}
\usepackage{xspace}
\usepackage{eepic}
\usepackage{times}
\usepackage{amsmath}
\usepackage{graphics}
\usepackage{color}
\usepackage{epsfig}
\usepackage{enumerate,hyperref}

\setlength{\marginparwidth}{0.8in}

\def\exp{{\sc Exptime}}
\def\nlsp{{\sc NLogSpace}}
\def\aa{\alpha}
\def\a-{\aa^-}
\newcommand{\Nat}{\mbox{I$\!$N}}

\def\pa-{p_{\alpha^{-}}}

\def\head{\mathsf{head}}
\newcommand{\FF}{{\mathcal F}}
\newcommand{\val}{{\mathcal V}}
\def\genA{\pairs{\Sigma, b, Q, \delta, q_0, \FF}}
\newcommand{\strat}{\ensuremath{\mathsf{str}}\xspace}
\newcommand{\prom}{\ensuremath{\mathsf{pro}}\xspace}
\newcommand{\ann}{\ensuremath{\mathsf{ann}}\xspace}
\newcommand{\safety}{\textsc{Safety }}
\newcommand{\forrall}{\textsc{Forall }}
\newcommand{\pairs}[1]{\langle #1 \rangle}
\newcommand{\ppairs}[1]{\langle\!\langle #1 \rangle\!\rangle}

\def\doi{4 (3:11) 2008}
\lmcsheading%
{\doi}
{1--27}
{}
{}
{Jan.~\phantom{0}7, 2007}
{Sep.~22, 2008}
{}   

\begin{document}

\title[The Complexity of Enriched $\mu$-Calculi]{The Complexity of Enriched \texorpdfstring{$\mu$}{mu}-Calculi\rsuper*}

\titlecomment{{\lsuper*}A preliminary version of this paper appears in the Proceedings of the 33rd International Colloquium on Automata, Languages and Programming, 2006.}

\author[P.~A.~Bonatti]{Piero A. Bonatti\rsuper a} 
\address{{\lsuper a}Universit\`{a} di Napoli ``Federico II'',
Dipartimento di Scienze Fisiche, 80126 Napoli, Italy}
\email{bonatti@na.infn.it}
\thanks{{\lsuper a}Supported in part by the European Network of
Excellence REWERSE, IST-2004-506779.}

\author[C.~Lutz]{Carsten Lutz\rsuper b} 
\address{{\lsuper b}TU Dresden, Institute for Theoretical Computer Science,
01062 Dresden, Germany}
\email{clu@tcs.inf.tu-dresden.de}

\author[A.~Murano]{Aniello Murano\rsuper c} 
\address{{\lsuper c}Universit\`{a} di Napoli ``Federico II'',
Dipartimento di Scienze Fisiche, 80126 Napoli, Italy}
\email{murano@na.infn.it} 

\author[M.~Y.~Vardi]{Moshe Y. Vardi\rsuper d} 
\address{{\lsuper d} {Microsoft Research} and {Rice University, 
Dept. of Computer Science, TX 77251-1892, USA}}
\email{vardi@cs.rice.edu}
\thanks{{\lsuper d} Supported in part by NSF grants CCR-0311326 and
ANI-0216467, by BSF grant 9800096, and by Texas ATP grant 003604-0058-2003.
Work done in part while this author was visiting the Isaac Newton Institute for
Mathematical Science, Cambridge, UK, as part of a Special Programme on Logic
and Algorithm.}

\keywords{$\mu$-calculi, expressive description logics, hybrid modal
  logics, fully enriched automata, 2-way graded alternating parity
  automata} 
\subjclass{F.3.1, F.4.1}



\begin{abstract}
\noindent
  The \emph{fully enriched $\mu$-calculus} is the extension of the
  propositional $\mu$-calculus with inverse programs, graded
  modalities, and nominals. While satisfiability in several expressive
  fragments of the fully enriched $\mu$-calculus is known to be
  decidable and {\sc ExpTime}-complete, it has recently been proved
  that the full calculus is undecidable. In this paper, we study the
  fragments of the fully enriched $\mu$-calculus that are obtained by
  dropping at least one of the additional constructs.  We show that,
  in all fragments obtained in this way, satisfiability is decidable
  and {\sc ExpTime}-complete. Thus, we identify a family of decidable
  logics that are maximal (and incomparable) in expressive power. Our
  results are obtained by introducing two new automata models, showing
  that their emptiness problems are {\sc ExpTime}-complete, and then
  reducing satisfiability in the relevant logics to these problems.  The
  automata models we introduce are \emph{two-way graded alternating
    parity automata} over infinite trees (2GAPTs) and \emph{fully
    enriched automata} (FEAs) over infinite forests.  The former are a
  common generalization of two incomparable automata models from the
  literature. The latter extend alternating automata in a similar way
  as the fully enriched $\mu$-calculus extends the standard
  $\mu$-calculus.
\end{abstract}

\maketitle
\vfill\eject

\section{Introduction}

The \emph{$\mu$-calculus} is a propositional modal logic augmented
with least and greatest fixpoint operators \cite{Koz83}. It is often
used as a target formalism for embedding temporal and modal logics
with the goal of transferring computational and model-theoretic
properties such as the {\sc ExpTime} upper complexity bound.
\emph{Description logics (DLs)} are a family of knowledge
representation languages that originated in artificial
intelligence~\cite{Baader-et-al-01} and currently receive considerable
attention, which is mainly due to their use as an ontology language in
prominent applications such as the semantic web
\cite{Baader-et-al-02b}. Notably, DLs have recently been standardized
as the ontology language OWL by the W3C committee. It has been pointed
out by several authors that, by embedding DLs into the $\mu$-calculus,
we can identify DLs that are of very high expressive power, but
computationally well-behaved \cite{CGL01,SV01,KSV02}. When putting
this idea to work, we face the problem that modern DLs such as the
ones underlying OWL include several constructs that cannot easily be
translated into the $\mu$-calculus. The most important such constructs
are inverse programs, graded modalities, and nominals. Intuitively,
inverse programs allow to travel backwards along accessibility
relations \cite{Var98}, nominals are propositional variables
interpreted as singleton sets \cite{SV01}, and graded modalities
enable statements about the number of successors (and possibly
predecessors) of a state \cite{KSV02}. All of the mentioned constructs
are available in the DLs underlying OWL.

The extension of the $\mu$-calculus with these constructs induces a
family of enriched $\mu$-calculi. These calculi may or may not enjoy
the attractive computational properties of the original
$\mu$-calculus: on the one hand, it has been shown that
satisfiability in a number of the enriched calculi is decidable and
{\sc ExpTime}-complete \cite{CGL01,SV01,KSV02}. On the other hand,
it has recently been proved by Bonatti and Peron that satisfiability
is undecidable in the \emph{fully enriched $\mu$-calculus}, i.e.,
the logic obtained by extending the $\mu$-calculus with all of the
above constructs simultaneously \cite{BP04}. In computer science
logic, it has always been a major research goal to identify
decidable logics that are as expressive as possible. Thus, the above
results raise the question of maximal decidable fragments of the
fully enriched $\mu$-calculus.  In this paper, we study this
question in a systematic way by considering all fragments of the
fully enriched $\mu$-calculus that are obtained by dropping at least
one of inverse programs, graded modalities, and nominals. We show
that, in all these fragments, satisfiability is decidable and {\sc
ExpTime}-complete. Thus, we identify a whole family of decidable
logics that have maximum  expressivity.

The relevant fragments of the fully enriched $\mu$-calculus are shown in
Figure~\ref{fig:fragments} together with the complexity of their satisfiability
problem. The results shown in gray are already known from the literature: the
{\sc ExpTime} lower bound for the original $\mu$-calculus stems from
\cite{FL79};
\definecolor{gray}{rgb}{0.4,0.4,0.4}
\begin{figure}[tb]
  \centering
  \begin{tabular}{|l|c|c|c|c|}
  \hline
   & Inverse progr. & Graded mod. & Nominals & \multicolumn{1}{c|}{Complexity}\\
  \hline
  fully enriched  $\mu$-calculus & x & x & x & {\color{gray} undecidable} \\
  full graded  $\mu$-calculus & x & x & &{\sc ExpTime} ({\color{gray} 1ary}/2ary)\\
  full hybrid  $\mu$-calculus & x & & x & {\sc\color{gray} ExpTime} \\
  hybrid graded  $\mu$-calculus &  & x & x & {\sc ExpTime} (1ary/2ary)\\
   graded  $\mu$-calculus & & x & & {\color{gray} {\sc ExpTime} (1ary/2ary)}\\
  \hline
  \end{tabular}
  \caption{Enriched $\mu$-calculi and previous results.}
  \label{fig:fragments}
\end{figure}
it has been shown in \cite{SV01} that satisfiability in the full
hybrid $\mu$-calculus is in {\sc ExpTime}; under the assumption that
the numbers inside graded modalities are coded in unary, the same
result was proved for the full graded $\mu$-calculus in \cite{CGL01};
finally, the same was also shown for the (non-full) graded
$\mu$-calculus in \cite{KSV02} under the assumption of binary coding.
In this paper, we prove {\sc ExpTime}-completeness of the full graded
$\mu$-calculus and the hybrid graded $\mu$-calculus.  In both cases,
we allow numbers to be coded in binary (in contrast, the techniques
used in \cite{CGL01} involve an exponential blow-up when numbers are
coded in binary).

Our results are based on the automata-theoretic approach and extends
the techniques in \cite{KSV02,SV01,Var98}. They involve introducing
two novel automata models. To show that the full graded $\mu$-calculus
is in {\sc ExpTime}, we introduce \emph{two-way graded parity tree
  automata (2GAPTs)}. These automata generalize in a natural way two
existing, but incomparable automata models: two-way alternating parity
tree automata (2APTs) \cite{Var98} and (one-way) graded alternating
parity tree automata (GAPTs) \cite{KSV02}. The phrase ``two-way''
indicates that 2GAPTs (like 2APTs) can move up and down in the tree.
The phrase ``graded'' indicates that 2GAPTs (like GAPTs) have the
ability to count the number of successors of a tree node that it moves
to.  Namely, such an automaton can move to at least $n$ or all but $n$
successors of the current node, without specifying which successors
exactly these are. We show that the emptines problem for 2GAPT is in
{\sc ExpTime} by a reduction to the emptiness of graded
nondeterministic parity tree automata (GNPTs) as introduced in
\cite{KSV02}. This is the technically most involved part of this
paper. To show the desired upper bound for the full graded
$\mu$-calculus, it remains to reduce satisfiability in this calculus
to emptiness of 2GAPTs. This reduction is based on the tree model
property of the full graded $\mu$-calculus, and technically rather
standard.

To show that the hybrid graded $\mu$-calculus is in {\sc ExpTime}, we
introduce \emph{fully enriched automata (FEAs)} which run on infinite
forests and, like 2GAPTs, use a parity acceptance condition.  FEAs
extend 2GAPTs by additionally allowing the automaton to send a copy of
itself to some or all roots of the forest. This feature of ``jumping
to the roots'' is in rough correspondence with the nominals included
in the full hybrid $\mu$-calculus. We show that the emptiness problem
for FEAs is in {\sc ExpTime} using an easy reduction to the emptiness
problem for 2GAPTs. To show that the hybrid graded $\mu$-calculus is
in {\sc ExpTime}, it thus remains to reduce satisfiability in this
calculus to emptiness of FEAs. Since the correspondence between
nominals in the $\mu$-calculus and the jumping to roots of FEAs is
only a rough one, this reduction is more delicate than the corresponding
one for the full graded $\mu$-calculus. The reduction is based on a
forest model property enjoyed by the hybrid graded $\mu$-calculus and
requires us to work with the \emph{two-way} automata FEAs although the
hybrid graded $\mu$-calculus does not offer inverse programs.

We remark that, intuitively, FEAs generalize alternating automata on
infinite trees in a similar way as the fully enriched $\mu$-calculus
extends the standard $\mu$-calculus: FEAs can move up to a node's
predecessor (by analogy with inverse programs), move down to at least
$n$ or all but $n$ successors (by analogy with graded modalities), and
jump directly to the roots of the input forest (which are the
analogues of nominals). Still, decidability of the emptiness problem
for FEAs does not contradict the undecidability of the fully enriched
$\mu$-calculus since the latter does not enjoy a forest model property
\cite{BP04}, and hence satisfiability cannot be decided using
forest-based FEAs.

The rest of the paper is structured as follows. The subsequent section
introduces the syntax and semantics of the fully enriched
$\mu$-calculus. The tree model property for the full graded
$\mu$-calculus and a forest model property for the hybrid graded
$\mu$-calculus are then established in Section~\ref{sect:tmpfmp}. In
Section~\ref{sec:enriched-automata}, we introduce FEAs and 2GAPTs and
show how the emptiness problem for the former can be polynomially
reduced to that of the latter. In this section, we also state our
upper bounds for the emptiness problem of these automata models. Then,
Section~\ref{sec:upper-bounds} is concerned with reducing the
satisfiability problem of enriched $\mu$-calculi to the emptiness
problems of 2GAPTs and FEAs. The purpose of
Section~\ref{sect:emptiness} is to reduce the emptiness problem for
2GAPTs to that of GNPTs. Finally, we conclude in
Section~\ref{sect:concl}.

\section{Enriched $\mu$-calculi} \label{sec:prelim}

We introduce the syntax and semantics of the fully enriched $\mu$-calculus.
Let $\mathsf{Prop}$ be a finite set of \emph{atomic propositions},
$\mathsf{Var}$ a finite set of \emph{propositional variables},
$\mathsf{Nom}$ a finite set of \emph{nominals}, and $\mathsf{Prog}$ a
finite set of \emph{atomic programs}. We use $\mathsf{Prog}^-$ to
denote the set of \emph{inverse programs} $\{ a^- \mid a \in
\mathsf{Prog} \}$. The elements of $\mathsf{Prog} \cup
\mathsf{Prog}^-$ are called \emph{programs}. We assume $a^{--}=a$. The
set of \emph{formulas of the fully enriched $\mu$-calculus} is the
smallest set such that

\begin{enumerate}[$\bullet$]
\item
$\mathsf{true}$ and $\mathsf{false}$ are formulas;
\item $p$ and $\neg p$, for $p \in \mathsf{Prop}$, are formulas;
\item $o$ and $\neg o$, for $o \in \mathsf{Nom}$, are formulas;
\item $x \in \mathsf{Var}$ is a formula;
\item $\varphi_1 \vee \varphi_2$ and $\varphi_1 \wedge \varphi_2$ are formulas if $\varphi_1$ and $\varphi_2$ are
formulas;
\item $\pairs{n,\aa} \varphi$, and $[n, \aa] \varphi$ are formulas if $n$ is a non-negative
integer, $\aa$ is a program, and $\varphi$ is a formula;
\item $\mu y. \varphi(y)$ and $\nu y. \varphi(y)$ are formulas if $y$ is a propositional
variable and $\varphi(y)$ is a formula containing $y$ as a free variable.
\end{enumerate}

Observe that we use positive normal form, i.e., negation is applied
only to atomic propositions.

We call $\mu$ and $\nu$ \emph{fixpoint operators} and use $\lambda$ to denote a
fixpoint operator $\mu$ or $\nu$. A propositional variable $y$ occurs
\emph{free} in a formula if it is not in the scope of a fixpoint operator
$\lambda y$, and \emph{bounded} otherwise. Note that $y$ may occur both bounded
and free in a formula. A \emph{sentence} is a formula that contains no free
variables. For a formula $\lambda y.  \varphi(y)$, we write $\varphi(\lambda y.
\varphi(y))$ to denote the formula that is obtained by one-step unfolding,
i.e., replacing each free occurrence of $y$ in $\varphi$ with $\lambda
y.\varphi(y)$. We often refer to the \emph{graded
  modalities} $\pairs{n, \aa} \varphi$ and $[n, \aa] \varphi$ as
\emph{atleast formulas} and \emph{allbut formulas} and assume that the integers
in these operators are given in binary coding: the contribution of $n$ to the
length of the formulas $\pairs{n,\aa}\varphi$ and $[n,\aa]\varphi$ is $\lceil
\log n \rceil$ rather than~$n$. We refer to fragments of the fully enriched
$\mu$-calculus using the names from Figure~\ref{fig:fragments}.
Hence, we say that a formula of the fully enriched $\mu$-calculus is also a
formula of the \emph{hybrid graded $\mu$-calculus}, \emph{full
  hybrid $\mu$-calculus}, and \emph{full graded $\mu$-calculus} if it
does not have inverse programs, graded modalities, and nominals,
respectively.

The semantics of the fully enriched $\mu$-calculus is defined in terms of a
\emph{Kripke structure}, i.e., a tuple $K = \pairs{W,R,L}$ where

\begin{enumerate}[$\bullet$]
   \item $W$ is a non-empty (possibly infinite) set of \emph{states};
   \item $R : \mathsf{Prog} \rightarrow 2^{W \times W}$ assigns
   to each atomic program a binary relation over
   $W$;
   \item $L:\mathsf{Prop} \cup \mathsf{Nom} \rightarrow 2^{W}$ assigns
   to each atomic proposition and nominal a set of states such that
   the sets assigned to nominals are singletons.
\end{enumerate}

To deal with inverse programs, we extend $R$ as follows: for each
atomic program $a$, we set $R(a^-) = \{(v, u) : (u, v) \in R(a)\}$.
For a program $\aa$, if $(w,w')\in R(\aa)$, we say that $w'$ is an
\emph{$\aa$-successor} of $w$. With $\mathsf{succ}_R(w, \alpha)$ we denote
the set of $\aa$-successors of $w$.

Informally, an \emph{atleast} formula $\pairs{n,\aa}\varphi$ holds at a state
$w$ of a Kripke structure $K$ if $\varphi$ holds at least in $n+1$
$\aa$-successors of $w$. Dually, the \emph{allbut} formula $[n, \aa]\varphi$
holds in a state $w$ of a Kripke structure $K$ if $\varphi$ holds in all but at
most $n$ $\aa$-successors of $w$. Note that $\neg \pairs{n,\aa}\varphi$ is
equivalent to $[n, \aa]\neg \varphi$.  Indeed,$\neg \pairs{n,\aa}\varphi$ holds
in a state $w$ if $\varphi$ holds in less than $n+1$ $\aa$-successors of $w$,
thus, at most $n$ $\aa$-successors of $w$ do not satisfy $\neg \varphi$, that
is, $[n, \aa]\neg \varphi$ holds in $w$.  The modalities $\langle \aa \rangle
\varphi$ and $[\aa] \varphi$ of the standard $\mu$-calculus can be expressed as
$\pairs{0,\aa} \varphi$ and $[0,\aa] \varphi$, respectively. The least and
greatest fixpoint operators are interpreted as in the standard $\mu$-calculus.
Readers not familiar with fixpoints might want to look at
\cite{Koz83,SE89,BS06} for instructive examples and explanations of the
semantics of the $\mu$-calculus.

To formalize the semantics, we introduce valuations. Given a Kripke structure
$K = \pairs{W,R,L}$ and a set $\{y_1,\ldots, y_n \}$ of propositional variables
in $\mathsf{Var}$, a \emph{valuation} $\val: \{y_1,\ldots,y_n \} \rightarrow
2^{W}$ is an assignment of subsets of $W$ to the variables $y_1,\ldots, y_n$.
For a valuation $\val$, a variable $y$, and a set $W' \subseteq W$, we denote
by $\val[y \leftarrow W']$ the valuation obtained from $\val$ by assigning $W'$
to $y$. A formula $\varphi$ with free variables among $y_1,\ldots,y_n$ is
interpreted over the structure $K$ as a mapping $\varphi^{K}$ from valuations
to $2^{W}$, i.e., $\varphi^{K}(\val)$ denotes the set of states that satisfy
$\varphi$ under valuation $\val$. The mapping $\varphi^{K}$ is defined
inductively as follows:

\begin{enumerate}[$\bullet$]

\item $\mathsf{true}^{K}(\val) = W$ and $\mathsf{false}^{K}(\val) = \emptyset $;

\item for $p \in \mathsf{Prop} \cup \mathsf{Nom}$,
we have $p^K(\val) = L(p)$ and $(\neg p)^{K}(\val) = W \setminus L(p)$;

\item for $y \in \mathsf{Var}$, we have $y^{K}(\val) =
\val(y)$;

\item $(\varphi_1 \wedge \varphi_2)^{K}(\val) =
\varphi_1^{K}(\val) \cap \varphi_2^{K}(\val)$

\item $(\varphi_1 \vee \varphi_2)^{K}(\val) =
\varphi_1^{K}(\val) \cup \varphi_2^{K}(\val)$;

\item $(\pairs{n,\aa} \varphi)^{K}(\val)= \{w : | \{w'\in W :
(w,w')\in R(\aa) \text{ and } w' \in \varphi^{K}(\val)\}| > n
\}$;

\item \mbox{$([n,\aa] \varphi)^{K}(\val)= \{w : | \{w'\in W :
(w,w')\in R(\aa) \text{ and } w' \not \in \varphi^{K}(\val)\}|\leq n
\}$};

\item $(\mu y.\varphi(y))^{K}(\val)=
\bigcap \{W'\subseteq W : \varphi^{K}(\val [y \leftarrow W']) \subseteq W'\}$;

\item $(\nu y.\varphi(y))^{K}(\val)=
\bigcup \{W'\subseteq W : W' \subseteq \varphi^{K}(\val [y \leftarrow W'])\}$.

\end{enumerate}

Note that, in the clauses for graded modalities, $\aa$ denotes
a program, i.e., $\aa$ can be either an atomic program or an
inverse program. Also, note that no valuation is required for a sentence.

Let $K=\langle W,R,L \rangle$ be a Kripke structure and $\varphi$ a sentence.
For a state $w \in W$, we say that $\varphi$ \emph{holds} at $w$ in $K$,
denoted $K,w \models \varphi$, if $w \in \varphi^{K}(\emptyset)$. $K$ is a
\emph{model} of $\varphi$ if there is a $w \in W$ such that $K,w \models
\varphi$. Finally, $\varphi$ is \emph{satisfiable} if it has a model.

\section{Tree and Forest Model Properties}
\label{sect:tmpfmp}

We show that the full graded $\mu$-calculus has the tree model
property, and that the hybrid graded $\mu$-calculus has a forest model
property.  Regarding the latter, we speak of ``a'' (rather than
``the'') forest model property because it is an abstraction of the
models that is forest-shaped, instead of the models themselves.

For a (potentially infinite) set $X$, we use $X^+$ ($X^*$) to denote
the set of all non-empty (possibly empty) words over $X$. As usual,
for $x,y \in X^*$, we use $x \cdot y$ to denote the concatenation of
$x$ and $y$. Also, we use $\varepsilon$ to denote the empty word and
by convention we take $x \cdot \varepsilon=x$, for each $x\in X^*$.
Let $\Nat$ be a set of non-negative integers. A \emph{forest} is a set
$F \subseteq \Nat^+$ that is prefix-closed, that is, if $x \cdot c \in
F$ with $x \in \Nat^+$ and $c \in \Nat$, then also $x \in F$. The
elements of $F$ are called \emph{nodes}. For every $x \in F$, the
nodes $x \cdot c \in F$ with $c \in \Nat$ are the \emph{successors} of
$x$, and $x$ is their \emph{predecessor}. We use $\mathsf{succ}(x)$ to
denote the set of all successors of $x$ in $F$. A \emph{leaf} is a node
without successors, and a \emph{root} is a node without predecessors.
A forest $F$ is a \emph{tree} if $F \subseteq \{ c \cdot x \mid x \in \Nat^*
\}$ for some $c \in \Nat$ (the \emph{root} of $F$). The root of a tree $F$ is
denoted with $\mathsf{root}(F)$. If for some $c$, $T = F \cap \{ c \cdot x \mid
x \in \Nat^* \}$, then we say that $T$ is \emph{the tree of} $F$ \emph{rooted
in $c$}.

We call a Kripke structure $K=\pairs{W,R,L}$ a \emph{forest
structure} if
\begin{enumerate}[(i)]

\item $W$ is a forest,

\item $\bigcup_{\alpha \in \mathsf{Prog} \cup \mathsf{Prog}^-} R(\alpha)
= \{ (w,v) \in W \times W \mid \text{$w$ is a predecessor or a
  successor of $v$} \}$.

\end{enumerate}
Moreover, $K$ is \emph{directed} if $(w,v)\in\bigcup_{a \in
\mathsf{Prog}} R(a)$ implies that $v$ is a successor of $w$. If $W$
is a tree, then we call $K$ a \emph{tree structure}.

We call $K=\langle W,R,L \rangle$ a \emph{directed quasi-forest
  structure} if $\langle W,R',L \rangle$ is a directed forest
structure, where $R'(a) = R(a) \setminus (W \times \Nat)$ for all
$a\in \mathsf{Prog}$, i.e., $K$ becomes a directed forest structure
after deleting all the edges entering a root of $W$.
Let $\varphi$ be a formula and $o_1,\dots,o_k$ the nominals
occurring in $\varphi$.  A \emph{forest model} (resp.\ \emph{tree
model}, \emph{quasi-forest model}) of $\varphi$ is a forest (resp.\
tree, quasi-forest) structure $K=\langle W,R,L \rangle$ such that
there are roots $c_0,\dots,c_k \in W \cap \Nat$ with $K,c_0 \models
\varphi$ and $L(o_i)=\{c_i\}$, for $1 \leq i \leq k$.  Observe that
the roots $c_0,\dots,c_k$ do not have to be distinct.

Using a standard unwinding technique such as in \cite{Var98,KSV02}, it
is possible to show that the full graded $\mu$-calculus enjoys the
tree model property, i.e., if a formula $\varphi$ is satisfiable, it
is also satisfiable in a tree model. We omit details and concentrate
on the similar, but more difficult proof of the fact that the hybrid
graded $\mu$-calculus has a forest model property.
\begin{thm}
  \label{tree1} If a sentence $\varphi$ of the full graded
  $\mu$-calculus is satisfiable, then $\varphi$ has a tree
  model.
\end{thm}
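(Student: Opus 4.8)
The plan is to establish the tree model property via a standard unwinding (tree-unfolding) construction, adapted to handle the two features that distinguish the full graded $\mu$-calculus from the plain $\mu$-calculus: inverse programs and graded modalities. Since the full graded $\mu$-calculus has no nominals, there is a single distinguished root to worry about, and I would aim to produce a genuine tree structure (not merely a quasi-tree).

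First I would take an arbitrary model $K=\pairs{W,R,L}$ with $K,w_0\models\varphi$ for some state $w_0$, and unwind it from $w_0$ into a tree whose nodes are the directed paths starting at $w_0$. Concretely, a node is a finite sequence $w_0\alpha_1 w_1\alpha_2 w_2\cdots\alpha_m w_m$ recording both the states visited and the program used at each step (the program label is needed precisely because inverse programs mean an edge can be traversed in either direction). I would label each tree node $x$ ending in $w_m$ with the propositions of $w_m$, i.e.\ $L'$ copies the labeling of the last state, and I would connect node $x$ to its children by the program that extends the path, putting the edge into $R'(a)$ or $R'(a^-)$ accordingly. The key design decision is that the tree must satisfy condition~(ii) of a forest/tree structure: every edge must be between a node and its tree-predecessor or tree-successor. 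To respect inverse programs, whenever the original model has $(u,v)\in R(a)$ I make sure that traversing this edge forward in the tree corresponds to an $a$-edge and traversing it backward corresponds to an $a^-$-edge, using the convention $a^{--}=a$.

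The second, more delicate step is to preserve the graded modalities, and this is where I expect the main obstacle to lie. In the plain $\mu$-calculus one only needs $\langle a\rangle$ and $[a]$, so copying every $a$-successor of $w_m$ as a distinct child suffices. Here, an \emph{atleast} formula $\pairs{n,\aa}\psi$ requires at least $n+1$ distinct $\aa$-successors satisfying $\psi$, and the dual \emph{allbut} formula $[n,\aa]\psi$ caps the number of $\aa$-successors failing $\psi$ at $n$; both are sensitive to multiplicities, so the unfolding must not collapse or spuriously duplicate successors. The naive tree-unfolding gives each state in $W$ infinitely many preimages among the tree nodes, which does not by itself corrupt the count of successors \emph{of a fixed node}, since the successors of a node $x=\cdots\alpha_m w_m$ are in bijection with the $\aa$-successors of $w_m$ in $W$ for each program $\aa$ — but I must verify carefully that this bijection holds simultaneously for forward edges and for the single backward edge to the predecessor, and that the predecessor contributes exactly the right multiplicity to the successor counts. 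The crux is to check that for each node $x$ and each program $\aa$, the number of $\aa$-successors of $x$ in the tree equals the number of $\aa$-successors of $w_m$ in $K$, accounting for inverse edges to the parent.

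Finally, I would prove by induction on the structure of $\varphi$ (handling fixpoints by the usual approximant/ordinal-indexed argument, or by appealing to monotonicity and the Knaster--Tarski characterization already given in the semantics) that for every subformula $\psi$ and every tree node $x$ ending in state $w$, we have $x\in\psi^{T}(\val')$ iff $w\in\psi^{K}(\val)$ for matching valuations $\val,\val'$. The modal cases follow from the successor-count preservation established above, and the fixpoint cases are standard because the unwinding is a bisimulation-like correspondence that respects the graded modal structure. Concluding, $K,w_0\models\varphi$ yields $T,\mathsf{root}(T)\models\varphi$, so $\varphi$ has a tree model. The only genuinely non-routine work is the multiplicity bookkeeping for graded modalities in the presence of inverse programs; everything else is the classical unwinding argument as in \cite{Var98,KSV02}.
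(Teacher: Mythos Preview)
Your approach matches the paper's: the paper omits the proof of this theorem entirely, saying only that it follows by ``a standard unwinding technique such as in \cite{Var98,KSV02}'', and your plan is precisely such an unwinding followed by a semantic-preservation argument via induction on subformulas.

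There is, however, a genuine gap in your construction that you correctly flag as ``the crux'' but do not actually resolve. With the path-unfolding as you describe it---where the children of $x = w_0\alpha_1 w_1\cdots\alpha_m w_m$ are \emph{all} one-step extensions $x\cdot\alpha\cdot v$ with $(w_m,v)\in R(\alpha)$---the claimed bijection between $\aa$-successors of $x$ in the tree and $\aa$-successors of $w_m$ in $K$ fails for $\aa = \alpha_m^{-}$: the parent $x'$ (ending in $w_{m-1}$) is an $\alpha_m^{-}$-successor of $x$, and so is the child $x\cdot\alpha_m^{-}\cdot w_{m-1}$, so $w_{m-1}$ is represented twice. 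This breaks allbut formulas: if $[1,\alpha_m^{-}]\psi$ holds at $w_m$ because exactly $w_{m-1}$ fails $\psi$, then at $x$ two $\alpha_m^{-}$-successors fail $\psi$ and the formula is violated.

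The standard fix is to omit, when generating the children of $x$, exactly the reverse of the incoming edge: do not add a child for the pair $(\alpha_m^{-}, w_{m-1})$, and let the edge from $x'$ to $x$ lie in $R'(\alpha_m)$ only. Then for every program $\aa$ the $\aa$-successors of $x$ are in bijection with those of $w_m$ (the parent accounts for the single $\alpha_m^{-}$-edge back to $w_{m-1}$; every other outgoing edge of $w_m$, including other edges to $w_{m-1}$ via different programs, becomes exactly one child). With this adjustment your inductive argument goes through; the fixpoint cases are routine via ordinal approximants once the graded-modality cases are secured by the bijection.
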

In contrast to the full graded $\mu$-calculus, the hybrid graded $\mu$-calculus
does not enjoy the tree model property. This is, for example, witnessed by the
formula
$$
o \wedge \pairs{0, a} (p_1 \wedge \pairs{0, a} (p_2 \wedge \cdots
\pairs{0, a} (p_{n-1} \wedge \pairs{0, a} o) \cdots ))
$$
which generates a cycle of length $n$ if the atomic propositions $p_i$
are forced to be mutually exclusive (which is easy using additional
formulas). However, we can follow \cite{SV01,KSV02} to show that the
hybrid graded $\mu$-calculus has a forest model property. More
precisely, we prove that the hybrid graded $\mu$-calculus enjoys the
\emph{quasi-forest model property}, i.e., if a formula $\varphi$ is
satisfiable, it is also satisfiable in a directed quasi-forest
structure.

The proof is a variation of the original construction for the $\mu$-calculus
given by Streett and Emerson in \cite{SE89}. It is an amalgamation of the
constructions for the hybrid $\mu$-calculus in \cite{SV01} and for the hybrid
graded $\mu$-calculus in \cite{KSV02}. We start with introducing the notion of
a well-founded adorned pre-model, which augments a model with additional
information that is relevant for the evaluation of fixpoint formulas.  Then, we
show that any satisfiable sentence $\varphi$ of the hybrid graded
$\mu$-calculus has a well-founded adorned pre-model, and that any such
pre-model can be unwound into a tree-shaped one, which can be converted into a
directed quasi-forest model of $\varphi$.

To determine the truth value of a Boolean formula, it suffices to
consider its subformulas. For $\mu$-calculus formulas, one has to
consider a larger collection of formulas, the so called Fischer-Ladner
closure \cite{FL79}.
The \emph{closure} $\mathsf{cl}(\varphi)$ of a sentence $\varphi$ of the hybrid
graded $\mu$-calculus is the smallest set of sentences satisfying the
following:
\begin{enumerate}[$\bullet$]

\item $\varphi \in \mathsf{cl}(\varphi)$;

\item if $\psi_1 \wedge \psi_2 \in \mathsf{cl}(\varphi)$ or $\psi_1 \vee
  \psi_2 \in \mathsf{cl}(\varphi)$, then $\{ \psi_1, \psi_2 \} \subseteq
  \mathsf{cl}(\varphi)$;

\item if $\pairs{n,a}\psi \in \mathsf{cl}(\varphi)$ or $[n,a]\psi
  \in \mathsf{cl}(\varphi)$, then $\psi \in \mathsf{cl}(\varphi)$;

\item if $\lambda y. \psi(y) \in \mathsf{cl}(\varphi)$, then $
  \psi(\lambda y. \psi(y)) \in \mathsf{cl}(\varphi)$.
\end{enumerate}
An \emph{atom} is a subset $A\subseteq \mathsf{cl}(\varphi)$ satisfying the
following properties:

\begin{enumerate}[$\bullet$]
  \item if $p \in \mathsf{Prop} \cup \mathsf{Nom}$ occurs in $\varphi$, then $p \in A$ iff $\neg p \not \in A$;
  \item if $\psi_1 \wedge \psi_2 \in \mathsf{cl}(\varphi)$, then $\psi_1 \wedge \psi_2 \in A$ iff
  $\{\psi_1,\psi_2\}\subseteq A$;
  \item if $\psi_1 \vee \psi_2 \in \mathsf{cl}(\varphi)$, then $\psi_1 \vee \psi_2 \in A$ iff
  $\{\psi_1,\psi_2\}\cap A \neq \emptyset$;
  \item if $\lambda y. \psi(y) \in \mathsf{cl}(\varphi)$, then $\lambda y. \psi(y) \in A$ iff
  $\psi(\lambda y. \psi(y))\in A$.

\end{enumerate}
The set of atoms of $\varphi$ is denoted $\mathsf{at}(\varphi)$.
A \emph{pre-model} $\pairs{K,\pi}$ for a sentence $\varphi$ of the hybrid
graded $\mu$-calculus consists of a Kripke structure $K =\pairs{W,R,L}$ and a
mapping $\pi : W \rightarrow \mathsf{at}(\varphi)$ that satisfies the following
properties:

\begin{enumerate}[$\bullet$]
    \item there is $w_0 \in W$ with $\varphi \in \pi(w_0)$;
    \item for $p \in \mathsf{Prop} \cup \mathsf{Nom}$, if $p \in \pi(w)$, then $w \in L(p)$,
            and if $\neg p \in \pi(w)$, then $w \not \in L(p)$;
    \item if $\pairs{n,a}\psi \in \pi(w)$, then there is a set $V \subseteq \mathsf{succ}_R(w, a)$,
            such that $|V| > n$ and $\psi \in \pi(v)$ for all $v \in V$;
    \item if $[n,a]\psi \in \pi(w)$, then there is a set $V \subseteq \mathsf{succ}_R(w, a)$,
            such that $|V| \leq n$ and $\psi \in \pi(v)$ for all $v \in \mathsf{succ}_R(w, a) \setminus V$.

\end{enumerate}
If there is a pre-model $\langle K,\pi \rangle$ of $\varphi$ such that
for every state $w$ and all $\psi \in \pi(w)$, it holds that $K,w
\models \psi$, then $K$ is clearly a model of $\varphi$. However, the
definition of pre-models does not guarantee that
$\psi \in \pi(w)$ is satisfied at $w$ if $\psi$ is a least fixpoint
formula.  In a nutshell, the standard approach for dealing with this
problem is to enforce that it is possible to trace the evaluation of
a least fixpoint formula through $K$ such that the original formula is
not regenerated infinitely often. When tracing such evaluations, a
complication is introduced by disjunctions and at least restrictions,
which require us to make a choice on how to continue the trace.  To
address this issue, we adapt the notion of a choice function of
Streett and Emerson \cite{SE89} to the hybrid graded $\mu$-calculus.

A \emph{choice function} for a pre-model $\pairs{K,\pi}$ for
$\varphi$ is a partial function $\mathsf{ch}$ from $W \times
\mathsf{cl}(\varphi)$ to $\mathsf{cl}(\varphi) \cup 2^W$, such that
for all $w \in W$, the following conditions hold:

\begin{enumerate}[$\bullet$]

    \item if $\psi_1 \vee \psi_2 \in \pi(w)$,
            then $\mathsf{ch}(w, \psi_1 \vee \psi_2) \in \{\psi_1, \psi_2\} \cap \pi(w)$;
    \item if $\pairs{n,a}\psi \in \pi(w)$,
            then $\mathsf{ch}(w, \pairs{n,a}\psi) = V \subseteq \mathsf{succ}_R(w, a)$,
            such that $|V| > n$ and $\psi \in \pi(v)$ for all $v \in V$;
    \item if $[n,a]\psi \in \pi(w)$,
            then $\mathsf{ch}(w, [n,a]\psi) = V \subseteq \mathsf{succ}_R(w, a)$,
            such that $|V| \leq n$ and $\psi \in \pi(v)$ for all $v \in \mathsf{succ}_R(w, a) \setminus V$.
\end{enumerate}
An \emph{adorned pre-model} $\pairs{K,\pi, \mathsf{ch}}$ of $\varphi$
consists of a pre-model $\pairs{K,\pi}$ of $\varphi$ and a choice
function~$\mathsf{ch}$. We now define the notion of a derivation
between occurrences of sentences in adorned pre-models, which
formalizes the tracing mentioned above.  For an adorned pre-model
$\pairs{K,\pi, \mathsf{ch}}$ of $\varphi$, the \emph{derivation
  relation} ${\rightsquigarrow} \subseteq (W \times
\mathsf{cl}(\varphi)) \times (W \times \mathsf{cl}(\varphi))$ is
the smallest relation such that, for all $w \in W$, we have:
\begin{enumerate}[$\bullet$]
    \item if $\psi_1 \vee \psi_2 \in \pi(w)$,
            then $(w, \psi_1 \vee \psi_2) \rightsquigarrow (w, \mathsf{ch}(\psi_1 \vee \psi_2))$;
    \item if $\psi_1 \wedge  \psi_2 \in \pi(w)$,
            then $(w, \psi_1 \wedge \psi_2) \rightsquigarrow (w, \psi_1)$ and
            $(w, \psi_1 \wedge \psi_2) \rightsquigarrow (w, \psi_2) $;
    \item if $\pairs{n,a}\psi \in \pi(w)$,
            then $(w, \pairs{n,a}\psi)\rightsquigarrow (v, \psi)$
            for each $v \in  \mathsf{ch}(w, \pairs{n,a}\psi)$;
    \item if $[n,a]\psi \in \pi(w)$,
            then $(w, [n,a]\psi)\rightsquigarrow (v, \psi)$
            for each $v \in  \mathsf{succ}_R(w, a) \setminus \mathsf{ch}(w, [n,a]\psi)$;
    \item if $\lambda y. \psi(y) \in \pi(w)$,
            then $(w, \lambda y. \psi(y)) \rightsquigarrow (w, \psi(\lambda y. \psi(y)))$.
\end{enumerate}
A least fixpoint sentence $\mu y. \psi(y)$ is \emph{regenerated} from state $w$
to state $v$ in an adorned pre-model $\pairs{K, \pi,
  \mathsf{ch}}$ of $\varphi$ if there is a sequence $(w_1, \rho_1),
\ldots, (w_k, \rho_k) \in (W \times \mathsf{cl}(\varphi))^*$, $k >1$,
such that $\rho_1=\rho_k =\mu y. \psi(y)$, $w=w_1$, $v=w_k$, the
formula $\mu y. \psi(y)$ is a sub-sentence of each $\rho_i$ in the
sequence, and for all $1 \leq i < k$, we have $(w_i, \rho_i)
\rightsquigarrow (w_{i+1}, \rho_{i+1})$. We say that $\pairs{K,\pi,
  \mathsf{ch}}$ is \emph{well-founded} if there is no least fixpoint
sentence $\mu y. \psi(y) \in \mathsf{cl}(\varphi)$ and infinite
sequence $w_1, w_2, \ldots$ such that, for each $i \geq 1$, $\mu y.
\psi(y)$ is regenerated from $w_i$ to $w_{i+1}$.  The proof of the
following lemma is based on signatures, i.e., sequence of ordinals
that guides the evaluation of least fixpoints. It is a minor variation
of the one given for the original $\mu$-calculus in \cite{SE89}. Details
are omitted.
\begin{lem}
\label{lem:modtopre}
Let $\varphi$ be a sentence of the hybrid graded $\mu$-calculus. Then:
\begin{enumerate}[\em(1)]
\item if $\varphi$ is satisfiable, it has a well-founded adorned
  pre-model;
\item
if $\pairs{K,\pi,\mathsf{ch}}$ is a well-founded adorned pre-model of
$\varphi$, then $K$ is a model of $\varphi$.
\end{enumerate}
\end{lem}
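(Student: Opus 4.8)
The plan is to prove the two parts separately, in both cases adapting the signature technique of Streett and Emerson~\cite{SE89} to the present setting; the graded modalities and nominals will affect only the modal and atomic base cases, not the fixpoint machinery. For part~(1), I would start from a model $K=\pairs{W,R,L}$ of $\varphi$ with $K,w_0\models\varphi$ and define the labelling $\pi(w)=\{\psi\in\mathsf{cl}(\varphi)\mid K,w\models\psi\}$. A routine check against the semantics shows that each $\pi(w)$ is an atom and that $\pairs{K,\pi}$ is a pre-model: the propositional and nominal clauses follow from the definition of $L$, the Boolean and fixpoint clauses follow from the atom conditions, and for $\pairs{n,a}\psi\in\pi(w)$ (resp.\ $[n,a]\psi\in\pi(w)$) the semantics directly supplies a witness set $V\subseteq\mathsf{succ}_R(w,a)$ of the required cardinality. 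The choice function $\mathsf{ch}$ is then obtained by selecting, for each disjunction, atleast, and allbut occurrence in $\pi(w)$, one such semantic witness. The only non-routine point is to make these selections so that the resulting adorned pre-model is well-founded.

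To this end I would equip $K$ with signatures. Enumerate the least fixpoint sentences $\mu_1,\dots,\mu_m\in\mathsf{cl}(\varphi)$ consistently with the subsentence order (outer before inner), and for each $w\in\mu_i^K$ let its $\mu_i$-rank be the least ordinal $\beta$ with $w$ in the $\beta$-th approximant of $\mu_i^K$. To a pair $(w,\psi)$ with $\psi\in\pi(w)$ assign the tuple of these ranks, ordered lexicographically; tuples of ordinals are well-ordered. I would then refine $\mathsf{ch}$ so that every witness it picks minimises this signature. A case analysis on the clauses defining $\rightsquigarrow$ shows that a single derivation step never increases the signature, and that whenever a least fixpoint $\mu y.\psi(y)$ is regenerated along a $\rightsquigarrow$-path its own coordinate strictly decreases, since the unfolding step drops to a strictly smaller approximant while the coordinates of the enclosing fixpoints cannot increase. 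Well-foundedness of the lexicographic order then rules out an infinite regeneration, giving well-foundedness of $\pairs{K,\pi,\mathsf{ch}}$.

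For part~(2), set $\|\psi\|=\{w\in W\mid\psi\in\pi(w)\}$ and prove $\|\psi\|\subseteq\psi^K(\emptyset)$ for every $\psi\in\mathsf{cl}(\varphi)$; applied to $\varphi\in\pi(w_0)$ this yields $K,w_0\models\varphi$. The atomic, Boolean, and modal cases are immediate from the pre-model conditions and the atom conditions: e.g.\ if $\pairs{n,a}\psi\in\pi(w)$, the pre-model supplies more than $n$ successors carrying $\psi$, which lie in $\psi^K$ by the inductive hypothesis. Greatest fixpoints are handled by a safety argument that does not use well-foundedness: the atom condition gives $\|\nu y.\psi(y)\|=\|\psi(\nu y.\psi(y))\|$, so $\|\nu y.\psi(y)\|$ is a post-fixpoint of the operator $W'\mapsto\psi^K(\val[y\leftarrow W'])$ and is therefore contained in the greatest fixpoint $(\nu y.\psi(y))^K$. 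The least fixpoint case is where well-foundedness is essential: for $w\in\|\mu y.\psi(y)\|$, well-foundedness bounds the $\rightsquigarrow$-regenerations of $\mu y.\psi(y)$ issuing from $w$, so the regeneration relation is well-founded and assigns $w$ an ordinal $\mathrm{rank}$; an induction on this rank, using the derivation clauses to descend through the unfolding $\psi(\mu y.\psi(y))$ and sending each regenerated occurrence to a strictly smaller rank, places $w$ in a bounded approximant of $(\mu y.\psi(y))^K$ and hence in the fixpoint itself.

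I expect the main obstacle to be the bookkeeping for nested (alternating) fixpoints in part~(1): defining the signature tuples and proving that the minimising choice function makes every $\rightsquigarrow$-step non-increasing requires a careful treatment of how the approximant stage of an inner fixpoint depends on the current stages of the enclosing ones, and of the reset of the inner coordinates when an outer fixpoint is regenerated. The corresponding least fixpoint step in part~(2) is its dual and inherits exactly the same subtlety. These are precisely the points handled by \cite{SE89} for the pure $\mu$-calculus, which is why the present proof is only a minor variation once the modal and nominal base cases have been discharged.
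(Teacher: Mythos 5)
Your proposal is correct and takes essentially the same approach as the paper: the paper's entire proof of this lemma is the remark that it is a signature-based argument, a minor variation of the one in \cite{SE89}, with details omitted. What you have written is precisely that argument spelled out --- semantic truth sets as atoms and a signature-minimizing choice function for part~(1), and the post-fixpoint argument for $\nu$ together with rank induction on the well-founded regeneration relation for $\mu$ in part~(2).
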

We now establish the forest model property of the hybrid graded $\mu$-calculus.
\begin{thm}
  \label{tree2}
  If a sentence $\varphi$ of the hybrid graded $\mu$-calculus is
  satisfiable, then $\varphi$ has a directed quasi-forest model.
\end{thm}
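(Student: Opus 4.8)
The plan is to build on Lemma~\ref{lem:modtopre}. By part~(1), $\varphi$ has a well-founded adorned pre-model $\pairs{K,\pi,\mathsf{ch}}$ with $K=\pairs{W,R,L}$, and by part~(2) it suffices to turn this into a well-founded adorned pre-model whose underlying structure is a directed quasi-forest, as any well-founded adorned pre-model is a model. Let $w_0$ be a state with $\varphi\in\pi(w_0)$ and let $n_1,\dots,n_k$ be the unique states interpreting the nominals $o_1,\dots,o_k$ of $\varphi$. I would construct the target $K'=\pairs{W',R',L'}$ by an unwinding, maintaining a map $\rho\colon W'\to W$ and setting $\pi'(x):=\pi(\rho(x))$, with $L'$ inherited from $K$ along $\rho$. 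Its roots $c_0,\dots,c_k\in\Nat$ are the unwindings of $w_0,n_1,\dots,n_k$, with $c_i$ and $c_j$ identified whenever the corresponding states coincide. The point of the construction is that the roots $c_1,\dots,c_k$ will be the \emph{only} nodes carrying nominals, which is exactly what forces a quasi-forest rather than a tree.

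The unwinding is top-down; since the hybrid graded $\mu$-calculus has no inverse programs, $\pi(w)$ contains graded modalities over atomic programs only, so at a node $x$ with $\rho(x)=w$ it is enough to realize forward successors. For each atleast-formula $\pairs{n,a}\psi\in\pi(w)$ I would pick $n+1$ of the witnesses selected by $\mathsf{ch}(w,\pairs{n,a}\psi)$ and realize exactly the finitely many successors chosen across all such formulas: a chosen successor $v$ that is a nominal state $n_i$ is handled by adding the edge $(x,c_i)$ to $R'(a)$, redirecting to the existing root instead of spawning a copy, while any other chosen $v$ yields a fresh child $x\cdot c$ with $\rho(x\cdot c)=v$ and an edge $(x,x\cdot c)$ in $R'(a)$ (a state chosen under several programs gives a single child carrying the several edges). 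Inverse-program relations are then set by $R'(a^-)=\{(v,u):(u,v)\in R'(a)\}$. Distinct states of $W$ map to distinct targets (fresh children are pairwise distinct, and $c_i=c_j$ only when $n_i=n_j$), so each atleast-witness set keeps its cardinality and every atleast-formula in $\pi'(x)$ holds. For an allbut-formula $[n,a]\psi\in\pi(w)$, the successors violating $\psi$ form a set of size at most $n$ in $K$, and we retain only a subset of successors, so at most $n$ violators remain; hence every allbut-formula holds as well. Transferring $\mathsf{ch}$ along $\rho$ --- the same disjunct for a $\vee$, the image of the chosen witnesses for an atleast, and the image of the exception set intersected with the retained successors for an allbut --- produces a choice function $\mathsf{ch}'$, and thus an adorned pre-model $\pairs{K',\pi',\mathsf{ch}'}$.

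It then remains to verify three things. For the \emph{quasi-forest} claim, observe that every edge of $K'$ is either a downward edge into a fresh child or a redirect into a root $c_i\in\Nat$; deleting the redirects, i.e.\ passing to $R'(a)\setminus(W'\times\Nat)$, leaves exactly the downward edges, which together with their inverses are precisely the predecessor/successor pairs, so the residue is a directed forest structure and $K'$ is a directed quasi-forest. For the \emph{model} claim, $\varphi\in\pi'(c_0)$, and any node $x$ with $o_i\in\pi'(x)$ has $\rho(x)\in L(o_i)=\{n_i\}$, hence $\rho(x)=n_i$; since every edge into $n_i$ was redirected and no copy of a nominal state is ever created, $c_i$ is the unique such node and $L'(o_i)=\{c_i\}$. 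For \emph{well-foundedness}, the definitions are arranged so that $\rho$ sends every derivation step $(x,\rho_1)\rightsquigarrow(x',\rho_2)$ of $K'$ to a step $(\rho(x),\rho_1)\rightsquigarrow(\rho(x'),\rho_2)$ of $K$; an infinite sequence regenerating some $\mu y.\psi(y)$ in $K'$ would therefore project to one in $K$, contradicting well-foundedness of $\pairs{K,\pi,\mathsf{ch}}$. By Lemma~\ref{lem:modtopre}(2), $K'$ is then a model, and by the first two points it is a directed quasi-forest model of $\varphi$.

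I expect the preservation of well-foundedness to be the main obstacle. The nominal redirects create cycles through the roots $c_i$ that are absent from an ordinary tree unwinding, so well-foundedness cannot be read off from the tree shape and has to be pulled back to $K$ along $\rho$. Making that pull-back a faithful step-for-step simulation of $\rightsquigarrow$ is what pins down the definition of $\mathsf{ch}'$, in particular its allbut clause: intersecting the exception set with the retained successors is precisely what makes the residual derivation targets in $K'$ match those in $K$.
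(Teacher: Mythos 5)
Your proposal is correct and follows essentially the same route as the paper: both apply Lemma~\ref{lem:modtopre} in each direction, unwind the well-founded adorned pre-model top-down along the choice function while redirecting edges into nominal states back to the designated roots, transfer $\pi$ and $\mathsf{ch}$ along the unwinding map (with exactly the same intersection trick in the allbut clause of $\mathsf{ch}'$), and preserve well-foundedness by projecting regeneration sequences back to the original pre-model. The only (immaterial) difference is bookkeeping: the paper also redirects chosen successors equal to $w_0$ to its root and inherits the edges among roots in the base case, whereas you redirect only nominal states.
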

\begin{proof}
  Let $\varphi$ be satisfiable. By item~(1) of
  Lemma~\ref{lem:modtopre}, there is a well-founded adorned pre-model
  $\pairs{K,\pi, \mathsf{ch}}$ for $\varphi$.
  We unwind $K$ into a directed quasi-forest structure
  $K'=\pairs{W',L',R'}$, and define a corresponding mapping $\pi':W'
  \rightarrow \mathsf{at}(\varphi)$ and choice function $\mathsf{ch}'$
  such that $\pairs{K',\pi', \mathsf{ch}'}$ is again a well-founded
  adorned pre-model of $\varphi$. Then, item~(2) of
  Lemma~\ref{lem:modtopre} yields that $K'$ is actually a model of
  $\varphi$.

  Let $K=\pairs{W,L,R}$, and let $w_0 \in W$ such that $\varphi \in
  \pi(w_0)$. The set of states $W'$ of $K'$ is a subset of $\Nat^+$
  as required by the definition of (quasi) forest structures, and we
  define $K'$ in a stepwise manner by proceeding inductively on the
  length of elements of $W'$. Simultaneously, we define $\pi'$,
  $\mathsf{ch}'$, and a mapping $\tau:W' \rightarrow W$ that keeps
  track of correspondences between states in $K'$ and $K$.

  The base of the induction is as follows. Let $I = \{ w_1,\dots,w_k
  \} \subseteq W$ be a minimal subset such that $w_0 \in I$ and if $o$
  is a nominal in $\varphi$ and $L(o)=\{w\}$, then $w \in I$. Define
  $K'$ by setting:
  \begin{enumerate}[$\bullet$]

  \item $W' := \{ 1,\dots,k\}$;

  \item $R'(a):=\{(i,j) \mid (w_i,w_j) \in R(a), 1 \leq i \leq j \leq
    k \}$ for all $a \in \mathsf{Prog}$;

  \item $L'(p):=\{ i \mid w_i \in L(p), 1 \leq i \leq k \}$
    for all $p \in \mathsf{Prop} \cup \mathsf{Nom}$.
  \end{enumerate}
  Define $\tau$ by setting $\tau(i)=w_i$ for $1 \leq i \leq k$.  Then,
  $\pi'(w)$ is defined as $\pi(\tau(w))$ for all $w \in W'$, and
  $\mathsf{ch}'$ is defined by setting $\mathsf{ch'}(w, \psi_1 \vee
  \psi_2) = \mathsf{ch}(\tau(w), \psi_1 \vee \psi_2)$ for all $\psi_1
  \vee \psi_2 \in \pi'(w)$. Choices for atleast and allbut formulas
  are defined in the induction step.

  In the induction step, we iterate over all $w \in W'$ of maximal
  length, and for each such $w$ extend $K'$, $\pi'$, $\mathsf{ch}'$,
  and $\tau$ as follows. Let $(\langle a_1, n_1 \rangle \psi_1,v_1),
  \dots, (\langle a_m, n_m \rangle \psi_m,v_m)$ be all pairs from
  $\mathsf{cl}(\varphi) \times W$ of this form such that for each
  $(\langle a_i, n_i \rangle \psi_i,v_i)$, we have $\langle a_i, n_i
  \rangle \psi_i \in \pi(w)$ and $v_i \in \mathsf{ch}(\tau(w),\langle
  a_i, n_i \rangle \psi_i)$.  For $1 \leq i \leq m$, define
  $$
    \sigma(v_i) = \left \{
      \begin{array}{ll}
        j & \text{ if }  v_i = \tau(j), 1 \leq j \leq k \\
        w \cdot i & \text{otherwise}.
      \end{array}
    \right .
  $$
  To extend $K'$, set
  \begin{enumerate}[$\bullet$]

  \item $W' := W' \cup \{ \sigma(v_1),\dots,\sigma(v_m) \}$;

  \item $R'(a):= R'(a) \cup \{ (w,\sigma(v_i)) \mid a_i = a, 1 \leq i
    \leq m \}$ for all $a \in \mathsf{Prog}$;

  \item $L'(p):= L'(p) \cup \{ w \cdot i \in W \mid v_i \in
    L(p), 1 \leq i \leq m \}$ for all $p \in \mathsf{Prop} \cup
    \mathsf{Nom}$.

  \end{enumerate}
  Extend $\tau$ and $\pi'$ by setting $\tau(w \cdot i)=v_i$ and
  $\pi'(w \cdot i)=\pi(v_i)$ for all $w \cdot i \in W'$.  Finally,
  extend $\mathsf{ch'}$ by setting
  \begin{enumerate}[$\bullet$]

  \item $\mathsf{ch}'(w \cdot i, \psi_1 \vee \psi_2) :=
    \mathsf{ch}(v_i, \psi_1 \vee \psi_2)$ for all $w \cdot i \in W'$
    and $\psi_1 \vee \psi_2 \in \pi'(w \cdot i)$;

  \item $\mathsf{ch'}(w, \langle n,a \rangle \psi):= \{\sigma(v)
    \mid v \in \mathsf{ch}(\tau(w), \langle n,a \rangle\psi)\}$ for
    all $\langle n,a \rangle \psi \in \pi'(w)$;

  \item $\mathsf{ch'}(w, [ n,a ] \psi):= \{\sigma(v)
    \mid v \in \mathsf{ch}(\tau(w), [ n,a ] \psi)
    \cap \{ v_1,\dots,v_m\} \}$ for
    all $[ n,a ] \psi \in \pi'(w)$.

  \end{enumerate}
  It is easily seen that $K'$ is a directed quasi-forest structure.
  Since $\pairs{K,\pi,\mathsf{ch}}$ is an adorned pre-model of
  $\varphi$, it is readily checked that $\pairs{K',\pi',\mathsf{ch}'}$
  is an adorned pre-model of $\varphi$ as well. If a sentence $\mu y.
  \psi(y)$ is regenerated from $x$ to $y$ in $(K', \pi',
  \mathsf{ch}')$, then $\mu y. \psi(y)$ is regenerated from $\tau(x)$
  to $\tau(y)$ in $(K, \pi, \mathsf{ch})$. It follows that
  well-foundedness of $\pairs{K,\pi,\mathsf{ch}}$ implies
  well-foundedness of $\pairs{K',\pi',\mathsf{ch}'}$.
\end{proof}
Note that the construction from this proof fails for the fully
enriched $\mu$-calculus because the unwinding of $K$ duplicates
states, and thus also duplicates incoming edges to nominals. Together
with inverse programs and graded modalities, this may result in
$\pairs{K',\pi'}$ not being a pre-model of~$\varphi$.

\section{Enriched automata} \label{sec:enriched-automata}

Nondeterministic automata on infinite trees are a variation of
nondeterministic automata on finite and infinite words, see
\cite{Tho90} for an introduction.  \emph{Alternating automata}, as
first introduced in \cite{MS87}, are a generalization of
nondeterministic automata.  Intuitively, while a nondeterministic
automaton that visits a node $x$ of the input tree sends one copy of
itself to each of the successors of $x$, an alternating automaton
can send several copies of itself to the same successor.  In the
two-way paradigm \cite{Var98}, an automaton can send a copy of
itself to the predecessor of $x$, too.
In graded automata \cite{KSV02}, the automaton can send copies of itself to a
number $n$ of successors, without specifying which successors these exactly
are. Our most general automata model is that of fully enriched automata, as
introduced in the next subsection. These automata work on infinite forests,
include all of the above features, and additionally have the ability to send a
copy of themselves to the roots of the forest.

\subsection{Fully enriched automata}

We start with some preliminaries.
Let $F \subseteq \Nat^+$ be a forest, $x$ a node in $F$, and $c \in
\Nat$. As a convention, we take 
$(x \cdot c) \cdot -1 = x$ and $c \cdot -1$ as undefined.
A \emph{path} $\pi$ in $F$ is a minimal set $\pi \subseteq F$ such that some
root $r$ of $F$ is contained in $\pi$ and for every $x \in \pi$, either $x$ is
a leaf or there exists a $c \in F$ such that $x \cdot c \in \pi$. Given an
alphabet $\Sigma$, a \emph{$\Sigma$-labeled
  forest} is a pair $\pairs{F, V}$, where $F$ is a forest and $V : F
\rightarrow \Sigma$ maps each node of $F$ to a letter in $\Sigma$.
We call $\pairs{F, V}$ a \emph{$\Sigma$-labeled tree} if $F$ is a tree.

For a given set $Y$, let $B^{+}(Y)$ be the set of positive Boolean
formulas over $Y$ (i.e., Boolean formulas built from elements in $Y$
using $\wedge$ and $\vee$), where we also allow the formulas
$\mathsf{true}$ and $\mathsf{false}$ and $\wedge$ has precedence over
$\vee$. For a set $X \subseteq Y$ and a formula $\theta \in B^{+}(Y)$,
we say that $X$ satisfies $\theta$ iff assigning true to elements in
$X$ and assigning false to elements in $Y \setminus X$ makes $\theta$
true.  For $b>0$, let
$$
\begin{array}{ccl}
  \ppairs{b} &=& \{ \pairs{0}, \pairs{1}, \ldots, \pairs{b}\} \\
  {[[b]]} &=& \{[0],[1],\ldots,[b]\} \\
  D_b &=& \ppairs{b} \cup [[b]]
  \cup
  \{-1, \varepsilon, \pairs{\mathsf{root}}, [\mathsf{root}]\}
\end{array}
$$
A fully enriched automaton is an automaton in which the transition
function $\delta$ maps a state $q$ and a letter $\sigma$ to a formula
in $B^{+}(D_b \times Q)$.  Intuitively, an atom $(\pairs{n}, q)$
(resp.\ $([n], q)$) means that the automaton sends copies in state $q$
to $n + 1$ (resp.\ all but $n$) different successors of the current
node, $(\varepsilon, q)$ means that the automaton sends a copy in
state $q$ to the current node, $(-1, q)$ means that the automaton
sends a copy in state $q$ to the predecessor of the current node, and
$(\pairs{\mathsf{root}},q)$ (resp.\ $([\mathsf{root}],q)$) means that
the automaton sends a copy in state $q$ to some root (resp.\ all roots). When, for
instance, the automaton is in state $q$, reads a node $x$, and
$$\delta(q , V(x)) = (-1, q_1) \wedge ((\pairs{\mathsf{root}}, q_2) \vee
([\mathsf{root}],q_3)),
$$
it sends a copy in state $q_1$ to the predecessor and either sends a
copy in state $q_2$ to some root or a copy in state $q_3$ to
all roots.

Formally, a \emph{fully enriched automaton} (FEA, for short) is a
tuple $A = \langle \Sigma$, $b$, $Q$, $\delta$, $q_0$, $\FF \rangle$,
where $\Sigma$ is a finite input alphabet, $b > 0$ is a counting
bound, $Q$ is a finite set of states, $\delta: Q \times \Sigma
\rightarrow B^{+}(D_b \times Q)$ is a transition function, $q_0 \in Q$
is an initial state, and $\FF$ is an acceptance condition. A
\emph{run} of $A$ on an input $\Sigma$-labeled forest $\pairs{F,V}$ is
an $F \times Q$-labeled tree $\pairs{T_r,r}$. Intuitively, a node in
$T_r$ labeled by $(x, q)$ describes a copy of the automaton in state
$q$ that reads the node $x$ of $F$. Runs start in the initial state at
a root and satisfy the transition relation. Thus, a run
$\pairs{T_r,r}$ has to satisfy the following conditions:
\begin{enumerate}[(i)]

\item $r(\mathsf{root}(T_r)) = (c, q_0)$ for
some root $c$ of $F$ and

\item for all $y \in T_r$ with $r(y) = (x, q)$ and
  $\delta(q , V(x)) = \theta$, there is a (possibly empty) set $S
  \subseteq D_b \times Q$ such that $S$ satisfies $\theta$ and for
  all $(d, s) \in S$, the following hold:

\begin{enumerate}[$-$]

\item If $d \in \{-1, \varepsilon \} $, then $x \cdot d$ is defined and there
  is $j \in \Nat$ such that $y \cdot j \in T_r$ and $r(y \cdot j) = (x \cdot
  d, s)$;

\item If $d = \pairs{n}$, then there is a set $M \subseteq
  \mathsf{succ}(x)$ of cardinality  $n+1$ such that for all $z
  \in M$, there is $j \in \Nat$ such that $y \cdot j
  \in T_r$and $r(y \cdot j) = (z, s)$;

\item If $d = [n]$, then there is a set $M \subseteq \mathsf{succ}(x)$
  of cardinality $n$ such that for all $z \in \mathsf{succ}(x)
  \setminus M$, there is $j \in \Nat$ such that $y \cdot j \in T_r$
  and $r(y \cdot j) = (z, s)$;

 \item If $d = \pairs{\mathsf{root}}$, then for some root $c\in F$ and some $j \in \Nat$
   such that $y \cdot j \in T_r$, it holds that $r(y \cdot j) = (c, s)$;

 \item If $d = [\mathsf{root}]$, then for each root $c\in F$ there exists $j \in \Nat$
   such that $y \cdot j \in T_r$ and $r(y \cdot j) = (c, s)$.

\end{enumerate}
\end{enumerate}
Note that if $\theta = \mathsf{true}$, then $y$ does not need to have
successors.
Moreover, since no
set $S$ satisfies $\theta = \mathsf{false}$, there cannot be any run
that takes a transition with $\theta = \mathsf{false}$.

A run $\pairs{T_r,r}$ is \emph{accepting} if all its infinite paths
satisfy the acceptance condition. We consider here the \emph{parity
  acceptance condition}~\cite{Mos84,EJ91,Tho97}, where $\FF = \{\FF_1,
\FF_2,\ldots , \FF_k \}$ is such that $\FF_1 \subseteq \FF_2 \subseteq
\ldots \subseteq \FF_k = Q$. The number $k$ of sets in $\FF$ is called
the \emph{index} of the automaton. Given a run $\pairs{T_r,r}$ and an
infinite path $\pi \subseteq T_r$, let $\mathsf{Inf}(\pi) \subseteq Q$
be the set of states $q$ such that $r(y) \in F \times \{q\}$ for
infinitely many $y \in \pi$. A path $\pi$ \emph{satisfies} a parity
acceptance condition $\FF = \{ \FF_1, \FF_2,\ldots , \FF_k \}$ if the
minimal $i$ with $\mathsf{Inf}(\pi) \cap \FF_i \neq \emptyset$ is
even. An automaton \emph{accepts} a forest iff there exists an
accepting run of the automaton on the forest. We denote by
$\mathcal{L}(A)$ the set of all $\Sigma$-labeled forests that $A$
accepts.
The \emph{emptiness problem} for FEAs is to decide, given a FEA $A$,
whether $\mathcal{L}(A)=\emptyset$.

\subsection{Two-way graded alternating parity tree automata}

A \emph{two-way graded alternating parity tree automaton (2GAPT)} is a
FEA that accepts trees (instead of forests) and cannot jump to the
root of the input tree, i.e., it does not support transitions
$\pairs{\mathsf{root}}$ and $[\mathsf{root}]$. The emptiness problem
for 2GAPTs is thus a special case of the emptiness problem for FEAs.
In the following, we give a reduction of the emptiness problem for
FEAs to the emptiness problem for 2GAPTs. This allows us to derive an
upper bound for the former problem from the upper bound for the
latter that is established in Section~\ref{sect:emptiness}.

We show how to translate a FEA $A$ into a 2GAPT $A'$ such that
$\mathcal{L}(A')$ consists of the forests accepted by $A$, encoded as
trees. The encoding that we use is straightforward: the \emph{tree
  encoding} of a $\Sigma$-labeled forest $\pairs{F,V}$ is the $\Sigma
\uplus \{ \mathsf{root} \}$-labeled tree $\pairs{T,V'}$ obtained from
$\pairs{F,V}$ by adding a fresh root labeled with $\{ \mathsf{root}
\}$ whose children are the roots of $F$.
\begin{lem}\label{tree3}
  Let $A$ be a FEA running on $\Sigma$-labeled forests with $n$
  states, index $k$ and counting bound $b$. There exists a 2GAPT
  $A'$ that
  \begin{enumerate}[\em(1)]

  \item accepts exactly the tree encodings of forests accepted by
    $A$ and

  \item  has $\mathcal{O}(n)$ states,
    index $k$, and counting bound $b$.

  \end{enumerate}
\end{lem}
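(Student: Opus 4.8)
The plan is to let $A'$ simulate $A$ on the tree encoding $\langle T, V'\rangle$ of a forest $\langle F, V\rangle$, using the fresh root (the unique node labelled $\mathsf{root}$) as a ``hub'' through which every jump to the roots of $F$ is routed. First I would introduce a fresh initial state $q_0'$ and set $\delta'(q_0', \mathsf{root}) = (\pairs{0}, q_0)$ together with $\delta'(q_0', \sigma) = \mathsf{false}$ for all $\sigma \in \Sigma$. This forces a run of $A'$ to start at the root of $T$, to verify that it carries the label $\mathsf{root}$, and to descend into exactly one child, i.e.\ into some root of $F$, in the initial state $q_0$ of $A$. On the nodes of $F$ the states of $A$ are kept, and for $q \in Q$ and $\sigma \in \Sigma$ the transition $\delta'(q,\sigma)$ is obtained from $\delta(q,\sigma)$ by leaving the atoms $(\varepsilon,\cdot)$, $(-1,\cdot)$, $(\pairs{n},\cdot)$, $([n],\cdot)$ untouched and rewriting the two root-atoms as follows.

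The key device is a family of \emph{climbing states}. Since a 2GAPT moves only one step at a time, a jump to the roots of $F$ must be realised by walking up to the hub and then branching down into its children. For each $s \in Q$ I would add states $q^{\exists}_s$ and $q^{\forall}_s$ with, for all $\sigma \in \Sigma$, $\delta'(q^{\exists}_s,\sigma)=(-1,q^{\exists}_s)$ and $\delta'(q^{\exists}_s,\mathsf{root})=(\pairs{0},s)$, and symmetrically $\delta'(q^{\forall}_s,\sigma)=(-1,q^{\forall}_s)$ and $\delta'(q^{\forall}_s,\mathsf{root})=([0],s)$. An atom $(\pairs{\mathsf{root}},s)$ occurring in some $\delta(q,\sigma)$ is then replaced by $(-1,q^{\exists}_s)$, and $([\mathsf{root}],s)$ by $(-1,q^{\forall}_s)$. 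A copy in a climbing state thus ascends along the forced, single path to the hub, passing transparently through the roots of $F$ (which carry $\Sigma$-labels), and upon reading $\mathsf{root}$ sends $s$ to some child (for $\exists$) or to all children (for $\forall$), i.e.\ to some or all roots of $F$, matching the FEA semantics exactly.

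A pleasant point is that illegal upward moves are ruled out automatically. In $F$ the atom $(-1,\cdot)$ is undefined at a root, whereas in $T$ a root of $F$ does have the hub as predecessor; but since $\delta'(q,\mathsf{root})=\mathsf{false}$ for every $q \in Q$, any run of $A'$ that moved an $A$-state up from a root of $F$ to the hub would have to satisfy $\mathsf{false}$ there and so could not be completed. Hence an accepting run of $A'$ never places an $A$-state on the hub, so the back-translation of a run of $A'$ into a run of $A$ never invokes an undefined predecessor, while the forward-translation is unobstructed because an accepting run of $A$ never uses $(-1,\cdot)$ at a root of $F$ to begin with. To guarantee that $A'$ rejects trees that are not tree encodings I would also add a constant number of states performing a universal descent that checks that no node other than the hub is labelled $\mathsf{root}$.

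It remains to fix the acceptance condition and count resources. I would keep the priority of every $q \in Q$ as in $\FF$ and place all new states only in $\FF_k = Q$, giving them the maximal priority $k$ so that on any infinite path the least priority occurring infinitely often is contributed by $Q$. The justification rests on the observation that every climb is finite: from any node it reaches the hub in finitely many steps and immediately re-enters $Q$. Thus along any infinite path of the simulation part of a run of $A'$ the states from $Q$ recur infinitely often, and contracting the finite climbing segments projects such a path onto an infinite path of the corresponding run of $A$ with the same set of infinitely recurring $Q$-states; since the new states have the largest priority, the least priority seen infinitely often agrees with that of the projected $A$-path, so the two paths satisfy the parity condition simultaneously (the checking paths are governed separately by a fixed even priority, absorbable for $k \geq 2$). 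This yields index $k$, counting bound $b$ (no graded atoms beyond $\pairs{0}$ and $[0]$ are introduced), and $|Q| + 2|Q| + \mathcal{O}(1) = \mathcal{O}(n)$ states. The only genuinely delicate point in this otherwise routine construction is precisely this matching of the parity acceptance across the expansion of single root-jumps of $A$ into finite climbs of $A'$, together with the $\mathsf{false}$-transition argument that makes the simulation faithful in both directions.
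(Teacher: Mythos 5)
Your proposal is correct and is essentially the paper's own construction: the paper likewise introduces a fresh initial state that enforces the $\mathsf{root}$ label and sends $q_0$ to one child, a state $q_r$ performing the universal descent that rejects any further $\mathsf{root}$ label, and for each $s \in Q$ two climbing states $\mathsf{some}_s$ and $\mathsf{all}_s$ that walk up via $(-1,\cdot)$ until the hub and there dispatch $(\pairs{0},s)$ or $([0],s)$, with the original priorities kept on $Q$ and the new states absorbed so as not to disturb the parity condition. The only deviations are cosmetic (the paper rewrites root-atoms as $(\varepsilon,\mathsf{some}_s)$ instead of your $(-1,q^{\exists}_s)$, and leaves the $\mathsf{false}$-transition of $A$-states on the hub implicit where you state it explicitly), so your argument tracks the paper's proof in all essentials.
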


\proof Suppose $A=\pairs{\Sigma,b,Q,\delta,q_0,\FF}$. Define $A'$ as
$\pairs{\Sigma \uplus \{\mathsf{root}\},b,Q', \delta', q_0', \FF'}$,
where $Q'$ and $\delta'$ are defined as follows:
$$
\begin{array}{rcl}

Q' &=& Q\uplus\{q_0',q_r\}\uplus\{\mathsf{some}_q, \mathsf{all}_q \mid q\in Q\} \\[1mm]
\delta'(q_0',\mathsf{root}) &=& (\pairs{0},q_0) \wedge ([0],q_r)\\[1mm]
\delta'(q_0',\sigma)&=&\mathsf{false} \text{ for all } \sigma \neq \{\mathsf{root}\} \\[1mm]
\delta'(q_r,\mathsf{root}) &=&\mathsf{false} \\[1mm]
\delta'(q_r,\sigma) &=& ([0],q_r) \text{ for all } \sigma \neq \{\mathsf{root}\} \\[1mm]
\delta'(\mathsf{some}_q,\sigma) &=& \left \{
    \begin{array}{l@{\quad}p{10em}}
    (-1,\mathsf{some}_q) & if $\sigma\neq \mathsf{root}$ \\
    (\pairs{0},q) & \text{otherwise}
    \end{array}
\right . \\[4mm]
\delta'(\mathsf{all}_q,\sigma) &=& \left \{
    \begin{array}{l@{\quad}p{10em}}
    (-1,\mathsf{all}_q) & \text{if } $\sigma\neq \mathsf{root}$ \\
    ([0],q) &\text{otherwise}
    \end{array}
\right .\\[4mm]
\delta'(q,\sigma) &=& \mathsf{tran}(\delta(q,\sigma)) \text{ for all }
  q\in Q \text{ and } \sigma\in\Sigma
\end{array}
$$
Here, $\mathsf{tran}(\beta)$ replaces all atoms
$(\pairs{\mathsf{root}},q)$ in $\beta$ with
$(\varepsilon,\mathsf{some}_q)$, and all atoms $([\mathsf{root}],q)$
in $\beta$ with $(\varepsilon,\mathsf{all}_q)$. The acceptance
condition $\FF'$ is identical to $\FF=\{\FF_1,\dots,\FF_k\}$, except
that all $\FF_i$ are extended with $q_r$ and $\FF_k$ is extended
with $q_0$ and all states $\mathsf{some}_q$ and $\mathsf{all}_q$.
It is not hard to see that $A'$ accepts $\pairs{T,V}$ iff $A$ accepts
 the forest encoded by $\pairs{T,V}$.
\endproof
In Section~\ref{sect:emptiness}, we shall prove the following result.
\begin{thm}\label{emptiness}
The emptiness problem for a 2GAPT $A = \genA$ with
$n$ states and index $k$ can be solved
in time
$(b+2)^{\mathcal{O}(n^3 \cdot k^2 \cdot \log k \cdot \log b^2)}$.
\end{thm}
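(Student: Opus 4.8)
The plan is to reduce the emptiness problem for 2GAPTs to the emptiness problem for graded nondeterministic parity tree automata (GNPTs), whose emptiness is known to be decidable in the claimed complexity from \cite{KSV02}. The overall target bound $(b+2)^{\mathcal{O}(n^3 \cdot k^2 \cdot \log k \cdot \log b^2)}$ is exponential, so the construction must be careful to keep the GNPT's size and index polynomial in $n$, $k$, and $\log b$, since the GNPT emptiness test itself contributes the exponential factor.

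\medskip

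First I would address the two-way feature. A 2GAPT can move both up and down in the tree, whereas a GNPT is a one-way, nondeterministic device. The standard technique here, following \cite{Var98}, is to eliminate the upward moves using \emph{strategy trees} (or \emph{annotations}): one guesses, at each node $x$, a summary of which states the automaton can reach by walking up into the parent and the information that flows back down. Concretely, one annotates each node with a function recording, for each state $q$, whether a run fragment started at $x$ in state $q$ that makes an upward move can eventually return successfully, together with the relevant parity information (the minimal priority seen on the detour). This replaces the two-way alternating automaton by a one-way alternating automaton whose state space is enlarged by these annotations but only polynomially in $n$ and $k$.

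\medskip

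Second I would handle the grading: the atoms $\pairs{n}$ and $[n]$ let the automaton send copies to at least $n+1$ or all but $n$ successors without naming them, and $n$ can be as large as $b$ in binary. The crucial point is that a GNPT already has the native ability to count successors up to its counting bound, so the graded moves of the 2GAPT can be matched directly by the grading of the GNPT, keeping the counting bound at $b$ and avoiding the exponential blow-up that a naive unary encoding would incur (this is exactly the improvement the introduction advertises over \cite{CGL01}). The remaining task is the alternation-to-nondeterminism step: one applies the graded analogue of the Muller--Schupp/Miyano--Hayashi construction, guessing for each node the set of states the alternating automaton is in and verifying the transition formulas $\delta$ locally, with the parity condition lifted to the nondeterministic automaton. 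Combining the annotation for two-wayness with the alternation-removal and the grading yields a GNPT with $n^{\mathcal{O}(1)}$ states counted suitably, index polynomial in $k$, and counting bound $b$; feeding these parameters into the GNPT emptiness bound of \cite{KSV02} gives the stated time.

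\medskip

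The hard part will be the two-way elimination in the presence of \emph{both} grading and the parity condition simultaneously. The annotation that summarizes upward computations must correctly track the minimal priority encountered along every up-and-back detour so that infinite paths weaving up and down are assigned the right parity, and it must interact soundly with the graded downward moves, since a copy sent to ``all but $n$'' successors may itself later move upward. Proving that the guessed annotations can always be realized (completeness) and never certify a spurious accepting run (soundness) is where the bulk of the technical work lies, and getting the state count and index to stay polynomial—rather than exponential—through this combined construction is what makes the final bound come out as claimed.
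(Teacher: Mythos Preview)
Your overall strategy is the paper's: reduce 2GAPT emptiness to GNPT emptiness, eliminate two-way moves via annotations that record up-and-back detours together with the minimal priority seen, and let the GNPT's native counting absorb the graded transitions so the counting bound stays~$b$. Two concrete ingredients are missing, however, and the reference to Miyano--Hayashi is off (that is a breakpoint construction for co-B\"uchi, not general parity).

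First, the paper does not remove alternation by a single subset construction. It externalizes an alternating run into three explicit tree labelings---a \emph{strategy} (which transitions to fire at each node and state), a \emph{promise} (for each graded move $(\langle n\rangle,q')$ or $([n],q')$ taken at a node, which successors actually inherit $q'$), and the \emph{annotation} you describe---and builds the GNPT as a product of three checkers. The promise component, which you do not mention, is exactly the mechanism that makes graded downward moves interact soundly with the trace analysis; without it the annotation closure rules cannot be stated. Second, and more critically, the step you gloss as ``parity condition lifted to the nondeterministic automaton'' is where the real work lies: to check that \emph{every} downward trace through strategy, promise, and annotation satisfies parity, the paper constructs a nondeterministic parity \emph{word} automaton of size $\mathcal{O}(n^2k)$ that runs along a single branch and guesses one bad trace, then Safra-determinizes and complements it, and finally spreads the result to a \forrall tree automaton. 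This Safra step is what yields index $nk$ (not merely ``polynomial in $k$'') and the $\log k$ factor in the exponent; a direct Muller--Schupp-style lifting would not obviously achieve these bounds. Also note that the resulting GNPT has $2^{\mathcal{O}(kn^2\log k\log b^2)}$ states, not $n^{\mathcal{O}(1)}$---it is the \emph{logarithm} of the state count that is polynomial, which is what the GNPT emptiness bound $(b+2)^{\mathcal{O}(k\log n)}$ then exploits.
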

By Lemma~\ref{tree3}, we obtain the following corollary.
\begin{cor}
\label{cor:feaempty}
  The emptiness problem for a FEA $A = \genA$ with %
  $n$ states and index $k$ can be solved
  in time
$(b+2)^{\mathcal{O}(n^3 \cdot k^2 \cdot \log k \cdot \log b^2)}$.
\end{cor}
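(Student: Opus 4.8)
The plan is to derive the corollary directly by chaining the polynomial reduction of Lemma~\ref{tree3} with the emptiness bound of Theorem~\ref{emptiness}. The key observation is that emptiness is preserved under the tree-encoding reduction, so the complexity of deciding emptiness for a FEA is governed entirely by the complexity of deciding emptiness for the 2GAPT produced from it, up to the blow-up in parameters recorded in Lemma~\ref{tree3}.

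First I would take an arbitrary FEA $A = \genA$ with $n$ states, index $k$, and counting bound $b$, and apply Lemma~\ref{tree3} to obtain a 2GAPT $A'$ with $\mathcal{O}(n)$ states, index $k$, and counting bound $b$ such that $A'$ accepts exactly the tree encodings of the forests accepted by $A$. Next I would verify the equivalence $\mathcal{L}(A) = \emptyset$ iff $\mathcal{L}(A') = \emptyset$. The ``only if'' direction is immediate, since by property~(1) of Lemma~\ref{tree3} every tree in $\mathcal{L}(A')$ is the tree encoding of some forest in $\mathcal{L}(A)$. For the ``if'' direction, I would note that the tree encoding is an injection from forests to $\Sigma \uplus \{\mathsf{root}\}$-labeled trees, since it merely prepends a fresh $\{\mathsf{root}\}$-labeled root whose children are the roots of $F$; hence any forest accepted by $A$ gives rise to a tree accepted by $A'$. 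Thus $A$ accepts some forest precisely when $A'$ accepts some tree, and the two emptiness problems coincide.

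It then remains to bound the cost of deciding $\mathcal{L}(A') = \emptyset$ via Theorem~\ref{emptiness}. Writing the state count of $A'$ as $c \cdot n$ for the constant $c$ furnished by Lemma~\ref{tree3}, the theorem yields running time $(b+2)^{\mathcal{O}((cn)^3 \cdot k^2 \cdot \log k \cdot \log b^2)}$; since $(cn)^3 = c^3 n^3$ and the constant factor $c^3$ is absorbed into the $\mathcal{O}(\cdot)$ in the exponent, this is exactly $(b+2)^{\mathcal{O}(n^3 \cdot k^2 \cdot \log k \cdot \log b^2)}$. Because the construction of $A'$ from $A$ in Lemma~\ref{tree3} introduces only $\mathcal{O}(n)$ new states and transitions and leaves the numeric bound $b$ unchanged, it is carried out in time polynomial in the size of $A$, so this preprocessing is dominated by the emptiness test and does not affect the stated bound.

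I do not expect a genuine obstacle here: all the technical weight resides in Theorem~\ref{emptiness}, whose proof is deferred to Section~\ref{sect:emptiness}, and in the correctness of the construction underlying Lemma~\ref{tree3}. The only point warranting explicit care is that the exponent in Theorem~\ref{emptiness} is cubic in the number of states, so one must confirm that the linear blow-up $n \mapsto \mathcal{O}(n)$ incurred by the reduction still falls within the $\mathcal{O}(\cdot)$ of the target bound --- which it does, exactly as computed above.
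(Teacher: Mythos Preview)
Your proposal is correct and follows exactly the approach the paper intends: the corollary is stated immediately after Theorem~\ref{emptiness} with the remark ``By Lemma~\ref{tree3}, we obtain the following corollary,'' and your argument spells out precisely this chaining --- reduce the FEA to a 2GAPT with $\mathcal{O}(n)$ states, same index and counting bound, via Lemma~\ref{tree3}, observe that emptiness is preserved, and invoke Theorem~\ref{emptiness}. The only minor simplification is that clause~(1) of Lemma~\ref{tree3} already gives $\mathcal{L}(A') = \{\text{tree encoding of } F : F \in \mathcal{L}(A)\}$, so both directions of the emptiness equivalence are immediate without separately arguing injectivity.
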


\section{{\sc ExpTime} upper bounds for enriched $\mu$-calculi}
\label{sec:upper-bounds}

We use Theorem~\ref{emptiness} and Corollary~\ref{cor:feaempty} to
establish {\sc ExpTime} upper bounds for satisfiability in the full
graded $\mu$-calculus and the hybrid graded $\mu$-calculus.

\subsection{Full graded $\mu$-calculus}

We give a polynomial translation of formulas $\varphi$ of the full graded
$\mu$-calculus into a 2GAPT $A_\varphi$ that accepts the tree models of
$\varphi$. We can thus decide satisfiability of $\varphi$ by checking
non-emptiness of $\mathcal{L}(A_\varphi)$. There is a minor technical
difficulty to be overcome: we use Kripke structures with labeled edges, while
the trees accepted by 2GAPTs do not. This problem can be dealt with by moving
the label from each edge to the target node of the edge. For this purpose, we
introduce a new propositional symbol $p_\alpha$ for each program $\alpha$. For
a formula $\varphi$, let $\Gamma(\varphi)$ denote the set of all atomic
propositions and all propositions $p_\alpha$ such that $\aa$ is an (atomic or
inverse) program occurring in $\varphi$. The \emph{encoding} of a tree
structure $K=\langle W,R,L \rangle$ is the $2^{\Gamma(\varphi)}$-labeled tree
$\langle W,L^* \rangle$ such that
$$
L^*(w)= \{ p \in \mathsf{Prop} %
\mid w \in L(p) \} \cup \{ p_\alpha \mid \exists (v,w) \in R(\alpha)
\text{ with $w$ $\aa$-successor of $v$ in $W$}\}.
$$

For a sentence $\varphi$, we use $|\varphi|$ to denote the \emph{length} of
$\varphi$ with numbers inside graded modalities coded in binary. Formally,
$|\varphi|$ is defined by induction on the structure of $\varphi$ in a standard
way, where in particular $|\pairs{n,\aa}\psi|= |[n,\aa]\psi| = \lceil \log \ n
\rceil + 1 + |\psi|$. We say that a formula $\varphi$ \emph{counts} up to $b$
if the maximal integer in \emph{atleast} and \emph{allbut} formulas used in
$\varphi$ is $b-1$.
\begin{thm}\label{full graded to automata}
  Given a sentence $\varphi$ of the full graded $\mu$-calculus that
  counts up to $b$, we can
  construct a 2GAPT $A_\varphi$ such that $A_\varphi$
  \begin{enumerate}[\em(1)]

  \item accepts exactly the encodings of tree models of
    $\varphi$, 

  \item has $\mathcal{O}(|\varphi|)$ states, index
    $\mathcal{O}(|\varphi|)$, and counting bound $b$.

  \end{enumerate}
  The construction can be done in time $\mathcal{O}(|\varphi|)$.
\end{thm}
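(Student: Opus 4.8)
The plan is to build $A_\varphi$ as an alternating automaton whose states are essentially the Fischer--Ladner closure $\mathsf{cl}(\varphi)$ together with a handful of auxiliary states, following the standard translation from the $\mu$-calculus to parity automata but adapted to the two-way, graded setting and to the edge-label encoding $L^*$. The initial state is $\varphi$, and $A_\varphi$ runs on $2^{\Gamma(\varphi)}$-labeled trees. Correctness will rest on Theorem~\ref{tree1}: since the full graded $\mu$-calculus has the tree model property, it suffices that $A_\varphi$ accept exactly the encodings of tree models, and this is checked by the usual correspondence between accepting runs and satisfying evaluations of fixpoints.

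First I would fix the easy transitions. For $p\in\mathsf{Prop}$ we set $\delta(p,\sigma)=\mathsf{true}$ if $p\in\sigma$ and $\mathsf{false}$ otherwise, dually for $\neg p$; Boolean connectives become $(\varepsilon,\psi_1)\wedge(\varepsilon,\psi_2)$ and $(\varepsilon,\psi_1)\vee(\varepsilon,\psi_2)$; and each fixpoint formula $\lambda y.\psi(y)$ transitions via $(\varepsilon,\psi(\lambda y.\psi(y)))$ to its one-step unfolding, which again lies in $\mathsf{cl}(\varphi)$. These contribute $\mathcal{O}(|\varphi|)$ states.

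The crux is the graded modalities, where counting interacts with inverse programs and must be reconciled with the encoding. Because edge labels are pushed onto target nodes, at a node $w$ the $a$-successors split into two kinds: children $w'$ carrying $p_a$, and the unique parent, which is an $a$-successor of $w$ exactly when $p_{a^-}\in L^*(w)$. I would therefore translate $\pairs{n,a}\psi$ by a case split on the label $\sigma$ at $w$: if $p_{a^-}\notin\sigma$ the transition is $(\pairs{n},q_{a,\psi})$, where the auxiliary state $q_{a,\psi}$ checks $p_a\wedge\psi$ at each successor it is sent to; if $p_{a^-}\in\sigma$ we additionally allow the parent to count as one of the required successors, adding the disjunct $(-1,q_\psi)$, conjoined with $(\pairs{n-1},q_{a,\psi})$ when $n\ge1$, where $q_\psi$ evaluates $\psi$ at the parent. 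Allbut formulas $[n,a]\psi$ are handled dually using $[n]$ and $[n-1]$, and inverse-program modalities $\pairs{n,a^-}\psi$, $[n,a^-]\psi$ by exchanging the roles of $p_a$ and $p_{a^-}$. Each modality needs only a constant number of auxiliary states and uses $\pairs{n}$/$[n]$ with $n\le b-1$, so the counting bound is $b$ and the total number of states stays $\mathcal{O}(|\varphi|)$.

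Finally, I would set the parity condition by the alternation depth of the fixpoints, assigning odd priorities to $\mu$-formulas and even priorities to $\nu$-formulas so that an infinite path regenerating a least fixpoint is rejected; this yields index $\mathcal{O}(|\varphi|)$, and since the numbers are binary-coded the whole construction is carried out in time $\mathcal{O}(|\varphi|)$. Correctness is then established in the two standard directions: from a tree model and a well-founded evaluation of its fixpoints one reads off an accepting run, and from an accepting run one recovers a satisfying labeling, the parity condition guaranteeing in both cases that no least fixpoint is unfolded infinitely. The main obstacle I anticipate is precisely the graded case: getting the counts right when an $\aa$-successor can be the single parent as well as several children, so that the disjunctive $-1$/$\pairs{n-1}$ transitions faithfully capture $\pairs{n,\aa}\psi$ and its dual without either double-counting the parent or shifting the threshold.
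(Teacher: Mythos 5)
Your proposal is correct and takes essentially the same route as the paper: the same closure-based state set with initial state $\varphi$, the same $\varepsilon$-transitions for Booleans and fixpoint unfoldings, the same treatment of $\pairs{n,\aa}\psi$ via the disjunction between ``count $n+1$ children labeled $p_a$'' and ``count the parent via $-1$ plus $n$ children,'' and the same alternation-level parity condition. The only cosmetic difference is that you branch on whether $p_{a^-}\in\sigma$ inside the transition function, whereas the paper expresses the same test uniformly with a conjunct $(\varepsilon,p_{a^-})$; these formulations are equivalent.
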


\proof
The automaton $A_\varphi$ verifies that $\varphi$ holds at the root of
the encoded tree.  To define the set of states, we use the
Fischer-Ladner closure $\mathsf{cl}(\varphi)$ of $\varphi$. It is
defined analogously to the Fischer-Ladner closure $\mathsf{cl}(\cdot)$
for the hybrid graded $\mu$-calculus, as given in
Section~\ref{sect:tmpfmp}.
We define $A_\varphi$ as $\pairs{2^{\Gamma(\varphi)}, b,
  \mathsf{cl}(\varphi),\delta,\varphi,\FF}$, where
the transition function $\delta$ is defined by setting, for all $\sigma \in
2^{\Gamma(\varphi)}$,
$$
\begin{array}{rcl}
  \delta(p,\sigma)&=& (p \in \sigma) \\
  \delta(\neg p,\sigma)&=& (p \not \in \sigma)\\
  \delta(\psi_1 \wedge \psi_2,\sigma)&=& (\varepsilon,\psi_1) \wedge (\varepsilon,\psi_2)\\
 \delta(\psi_1 \vee \psi_2,\sigma)&=& (\varepsilon,\psi_1) \vee (\varepsilon,\psi_2)\\
\delta(\lambda y. \psi(y),\sigma)&=& (\varepsilon,\psi(\lambda y.
\psi(y))) \\
\delta(\pairs{n,a}\psi,\sigma)&=&((-1,\psi) \wedge
(\varepsilon, p_{a^-}) \wedge (\pairs{n-1},\psi \wedge p_a)) \vee
(\pairs{n},\psi \wedge p_a)\\
\delta(\pairs{n,a^-}\psi,\sigma)&=&((-1,\psi) \wedge
(\varepsilon, p_a) \wedge (\pairs{n-1},\psi \wedge p_{a^-})) \vee
(\pairs{n},\psi \wedge p_{a^-})\\
\delta([n,a]\psi,\sigma)&=&((-1, \psi) \wedge
(\varepsilon, p_{a^-}) \wedge ([n],\psi \wedge p_a)) \vee
([n-1],\psi \wedge p_a)\\
\delta([n,a^-]\psi,\sigma)&=&((-1, \psi) \wedge
(\varepsilon, p_a) \wedge ([n],\psi \wedge p_{a^-})) \vee
([n-1],\psi \wedge p_{a^-})
\end{array}
$$
In case $n=0$, the conjuncts (resp.\ disjuncts) involving ``$n-1$''
are simply dropped in the last two lines.

The acceptance condition of $A_\varphi$ is defined in the standard way as
follows (see e.g.\ \cite{KVW00}). For a fixpoint formula $\psi \in
\mathsf{cl}(\varphi)$, the alternation level of $\psi$ is the number of
alternating fixpoint formulas one has to ``wrap $\psi$ with'' to reach a
sub-\emph{sentence} of $\varphi$.  Formally, let $\psi= \lambda y.  \psi'(y)$.
The \emph{alternation level} of $\psi$ in $\varphi$, denoted
$\mathsf{al}_\varphi(\psi)$ is defined as follows (\cite{BC96}): if $\psi$ is a
sentence, then $\mathsf{al}_\varphi(\psi)=1$.  Otherwise, let $\xi= \lambda' z.
\psi''(z)$ be the innermost $\mu$ or $\nu$ subformula of $\varphi$ that has
$\psi$ as a strict subformula.  Then, if $z$ is free in $\psi$ and $\lambda'
\neq \lambda$, we have $\mathsf{al}_\varphi(\psi) =
\mathsf{al}_\varphi(\xi)+1$; otherwise, $\mathsf{al}_\varphi(\psi) =
\mathsf{al}_\varphi(\xi)$.

Let $d$ be the maximum alternation level of (fixpoint) subformulas of
$\varphi$. Denote by $G_i$ the set of all $\nu$-formulas in
$\mathsf{cl}(\varphi)$ of alternation level $i$ and by $B_i$ the set
of all $\mu$-formulas in $\mathsf{cl}(\varphi)$ of alternation level
less than or equal to $i$. Now, define $\FF:=
\{\FF_0,\FF_1,\dots,\FF_{2d},Q\}$ with $\FF_0=\emptyset$ and for every
$1\leq i \leq d$, $\FF_{2i-1}= \FF_{2i-2} \cup B_i$ and $\FF_{2i}=
\FF_{2i-1} \cup G_i$.
Let $\pi$ be a path. By definition of $\FF$, the minimal $i$ with
$\mathsf{Inf}(\pi) \cap F_i \neq \emptyset$ determines the alternation
level and type $\lambda$ of the outermost fixpoint formula $\lambda y
. \psi(y)$ that was visited infinitely often on $\pi$.
The acceptance condition makes sure that this formula is a
$\nu$-formula. In other words, every $\mu$-formula that is visited
infinitely often on $\pi$ has a super-formula that (i)~is a
$\nu$-formula and (ii)~is also visited infinitely
often.%
\endproof
Let $\varphi$ be a sentence of the full graded $\mu$-calculus with
$\ell$ at-least subformulas.  By Theorems~\ref{tree1},
\ref{emptiness}, and \ref{full graded to automata}, the satisfiability
of $\varphi$ can be checked in time bounded by $2^{p(|\varphi|)}$
where $p(|\varphi|)$ is a polynomial (note that, in
Theorem~\ref{emptiness}, $n$, $k$, $\log \ell$, and $\log b$ are all
in $\mathcal{O}(|\varphi|)$).  This yields the desired {\sc ExpTime}
upper bound.  The lower bound is due to the fact that the
$\mu$-calculus is \exp-hard \cite{FL79}.
\begin{thm}
  The satisfiability problem of the full graded $\mu$-calculus is
  \exp-complete even if the numbers in the graded modalities are coded
  in binary.
\end{thm}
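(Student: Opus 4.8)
The plan is to combine the three results already established---the tree model property (Theorem~\ref{tree1}), the automaton construction (Theorem~\ref{full graded to automata}), and the emptiness bound for 2GAPTs (Theorem~\ref{emptiness})---into a decision procedure, and then to obtain a matching lower bound by inheritance from the plain $\mu$-calculus.

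For the upper bound, I would first apply Theorem~\ref{tree1} to reduce satisfiability of a sentence $\varphi$ to the existence of a tree model. Letting $b$ be the bound up to which $\varphi$ counts, Theorem~\ref{full graded to automata} then yields a 2GAPT $A_\varphi$ that accepts exactly the encodings of tree models of $\varphi$ and has $\mathcal{O}(|\varphi|)$ states, index $\mathcal{O}(|\varphi|)$, and counting bound $b$. Hence $\varphi$ is satisfiable if and only if $\mathcal{L}(A_\varphi) \neq \emptyset$, so it suffices to test emptiness of $A_\varphi$.

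The only delicate point is the arithmetic in the emptiness bound of Theorem~\ref{emptiness}. Applying that bound to $A_\varphi$ gives a running time of $(b+2)^{\mathcal{O}(n^3 \cdot k^2 \cdot \log k \cdot \log b^2)}$ with $n,k = \mathcal{O}(|\varphi|)$. The decisive observation is that, because the integers inside graded modalities are coded in binary, we have $\log b = \mathcal{O}(|\varphi|)$ even though $b$ itself may be exponential in $|\varphi|$. Taking logarithms, the logarithm of the running time equals $\mathcal{O}(n^3 \cdot k^2 \cdot \log k \cdot \log b^2) \cdot \log(b+2)$; substituting $n,k = \mathcal{O}(|\varphi|)$ and $\log b = \mathcal{O}(|\varphi|)$ makes this quantity a polynomial $p(|\varphi|)$, so the procedure runs in time $2^{p(|\varphi|)}$, the desired \exp\ upper bound. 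The matching lower bound is immediate, since every plain $\mu$-calculus sentence is also a full graded $\mu$-calculus sentence and satisfiability of the $\mu$-calculus is already \exp-hard~\cite{FL79}.

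I expect no genuine obstacle at this stage: the technically hard component, namely the 2GAPT emptiness bound of Theorem~\ref{emptiness}, is proved separately in Section~\ref{sect:emptiness}. Here the work is confined to chaining the reductions and checking that binary coding keeps $\log b$---and therefore the exponent---polynomial in $|\varphi|$.
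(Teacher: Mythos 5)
Your proposal is correct and follows exactly the paper's own argument: the paper likewise chains Theorem~\ref{tree1}, Theorem~\ref{full graded to automata}, and Theorem~\ref{emptiness}, noting that $n$, $k$, and $\log b$ are all in $\mathcal{O}(|\varphi|)$ (the latter precisely because of binary coding), and inherits the lower bound from \cite{FL79}. The observation you single out as the ``delicate point''---that binary coding keeps $\log b$, and hence the exponent, polynomial in $|\varphi|$---is the same parenthetical remark the paper makes, so there is nothing to add.
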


\subsection{Hybrid graded $\mu$-calculus}\label{subsec:Hybrid graded calculus}

We reduce satisfiability in the hybrid graded $\mu$-calculus to the
emptiness problem of FEAs. Compared to the reduction presented in the
previous section, two additional difficulties have to be addressed.

First, FEAs accept forests while the hybrid $\mu$-calculus has only a
\emph{quasi}-forest model property. This problem can be solved by
introducing in node labels new propositional symbols $\uparrow^a_o$
which do not occur in the input formula and represent an edge labeled
with the atomic program $a$ from the current node to the (unique) root
node labeled by nominal $o$. Let $\Theta(\varphi)$ denote the set of
all atomic propositions and nominals occurring in $\varphi$ and all
propositions $p_a$ and $\uparrow^a_o$ such that the atomic program $a$
and the nominal $o$ occur in $\varphi$.  Analogously to encodings of
trees in the previous section, the \emph{encoding} of a directed
quasi-forest structure $K=\langle W,R,L \rangle$ is the
$2^{\Theta(\varphi)}$-labeled forest $\langle W,L^* \rangle$ such that
$$
\begin{array}{rcl}
L^*(w) &=& \{ p \in \mathsf{Prop} \cup \mathsf{Nom} \mid w \in L(p) \}
\; \cup \\[1mm]
&& \{ p_a \mid \exists (v,w) \in R(a) \text{ with $w$
  successor of $v$ in $W$}\} \; \cup \\[1mm]
&&\{ \uparrow^a_o \mid \exists (w,v) \in R(a) \text{ with } L(o)=\{v\} \}.
\end{array}
$$

Second, we have to take care of the interaction between graded
modalities and the implicit edges encoded via propositions
$\uparrow^a_o$. To this end, we fix some information about the
structures accepted by FEAs already before constructing the FEA,
namely (i)~the formulas from the Fischer-Ladner closure that are
satisfied by each nominal and (ii)~the nominals that are interpreted
as the same state.
This information is provided by a so-called guess.  To introduce
guesses formally, we need to extend the Fischer-Ladner closure
$\mathsf{cl}(\varphi)$ for a formula $\varphi$ of the hybrid graded
$\mu$-calculus as follows:
$\mathsf{cl}(\varphi)$ has to satisfy the closure conditions given for the
hybrid graded $\mu$-calculus in Section~\ref{sect:tmpfmp} and, additionally,
the following:
\begin{enumerate}[$\bullet$]

\item if $\psi \in \mathsf{cl}(\varphi)$, then $\neg \psi \in
  \mathsf{cl}(\varphi)$, where $\neg \psi$ denotes the formula
  obtained from $\psi$ by dualizing all operators 
  and
  replacing every literal (i.e., atomic proposition, nominal, or
  negation thereof) with its negation.

\end{enumerate}
Let $\varphi$ be a formula with nominals $O=\{o_1,\dots,o_k\}$.  A
\emph{guess} for $\varphi$ is a pair $(t,{\sim})$ where $t$ assigns a
subset $t(o) \subseteq \mathsf{cl}(\varphi)$ to each $o \in O$ and $\sim$
is an equivalence relation on $O$ such that the following conditions
are satisfied, for all $o,o' \in O$:
\begin{enumerate}[(i)]

\item $\psi \in t(o)$ or $\neg \psi \in t(o)$ for all
  formulas $\psi \in \mathsf{cl}(\varphi)$;

\item $o \in t(o)$;

\item $o \sim o'$ implies $t(o)=t(o')$;

\item $o \not\sim o'$ implies $\neg o \in t(o')$.

\end{enumerate}
The intuition of a guess is best understood by considering the
following notion of compatibility.  A directed quasi-forest structure
$K=(W,R,L)$ is \emph{compatible} with a guess $G=(t,{\sim})$ if the following
conditions are satisfied, for all $o,o' \in O$:
\begin{enumerate}[$\bullet$]

\item $L(o)=\{ w \}$ implies that $\{ \psi \in \mathsf{cl}(\varphi) \mid
  K,w \models \psi \} = t(o)$;

\item $L(o)=L(o')$ iff $o \sim o'$.

\end{enumerate}
We construct a separate FEA $A_{\varphi,G}$ for each guess $G$ for
$\varphi$ such that $\varphi$ is satisfiable iff
$\mathcal{L}(A_{\varphi,G})$ is non-empty for some guess $G$. Since
the number of guesses is exponential in the length of $\varphi$,
we get an {\sc ExpTime} decision procedure by
constructing all of the FEAs and checking whether at least one of
them accepts a non-empty language.
\begin{thm}\label{hybrid graded to automata}
  Given a sentence $\varphi$ of the hybrid graded $\mu$-calculus that
  counts up to $b$
  and a guess $G$ %
  for $\varphi$, we can construct a FEA $A_{\varphi, G}$ such that
  \begin{enumerate}[\em(1)]

  \item if $\pairs{F,V}$ is the encoding of a directed quasi-forest
    model of $\varphi$ compatible with $G$, then $\pairs{F,V} \in
    \mathcal{L}(A_{\varphi, G})$,

  \item if $\mathcal{L}(A_{\varphi, G}) \neq \emptyset$, then there is an
    encoding $\pairs{F,V}$ of a directed quasi-forest model of
    $\varphi$ compatible with $G$ such that $\pairs{F,V} \in
    \mathcal{L}(A_{\varphi, G})$, and

  \item $A_{\varphi, G}$ has $\mathcal{O}(|\varphi|^2)$ states, index
    $\mathcal{O}(|\varphi|)$, and counting bound $b$.

  \end{enumerate}
  The construction can be done in time $\mathcal{O}(|\varphi|^2)$.
\end{thm}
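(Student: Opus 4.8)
The plan is to adapt the construction of Theorem~\ref{full graded to automata} from 2GAPTs to FEAs, replacing the handling of inverse programs by a treatment of nominals that exploits the jump-to-root feature of FEAs together with the information frozen into the guess $G=(t,{\sim})$. The states of $A_{\varphi,G}$ will be built from the (negation-closed) closure $\mathsf{cl}(\varphi)$, augmented with a handful of auxiliary states: one family to navigate to and inspect the roots (so that the guess can be verified), and one family of ``markers'' realising the targets $\psi\wedge p_a$ and $\psi\vee\neg p_a$ that occur in the graded transitions. For the Boolean connectives, the fixpoint operators and the (possibly negated) atomic propositions, $\delta$ is literally the one from Theorem~\ref{full graded to automata}, sending copies with direction $\varepsilon$ and testing membership in the current label $\sigma\subseteq\Theta(\varphi)$. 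The genuinely new work is concentrated in two places: the transitions for the graded modalities, which must reconcile the explicit tree-successors with the \emph{implicit} successors recorded by the $\uparrow^a_o$-propositions, and the part of the automaton that verifies that the input forest really is the encoding of a directed quasi-forest model \emph{compatible with} $G$.

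The crux is the graded-modality transition, and this is where I expect the main difficulty. Fix a node $w$ with label $\sigma$ and consider $\langle n,a\rangle\psi\in\mathsf{cl}(\varphi)$. In the encoded quasi-forest the $a$-successors of $w$ fall into two disjoint groups: the tree-children $w\cdot c$ carrying $p_a$, and the roots named by some nominal $o$ with $\uparrow^a_o\in\sigma$ (recall that, after encoding, edges into roots survive only as $\uparrow$-flags, so no root is ever a $p_a$-child and the two groups cannot overlap). Because $G$ is fixed and the model is required to be compatible with it, the automaton already ``knows'' which of these root-successors satisfy $\psi$: those $o$ with $\psi\in t(o)$; moreover, by condition~(iii) on guesses together with the encoding, $\sim$-equivalent nominals carry the same $\uparrow^a_o$-flag and the same membership $\psi\in t(o)$, so they denote one and the same successor. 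I therefore let $m$ be the number of $\sim$-classes $[o]$ with $\uparrow^a_o\in\sigma$ and $\psi\in t(o)$, and set $\delta(\langle n,a\rangle\psi,\sigma)=\mathsf{true}$ if $m>n$ and $\delta(\langle n,a\rangle\psi,\sigma)=(\pairs{n-m},\psi\wedge p_a)$ otherwise, so that the remaining $n+1-m$ witnesses are demanded among the genuine tree-children. Dually, for $[n,a]\psi$ I count the \emph{bad} root-successors $m'=|\{[o]:\uparrow^a_o\in\sigma,\ \psi\notin t(o)\}|$, put $\delta([n,a]\psi,\sigma)=\mathsf{false}$ when $m'>n$, and otherwise demand $([n-m'],\psi\vee\neg p_a)$, the disjunct $\neg p_a$ making the non-$a$-children vacuously good so that the allbut-count ranges only over the $p_a$-children. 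The delicate points are exactly the disjointness of the two successor-groups and the $\sim$-quotient, which together guarantee that no successor is counted twice and that the threshold arithmetic is sound; verifying that $\psi$ actually holds at the counted roots is \emph{not} done here but deferred to the guess-verification part, which is what keeps the threshold a number depending only on $\sigma$ and $G$.

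For the verification, I make the automaton spawn, from $q_0$ at the start root, a copy that checks $\varphi$ at that root (designating it as the $\varphi$-root) and, via a $[\mathsf{root}]$-transition, a copy at every root that inspects the root's label: reading a root whose nominal-labels form the $\sim$-class $[o]$, it checks that the label is consistent with $G$ (the correct nominals and $\uparrow$-flags, and no $p_a$) and then verifies, in the ordinary $\mathsf{cl}(\varphi)$-states, every $\psi\in t(o)$; since $t(o)$ is complete and $\mathsf{cl}(\varphi)$ is closed under negation, this pins the root's theory down to $t(o)$. The acceptance condition $\FF'$ is defined from the alternation levels exactly as in Theorem~\ref{full graded to automata}, with the finitely-active navigation and marker states placed in the top set $Q'$ of $\FF'$ so that they cannot corrupt the parity; along every infinite path the outermost fixpoint seen infinitely often is then a $\nu$-formula, which is precisely the automaton-side rendering of well-foundedness.

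It remains to argue correctness and resources. For part~(1), a $G$-compatible quasi-forest model $K$ with encoding $\pairs{F,V}$ is turned into an accepting run by following the satisfaction of subformulas in $K$: the graded choices are realised by the witnessing successors of $K$ (the implicit ones accounted for by $t$, the explicit ones by the tree-children), the verification copies succeed because $K$ is compatible with $G$, and the parity condition holds because $K$ is a genuine model, so no least fixpoint is regenerated forever. For part~(2), an accepting run on some $\pairs{F,V}$ is read back into a directed quasi-forest structure $K$ (edges from $p_a$ and $\uparrow^a_o$), a mapping $\pi$ recording the $\mathsf{cl}(\varphi)$-states that the run asserts at each node, and a choice function $\mathsf{ch}$ read off from the graded and disjunctive choices; the run conditions make $\pairs{K,\pi,\mathsf{ch}}$ an adorned pre-model, the parity condition makes it well-founded, and the verification part makes $K$ compatible with $G$, so item~(2) of Lemma~\ref{lem:modtopre} yields that $K$ is a model of $\varphi$ with encoding $\pairs{F,V}$. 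Finally, $\mathsf{cl}(\varphi)$ has $\mathcal{O}(|\varphi|)$ elements and the navigation layer contributes a further factor $\mathcal{O}(|\varphi|)$ through its dependence on the up to $k\le|\varphi|$ nominals, giving $\mathcal{O}(|\varphi|^2)$ states; the index is inherited from the alternation depth as $\mathcal{O}(|\varphi|)$, the counting bound is unchanged at $b$, and since each transition is computed by a simple count over $\sigma$ and $G$ the whole construction runs in time $\mathcal{O}(|\varphi|^2)$.
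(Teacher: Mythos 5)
Your construction coincides with the paper's in almost every respect (state set built from the negation-closed $\mathsf{cl}(\varphi)$, a $q_0$ that spawns $\varphi$ at some root and guess-verification copies at all roots, threshold arithmetic over $\sim$-classes of nominals with $\uparrow^a_o$-flags, the parity condition from alternation levels), but you deviate in one place, and the deviation is fatal. In the graded-modality transitions you subtract the count $m$ of implicit nominal successors and then \emph{omit any jump to those roots}, saying explicitly that verifying $\psi$ at the counted roots ``is deferred to the guess-verification part.'' The paper's transitions contain the additional conjunct $\bigwedge_{o\in\mathsf{nom}^a_\psi(\sigma)}([\mathsf{root}],\neg o\lor\psi)$, and this is not a redundant double-check: without it, runs of the automaton are missing exactly those paths that trace the evaluation of a least fixpoint through an implicit back-edge, so the parity condition can no longer enforce well-foundedness of such traces. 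This is precisely the point that the paper says was missed in \cite{SV01}. Your argument for item~(2) breaks at the sentence ``the parity condition makes it well-founded'': a regeneration of a $\mu$-formula from a node $w$ through a back-edge to a nominal root corresponds to \emph{no} path of your run --- the copy evaluating $\pairs{n,a}\psi$ at $w$ simply terminates after the threshold comparison, and the guess-verification copies start fresh paths at the roots, disconnected from it. Concretely, take $\varphi=o\wedge\mu y.\pairs{0,a}y$, which is unsatisfiable (the least fixpoint of ``has an $a$-successor in $y$'' is empty), let $t(o)$ consist of $o$, $\varphi$, $\mu y.\pairs{0,a}y$ and its unfolding, and feed your automaton the one-node forest whose single root is labeled $\{o,\uparrow^a_o\}$ (the encoding of a state with an $a$-self-loop). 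Every copy that reaches the state $\pairs{0,a}\mu y.\pairs{0,a}y$ reads $m=1>0=n$ and takes your transition $\mathsf{true}$, and the $\mathsf{ini}$-copies verify the members of $t(o)$ along finite paths for the same reason; hence every path of the run is finite, the run is accepting, and $\mathcal{L}(A_{\varphi,G})\neq\emptyset$, contradicting item~(2). With the paper's extra conjunct the copy at $\pairs{0,a}\mu y.\pairs{0,a}y$ must send $\neg o\lor\mu y.\pairs{0,a}y$ back to the root, producing an infinite path on which that $\mu$-formula is the outermost formula visited infinitely often, and the parity condition rejects --- so the paper's automaton correctly has empty language for every guess.

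A second, smaller omission: in item~(2) you read a quasi-forest structure $K$ directly off an arbitrary accepted forest, but the automaton does not (and cannot) guarantee \emph{nominal uniqueness} --- an accepted forest may contain several roots labeled by the same nominal, in which case it is not the encoding of any Kripke structure (the interpretation of a nominal must be a singleton). The paper closes this hole with an explicit pruning argument: keep only those trees of $F$ that serve as witnesses for the existential $\pairs{\mathsf{root}}$-transitions of the first rule, restrict the run accordingly, and check that the restricted run is still a run (all other root-transitions are universal) and still accepting (its paths are paths of the original run). You would need this step, or something equivalent, before your read-back of $K$, $\pi$, and $\mathsf{ch}$ makes sense. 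Incidentally, your allbut transition $([n-m'],\psi\vee\neg p_a)$, which lets non-$a$-children count as vacuously good, is a reasonable variant of the paper's $([n-m'],\psi\wedge p_a)$; that difference is not where the problem lies.
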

\proof Let $\varphi$ be a formula of the hybrid graded $\mu$-calculus
and $G=(t,{\sim})$ a guess for $\varphi$. Assume that the nominals
occurring in $\varphi$ are $O=\{o_1,\dots,o_k\}$. For each formula
$\psi \in \mathsf{cl}(\varphi)$, atomic program $a$, and $\sigma \in
2^{\Theta(\varphi)}$, let
\begin{enumerate}[$\bullet$]

\item $\mathsf{nom}^a_\psi(\sigma) = \{ o \mid \psi \in t(o) \wedge
  {\uparrow}^a_o \in \sigma \}$;

\item $|\mathsf{nom}^a_\psi(\sigma)|^\sim$ denote the number of equivalence
  classes $C$ of $\sim$ such that some member of $C$ is contained in
  $\mathsf{nom}^a_\psi(\sigma)$.

\end{enumerate}
The automaton $A_{\varphi,G}$ verifies compatibility with $G$, and
ensures that $\varphi$ holds in some root. As its set of states, we
use
$$
Q = \mathsf{cl}(\varphi) \cup \{ q_0 \} \cup \{ \neg o_i \lor \psi,\ \mid 1
\leq i \leq k \wedge \psi \in \mathsf{cl}(\varphi) \} %
\cup \{ \mathsf{ini}_{i} \mid 1 \leq i \leq k \}.
$$
Set
$ A_{\varphi,G}= \pairs{2^{\Theta(\varphi)}, b, Q,\delta,q_0,\FF},
$
where the transition function $\delta$ and the acceptance condition
$\mathcal{F}$ is defined in the following.
For all $\sigma \in 2^{\Theta(\varphi)}$, define:
$$
\begin{array}{r@{\;}c@{\;}l}
\delta(q_0,\sigma) &=& (\langle \mathsf{root} \rangle,\varphi) \wedge
  \displaystyle\bigwedge_{1 \leq i\leq k} (\langle \mathsf{root} \rangle, o_i)
  \wedge \displaystyle\bigwedge_{1 \leq i\leq k} ([\mathsf{root}], \mathsf{ini}_i)\\
\delta(\mathsf{ini}_i,\sigma) &=& (\varepsilon, \neg o_i)  \lor
  \displaystyle\bigwedge_{\gamma \in t(o_i)} (\varepsilon,\gamma ) \\
\delta(\neg p,\sigma) &=& (p \not \in \sigma)\\
\delta(\psi_1 \wedge \psi_2,\sigma) &=& (\varepsilon, \psi_1) \wedge (\varepsilon,\psi_2)\\
\delta(\psi_1 \vee \psi_2,\sigma)&=& (\varepsilon, \psi_1) \vee (\varepsilon,\psi_2)\\
\delta(\lambda y. \psi(y),\sigma) &=& (\varepsilon,\psi(\lambda y.
\psi(y)))\\
\delta([n,a]\psi,\sigma) &=& \mathsf{false} \text{ if } |\mathsf{nom}^a_{\neg
    \psi}(\sigma)|^\sim > n\\
\delta([n,a]\psi,\sigma)&=& ([n-|\mathsf{nom}^a_{\neg
    \psi}(\sigma)|^\sim],\psi\wedge p_a) \wedge \displaystyle\bigwedge_{o \in
    \mathsf{nom}^a_\psi(\sigma)} ([\mathsf{root}], \neg o \lor \psi)  \text{ if } |\mathsf{nom}^a_{\neg
    \psi}(\sigma)|^\sim \leq n \\
\delta(\pairs{n,a}\psi,\sigma)&=&(\pairs{n-|\mathsf{nom}^a_\psi(\sigma)|^\sim},\psi
  \wedge p_a) \wedge \displaystyle\bigwedge_{o \in \mathsf{nom}^a_\psi(\sigma)}
  ([\mathsf{root}], \neg o \lor \psi)
\end{array}
$$
In the last line, the first conjunct is omitted if
$|\mathsf{nom}^a_\psi(\sigma)|^\sim > n$. The first two transition
rules check that each nominal occurs in at least one root and that the
encoded quasi-forest structure is compatible with the guess $G$.
Consider the last three rules, which are concerned with graded
modalities and reflect the existence of implicit back-edges to
nominals. The first of these rules checks for allbut formulas that are
violated purely by back-edges. The other two rules consist of two
conjuncts, each.  In the first conjunct, we subtract the number of
nominals to which there is an implicit $a$-edge and that violate the
formula $\psi$ in question. This is necessary because the $\langle
\cdot \rangle$ and $[\cdot]$ transitions of the automaton do not take
into account implicit edges.  In the second conjunct, we send a copy
of the automaton to each nominal to which there is an $a$-edge and
that satisfies $\psi$.  Observe that satisfaction of $\psi$ at this
nominal is already guaranteed by the second rule that checks
compatibility with $G$. We nevertheless need the second conjunct in
the last two rules because, without the jump to the nominal, we will
be missing paths in runs of $A_{\varphi,G}$ (those that involve an
implicit back-edge). Thus, it would not be guaranteed that these paths
satisfy the acceptance condition, which is defined below. This, in
turn, means that the evaluation of least fixpoint formulas is not
guaranteed to be well-founded. This point was missed in \cite{SV01},
and the same strategy used here can be employed to fix the
construction in that paper.

The acceptance condition of $A_{\varphi,G}$ is defined as in the case
of the full graded $\mu$-calculus: let $d$ be the maximal alternation
level of subformulas of $\varphi$, which is defined as in the case of
the full graded $\mu$-calculus. Denote by $G_i$ the set of all the
$\nu$-formulas in $\mathsf{cl}(\varphi)$ of alternation level $i$ and
by $B_i$ the set of all $\mu$-formulas in $\mathsf{cl}(\varphi)$ of
alternation depth less than or equal to $i$. Now,
$\FF=\{\FF_0,\FF_1,\dots,\FF_{2d},Q\}$, where $\FF_0=\emptyset$ and for
every $1\leq i \leq d$ we have $\FF_{2i-1}= \FF_{2i-2} \cup B_i$, and
$\FF_{2i}= \FF_{2i-1} \cup G_i$.

It is standard to show that if $\pairs{F,V}$ is the encoding of a directed
quasi-forest model $K$ of $\varphi$ compatible with $G$, then $\pairs{F,V} \in
\mathcal{L}(A_{\varphi,G})$. Conversely, let $\pairs{F,V}\in
\mathcal{L}(A_{\varphi,G})$. If $\pairs{F,V}$ is \emph{nominal unique}, i.e.,
if every nominal occurs only in the label of a single root, it is not hard to
show that $\pairs{F,V}$ is the encoding of a directed quasi-forest model $K$ of
$\varphi$ compatible with $G$. However, the automaton $A_{\varphi,G}$ does not
(and cannot) guarantee nominal uniqueness.  To establish Point~(2) of the
theorem, we thus have to show that whenever $\mathcal{L}(A_{\varphi,G}) \neq
\emptyset$, then there is an element of $\mathcal{L}(A_{\varphi,G})$ that is
nominal unique.

Let $\pairs{F,V}\in \mathcal{L}(A_{\varphi,G})$. From $\pairs{F,V}$, we extract
a new forest $\pairs{F',V'}$ as follows: Let $r$ be a run of $A_{\varphi,G}$ on
$\pairs{F,V}$. Remove all trees from $F$ except those that occur in $r$ as
witnesses for the existential root transitions in the first transition rule.
Call the modified forest $F'$. Now modify $r$ into a run $r'$ on $F'$: simply
drop all subtrees rooted at nodes whose label refers to one of the trees that
are present in $F$ but not in $F'$.  Now, $r'$ is a run on $F'$ because (i)~the
only existential root transitions are in the first rule, and these are
preserved by construction of $F'$ and $r'$; and (ii)~all universal root
transitions are clearly preserved as well. Also, $r'$ is accepting because
every path in $r'$ is a path in~$r$.  Thus, $\pairs{F',V'}\in
\mathcal{L}(A_{\varphi,G})$ and it is easy to see that $\pairs{F',V'}$ is
nominal unique.
\endproof
Combining Theorems~\ref{tree2}, Corollary~\ref{cor:feaempty}, and
Theorem~\ref{hybrid graded to automata}, we obtain an {\sc
  ExpTime}-upper bound for the hybrid graded $\mu$-calculus. Again,
the lower bound is from \cite{FL79}.
\begin{thm}
  The satisfiability problems of the full graded $\mu$-calculus and
  the hybrid graded $\mu$-calculus are \exp-complete even if the
  numbers in the graded modalities are coded in binary.
\end{thm}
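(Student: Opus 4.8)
The plan is to establish matching \exp\ lower and upper bounds for each of the two calculi separately. Both lower bounds are immediate: the standard propositional $\mu$-calculus embeds into both calculi (its modalities $\langle a\rangle$ and $[a]$ are the graded modalities $\pairs{0,a}$ and $[0,a]$, and it uses neither nominals nor inverse programs), and its satisfiability problem is already \exp-hard by~\cite{FL79}. Hence \exp-hardness transfers to both the full graded and the hybrid graded $\mu$-calculus, and it remains only to prove the \exp\ upper bounds.

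For the full graded $\mu$-calculus I would chain together the three results already at hand. Given a sentence $\varphi$ counting up to $b$, Theorem~\ref{tree1} reduces satisfiability to the existence of a \emph{tree} model; Theorem~\ref{full graded to automata} produces in time $\mathcal{O}(|\varphi|)$ a 2GAPT $A_\varphi$ accepting exactly the encodings of such tree models, with $n = \mathcal{O}(|\varphi|)$ states, index $k = \mathcal{O}(|\varphi|)$, and counting bound $b$; and Theorem~\ref{emptiness} decides emptiness of $A_\varphi$ in time $(b+2)^{\mathcal{O}(n^3 \cdot k^2 \cdot \log k \cdot \log b^2)}$. Since numbers are coded in binary we have $\log b = \mathcal{O}(|\varphi|)$, so substituting the parameters yields a running time of the form $2^{p(|\varphi|)}$ for a polynomial $p$. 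As $\varphi$ is satisfiable iff $\mathcal{L}(A_\varphi)\neq\emptyset$, this gives the desired bound.

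For the hybrid graded $\mu$-calculus I would argue analogously but through forests and guesses. By Theorem~\ref{tree2}, $\varphi$ is satisfiable iff it has a directed quasi-forest model, and any such model is compatible with some guess $G$ for $\varphi$. Theorem~\ref{hybrid graded to automata} then supplies, for each guess $G$, a FEA $A_{\varphi,G}$ with $\mathcal{O}(|\varphi|^2)$ states, index $\mathcal{O}(|\varphi|)$, and counting bound $b$, whose two acceptance properties (Points~(1) and~(2)) together yield that $\varphi$ has a quasi-forest model compatible with $G$ iff $\mathcal{L}(A_{\varphi,G})\neq\emptyset$. Hence $\varphi$ is satisfiable iff $\mathcal{L}(A_{\varphi,G})\neq\emptyset$ for some guess $G$, and the decision procedure enumerates all guesses and tests each corresponding FEA for emptiness via Corollary~\ref{cor:feaempty}, accepting iff one test succeeds.

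The one point requiring care — and the place where the hybrid case differs from the full graded one — is verifying that the enumeration over guesses does not push us out of \exp. A guess $(t,{\sim})$ assigns to each of the $k \le |\varphi|$ nominals a subset of $\mathsf{cl}(\varphi)$ and fixes an equivalence relation on the nominals, so the number of guesses is $2^{\mathcal{O}(|\varphi|^2)}$. For each guess, substituting $n=\mathcal{O}(|\varphi|^2)$, $k=\mathcal{O}(|\varphi|)$, and $\log b = \mathcal{O}(|\varphi|)$ into Corollary~\ref{cor:feaempty} again gives a single-exponential time $2^{p'(|\varphi|)}$. The total running time is the product of these two single-exponential quantities, which remains single-exponential, and together with the \exp\ lower bound this establishes \exp-completeness. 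I expect no genuine obstacle beyond this bookkeeping, since all the heavy lifting — the model properties, the automaton constructions, and the 2GAPT/FEA emptiness bound — is already isolated in the cited results.
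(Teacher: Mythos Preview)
Your proposal is correct and follows essentially the same route as the paper: lower bounds via \cite{FL79}, the upper bound for the full graded case by chaining Theorems~\ref{tree1}, \ref{full graded to automata}, and~\ref{emptiness}, and the upper bound for the hybrid graded case by chaining Theorem~\ref{tree2}, Theorem~\ref{hybrid graded to automata}, and Corollary~\ref{cor:feaempty} while enumerating the (single-exponentially many) guesses. Your explicit bound of $2^{\mathcal{O}(|\varphi|^2)}$ on the number of guesses is a bit more detailed than the paper's mere ``exponential,'' but the argument is the same.
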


\section{The Emptiness Problem for 2GAPTs}
\label{sect:emptiness}

We prove Theorem~\ref{emptiness} and thus show that the emptiness
problem of 2GAPTs can be solved in {\sc ExpTime}. The proof is by a
reduction to the emptiness problem of graded nondeterministic parity
tree automata (GNPTs) as introduced in \cite{KSV02}.

\subsection{Graded nondeterministic parity tree automata}

We introduce the graded nondeterministic parity tree automata (GNPTs)
of \cite{KSV02}.
For $b > 0$, a \emph{$b$-bound} is a pair in
$$
  B_b = \{ (>,
0), (\leq, 0), (>, 1), (\leq, 1), \ldots, (>, b), (\leq,
b)\}.
$$
For a set $X$, a subset $P$ of $X$, and a (finite or
infinite) word $t = x_1x_2\cdots \in X^* \cup X^\omega$, the
\emph{weight} of $P$ in $t$, denoted $\mathsf{weight}(P, t)$, is the
number of occurrences of symbols in $t$ that are members of $P$. That
is, $\mathsf{weight}(P,t) = |\{i : x_i \in P\}|$.  For example,
$\mathsf{weight}(\{1,2\},1241) = 3$. We say that $t$ satisfies a
$b$-bound $(>, n)$ with respect to $P$ if $\mathsf{weight}(P, t) > n$,
and $t$ satisfies a $b$-bound $(\leq, n)$ with respect to $P$ if
$\mathsf{weight}(P,t) \leq n$.

For a set $Y$, we use $B(Y)$ to denote the set of all Boolean
formulas over atoms in $Y$. Each formula $\theta \in B(Y)$ induces a
set $\mathsf{sat}(\theta)\subseteq 2^Y$ such that $x \in \mathsf{sat}(\theta)$ iff $x$
satisfies $\theta$. For an integer $b\geq 0$, a \emph{$b$-counting
constraint} for $2^Y$ is a relation $C \subseteq B(Y) \times B_b$.
For example, if $Y=\{y_1, y_2, y_3\}$, then we can have
$$C= \{\pairs{y_1 \vee \neg y_2, (\leq, 3)} , \pairs{y_3, (\leq, 2)},
\pairs{y_1 \wedge y_3, (>, 1)}\}.$$
A word $t = x_1 x_2 \cdots \in (2^Y)^* \cup (2^Y)^\omega$ satisfies
the $b$-counting constraint $C$ if for all $\pairs{\theta,\xi} \in C$,
the word $t$ satisfies $\xi$ with respect to $\mathsf{sat}(\theta)$,
that is, when $\theta$ is paired with $\xi=(>,n)$, at least $n + 1$
occurrences of symbols in $t$ should satisfy $\theta$, and when
$\theta$ is paired with $\xi=(\leq, n)$, at most $n$ occurrences
satisfy $\theta$.
For example, the word $t_1=\emptyset\{y_1\}\{y_2\}\{y_1, y_3\}$ does
not satisfy the constraint $C$ above, as the number of sets in $t_1$
that satisfies $y_1 \wedge y_3$ is one. On the other hand, the word
$t_2=\{ y_2 \}\{y_1\}\{y_1, y_2, y_3\}\{y_1, y_3\}$ satisfies $C$.
Indeed, three sets in $t_2$ satisfy $y_1 \vee \neg y_2$, two sets
satisfy $y_3$, and two sets satisfy $y_1 \wedge y_3$.

We use ${\mathcal{C}}(Y,b)$ to denote the
set of all 
$b$-counting constraints for $2^Y$.  We assume that the integers in
constraints are coded in binary.

We can now define \emph{graded nondeterministic parity tree automata}
(GNPTs, for short). A GNPT is a tuple ${A = \langle \Sigma, b, Q,
  \delta, q_0, \FF \rangle}$ where $\Sigma, \ b, \ q_0$, and $\FF$ are
as in 2GAPT, $Q \subseteq 2^Y$ is the set of states (i.e., $Q$ is
encoded by a finite set of variables), and $\delta:Q \times \Sigma
\rightarrow {\mathcal{C}}(Y,b)$ maps a state and a letter to a
$b$-counting constraint $C$ for $2^Y$ such that the cardinality of
$C$ is bounded by $\log |Q|$.
For defining runs, we introduce an
additional notion. Let $x$ be a node in a $\Sigma$-labeled tree
$\langle T, V \rangle$, and let $x \cdot i_1, x \cdot i_2 , \dots$
be the (finitely or infinitely many) successors of $x$ in $T$, where $i_j < i_{j+1}$ (the actual ordering is not important, but has to be fixed). Then we use
$\mathsf{lab}(x)$ to denote the (finite or infinite) word of labels
induced by the successors, i.e., $\mathsf{lab}(x)=V(x \cdot i_1)
V(x \cdot i_2) \cdots $.
Given a GNPT $A$, a \emph{run} of $A$ on a
$\Sigma$-labeled tree $\pairs{T, V}$ rooted in $z$ is then a $Q$-labeled
tree $\pairs{T, r}$ such that
\begin{enumerate}[$\bullet$]
\item $r(z) = q_0$ and
\item for every $x \in T$, %
$\mathsf{lab}(x)$
 satisfies $\delta(r(x), V(x))$.
\end{enumerate}
Observe that, in contrast to the case of alternating automata, the input tree
$\pairs{T,V}$ and the run $\pairs{T,r}$ share the component $T$. The run
$\pairs{T, r}$ is \emph{accepting} if all its infinite paths satisfy the parity
acceptance condition.
A GNPT \emph{accepts} a tree iff there exists an accepting run of the
automaton on the tree. We denote by $\mathcal{L}(A)$ the set of all
$\Sigma$-labeled trees that $A$ accepts.

We need two special cases of GNPT: \forrall automata and \safety automata. In
\forrall automata, for each $q \in Q$ and $\sigma \in \Sigma$ there is a $q'
\in Q$ such that $\delta(q , \sigma) = \{\pairs{(\neg
\theta_{q'}),(\leq,0)}\}$, where $\theta_{q'} \in B(Y)$ is such that
$\mathsf{sat}(\theta_{q'}) = \{\{q'\}\}$. Thus, a \forrall automaton is very
similar to a (non-graded) deterministic parity tree automaton, where the
transition function maps $q$ and $\sigma$ to $\pairs{q',\dots, q'}$ (and the
out-degree of trees is not fixed). In \safety automata, there is no acceptance
condition, and all runs are accepting. Note that this does not mean that
\safety automata accept all trees, as it may be that on some trees the
automaton does not have a run at all.

We need two simple results concerning GNPTs.
The following has been stated (but not proved) already
in~\cite{KSV02}.
\begin{lem}
\label{intersection}
Given a \forrall GNPT $A_1$ with $n_1$ states and index $k_1$, and a
\safety GNPT $A_2$ with $n_2$ states and counting bound $b_2$, we can
define a GNPT $A$ with $n_1 n_2$ states, index $k_1$, and counting
bound $b_2$, such that ${\mathcal{L}}(A) = {\mathcal{L}}(A_1) \cap
{\mathcal{L}}(A_2)$.
\end{lem}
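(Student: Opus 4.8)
The plan is to construct $A$ as a product automaton whose states are pairs $(q_1, q_2) \in Q_1 \times Q_2$, where $Q_1$ is the state set of the \forrall automaton $A_1$ and $Q_2$ that of the \safety automaton $A_2$. Since both components are GNPTs running on the same tree $\langle T, V\rangle$ and sharing the component $T$ with their runs, the natural idea is to run both automata synchronously: a run of $A$ labels each node $x$ with a pair $(q_1, q_2)$ such that the projection onto the first (resp.\ second) coordinate is a run of $A_1$ (resp.\ $A_2$). The initial state is $(q_0^1, q_0^2)$, and the acceptance condition of $A$ is inherited \emph{purely} from $A_1$, since $A_2$ is a \safety automaton with no acceptance condition (all its runs are accepting). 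This immediately gives index $k_1$ and $n_1 n_2$ states.

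The main obstacle is the transition function. For a state $(q_1, q_2)$ and letter $\sigma$, I must produce a single $b_2$-counting constraint $C \in \mathcal{C}(Y, b_2)$ over the product state space (encoded in some variable set $Y' \supseteq Y$) whose satisfying words of labels are exactly those that \emph{simultaneously} witness $\delta_1(q_1,\sigma)$ and $\delta_2(q_2,\sigma)$. Here the special form of \forrall automata is crucial: $\delta_1(q_1, \sigma) = \{\langle \neg\theta_{q_1'}, (\leq, 0)\rangle\}$ forces \emph{every} successor to carry the first-coordinate state $q_1'$, so the first component of the product run is completely determined (it behaves like a deterministic automaton). Thus the only genuine branching choices come from $A_2$. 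The plan is to take the constraint $C$ to be essentially $\delta_2(q_2, \sigma)$ but with each Boolean atom $\theta \in B(Y)$ rewritten so that it additionally enforces that the first coordinate equals the prescribed $q_1'$; concretely, the weight constraints of $A_2$ are transferred to the product encoding, while a single additional constraint $\langle \neg\theta'_{q_1'}, (\leq, 0)\rangle$ (with $\theta'_{q_1'}$ selecting product states whose first coordinate is $q_1'$) pins down the \forrall behaviour. One must check that the cardinality bound $|C| \leq \log|Q_1 Q_2|$ is respected; since we add only a constant number of constraints to those of $A_2$ and $|Q_2| \le |Q_1 Q_2|$, this holds after at most a harmless adjustment.

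The correctness argument then proceeds by showing both inclusions. For $\mathcal{L}(A) \subseteq \mathcal{L}(A_1) \cap \mathcal{L}(A_2)$, given an accepting run $\langle T, r\rangle$ of $A$, the two coordinate projections $r_1, r_2$ satisfy the respective transition constraints by construction, $r_1$ is an accepting run of $A_1$ because the parity condition was copied from $A_1$ and depends only on the first coordinate, and $r_2$ is automatically accepting since $A_2$ is a \safety automaton. For the converse, given an accepting run $r_1$ of $A_1$ and any run $r_2$ of $A_2$ on the same tree, I pair them node-by-node into $r(x) = (r_1(x), r_2(x))$; because the \forrall transition of $A_1$ forces the same successor state along all children, the synchronisation is consistent, and the combined label-word at each node satisfies $C$. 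The resulting run is accepting for the same reason as above.

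The step I expect to require the most care is verifying that a single $b_2$-counting constraint over the product alphabet can faithfully encode ``satisfy $A_2$'s constraint on the $A_2$-component \emph{and} force the $A_2$-agnostic $A_1$-component to the unique required state,'' while keeping the counting bound at $b_2$ (not some larger value) and the constraint cardinality within $\log|Q|$. This is precisely where the restriction to a \emph{\forrall} $A_1$ (rather than an arbitrary GNPT) pays off: its transitions impose no new \emph{counting} at all, only the deterministic propagation of states, so no counting resources beyond those already present in $A_2$ are consumed. I would therefore treat the reduction of the counting bound and the constraint-size bookkeeping as the technical crux, and handle the parity/acceptance transfer as routine.
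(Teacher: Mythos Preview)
Your proposal is correct and matches the paper's own proof almost exactly: both use the straightforward product construction, inherit the parity condition from $A_1$ alone, and encode the \textsc{Forall} transition of $A_1$ as one additional $(\leq,0)$-constraint adjoined to the unchanged constraint set of $A_2$. The only cosmetic difference is that the paper encodes product states as unions $q_1 \cup q_2 \subseteq 2^{Y_1 \uplus Y_2}$ (so the $A_2$-constraints in $B(Y_2)$ need no rewriting at all---they simply ignore the $Y_1$-variables), whereas you speak of pairs and of rewriting the Boolean atoms; this is the same construction in slightly different notation.
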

\begin{proof}
  We can use a simple product construction. Let
  $A_i=(\Sigma,b_i,Q_i,\delta_i,q_{0,i}, \FF^{(i)})$ with $Q_i
  \subseteq 2^{Y_i}$ for $i \in \{1,2\}$. Assume w.l.o.g.\ that $Y_1
  \cap Y_2 = \emptyset$. We define
  $A=(\Sigma,b_2,Q,\delta,(q_{0,1} \cup q_{0,2}),\FF)$, where
  \begin{enumerate}[$\bullet$]

  \item $Q = \{ q_1 \cup q_2 \mid q_1 \in Q_1 \text{ and } q_2 \in Q_2 \} \subseteq 2^Y$, where $Y=Y_1 \uplus Y_2$;

  \item for all $\sigma \in \Sigma$ and $q = q_1 \cup q_2 \in Q$ with
    $\delta_1(q_1,\sigma)= \{\pairs{(\neg \theta_q),(\leq,0)}\}$ and
    $\delta_2(q_2,\sigma)=C$, we set $\delta(q,\sigma)= C \cup
    \{\pairs{(\neg \theta'_q),(\leq,0)}\}$, where $\theta'_{q} \in
    B(Y)$ is such that $\mathsf{sat}(\theta'_{q}) = \{q' \in Q \mid q' \cap Q_1 = q\}$;

  \item $\FF = \{ \FF_1, \dots, \FF_k \}$ with $\FF_{i} = \{ q \in Q \mid
    q \cap Q_1 \in \FF^{(1)}_{i}\}$
    if $\FF^{(1)} = \{ \FF^{(1)}_{1},\dots,\FF^{(1)}_{k} \}$.

  \end{enumerate}
  It is not hard to check that $A$ is as required.
\end{proof}
The following result can be proved by an analogous product
construction.
\begin{lem}
\label{intersectiontwo} Given \safety GNPTs $A_i$ with $n_i$ states
and counting bounds $b_i$, $i \in \{1,2\}$, we can define a \safety
GNPT $A$ with $n_1 n_2$ states and counting bound $b=\mathsf{max}\{b_1 ,
b_2\}$ such that ${\mathcal{L}}(A) = {\mathcal{L}}(A_1) \cap
{\mathcal{L}}(A_2)$.
\end{lem}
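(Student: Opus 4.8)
The plan is to reuse the product construction from the proof of Lemma~\ref{intersection}, dropping everything that was needed there only to transport the parity condition. First I would rename variables if necessary so that $Y_1 \cap Y_2 = \emptyset$, put $Y = Y_1 \uplus Y_2$, and take as the state space of $A$ the set $Q = \{ q_1 \cup q_2 \mid q_1 \in Q_1 \text{ and } q_2 \in Q_2 \} \subseteq 2^Y$, so that $|Q| = n_1 n_2$. The initial state is $q_{0,1} \cup q_{0,2}$, the counting bound is $b = \mathsf{max}\{b_1, b_2\}$, and $A$ carries no acceptance condition since it is to be a \safety automaton. The transition function simply merges the two constraints: for $q = q_1 \cup q_2$ and $\sigma \in \Sigma$, set $\delta(q,\sigma) = \delta_1(q_1,\sigma) \cup \delta_2(q_2,\sigma)$.

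Before arguing correctness I would verify that $A$ is a well-formed GNPT. Since $\delta_1(q_1,\sigma) \subseteq B(Y_1) \times B_{b_1}$ and $\delta_2(q_2,\sigma) \subseteq B(Y_2) \times B_{b_2}$, and since $B(Y_i) \subseteq B(Y)$ and $B_{b_i} \subseteq B_b$, the union is a $b$-counting constraint for $2^Y$. Its cardinality is at most $|\delta_1(q_1,\sigma)| + |\delta_2(q_2,\sigma)| \leq \log n_1 + \log n_2 = \log |Q|$, which meets the cardinality requirement on GNPT transition constraints.

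The correctness argument rests on a single observation, which is the only real content of the proof. Because $Y_1 \cap Y_2 = \emptyset$, a Boolean formula $\theta \in B(Y_1)$ evaluated at a combined child-label $q' = q_1' \cup q_2' \subseteq Y$ depends only on $q' \cap Y_1 = q_1'$; hence, for every node $x$, the weight of $\mathsf{sat}(\theta)$ (read over $Y$) in the word of combined child-labels equals the weight of $\mathsf{sat}(\theta)$ (read over $Y_1$) in the projected word obtained by applying $q' \mapsto q' \cap Y_1$, and symmetrically for $Y_2$. Consequently $\mathsf{lab}(x)$ satisfies a constraint $\pairs{\theta,\xi} \in \delta_1(q_1,\sigma)$ iff the $Y_1$-projection of $\mathsf{lab}(x)$ does. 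For the inclusion $\mathcal{L}(A) \subseteq \mathcal{L}(A_1) \cap \mathcal{L}(A_2)$, I take a run $\pairs{T,r}$ of $A$ and set $r_i(x) = r(x) \cap Y_i$; the observation shows $\pairs{T,r_i}$ is a run of $A_i$ (note $r_i(z) = q_{0,i}$ by disjointness), and since every run of a \safety automaton accepts, the input lies in $\mathcal{L}(A_i)$. For the converse I crucially use that a GNPT run shares its tree component $T$ with the input, so accepting runs $\pairs{T,r_1}$ of $A_1$ and $\pairs{T,r_2}$ of $A_2$ on the same $\pairs{T,V}$ can be overlaid node-by-node as $r(x) = r_1(x) \cup r_2(x)$; the observation again shows $\mathsf{lab}(x)$ satisfies both halves of $\delta(r(x),V(x))$, so $\pairs{T,r}$ is a run of $A$, and it accepts for free.

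I do not expect a genuine obstacle, which is exactly why the paper relegates this to ``an analogous product construction.'' The one point requiring care, and the place where the two-\safety case is strictly simpler than Lemma~\ref{intersection}, is that there is no parity condition to reconcile: in Lemma~\ref{intersection} the \forrall automaton is effectively deterministic, which is what lets its index-$k_1$ condition be carried to the product, whereas here both component runs accept unconditionally and the overlay of runs needs nothing beyond the weight-decoupling afforded by $Y_1 \cap Y_2 = \emptyset$ together with the shared-tree property of GNPT runs.
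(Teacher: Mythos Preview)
Your proposal is correct and is precisely the ``analogous product construction'' the paper alludes to; the paper gives no proof of its own for this lemma, and your filling-in of the details---disjoint variable sets, union of constraints, the weight-decoupling observation, and the overlay of runs enabled by the shared tree component---is exactly what is intended.
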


\subsection{Reduction to Emptiness of GNPTs}
\label{sec:2gaptempty}

We now show that the emptiness problem of 2GAPTs can be reduced to the
emptiness problem of GNPTs that are only exponentially larger.
Let $A=\genA$ be a 2GAPT. We recall that $\delta$ is a function from $Q \times
\Sigma$ to $B^{+}(D_b^- \times Q)$, with $D_b^- := \ppairs{b} \cup [[b]] \cup
\{-1, \varepsilon\}$. A \emph{strategy tree} for $A$ is a $2^{Q \times D_b^-
\times Q}$-labeled tree $\pairs{T,\strat}$.  Intuitively, the purpose of a
strategy tree is to guide the automaton $A$ by pre-choosing transitions that
satisfy the transition relation. For each label $w=\strat(x)$, we use $\head(w)
= \{q \mid (q, c, q') \in w \}$ to denote the set of states for which $\strat$
chooses transitions at $x$. Intuitively, if $A$ is in state $q \in \head(w)$,
\strat tells it to execute the transitions $\{(c,q') \mid (q,c,q') \in w \}$.
In the following, we usually consider only the \strat part of a strategy tree.
Let $\pairs{T,V}$ be a $\Sigma$-labeled tree and $\pairs{T,\strat}$ a strategy
tree for $A$, based on the same $T$.  Then \strat is a \emph{strategy for $A$
on
  $V$} if for all nodes $x \in T$ and all states $q \in Q$, we have:
\begin{enumerate}[(1)]

\item $\delta(q_0,V(\mathsf{root}(T)))=\mathsf{true}$ or $q_0 \in
  \head(\strat(\mathsf{root}(T)))$;

\item if $q \in \head(\strat(x))$, then the set $\{(c,q') : (q,c,q')
  \in \strat(x)\}$ satisfies $\delta(q,V(x))$,

\item if $(q,c,q') \in \strat(x)$ with $c \in \{-1,\varepsilon\}$,
  then (i)~$x \cdot c$ is defined and (ii)~$\delta(q',V(x \cdot
  c))=\mathsf{true}$ or $q' \in \head(\strat(x \cdot c))$.

\end{enumerate}
If $A$ is understood, we simply speak of a strategy on $V$.

\begin{exa}
\label{exa1} Let $A=\genA$ be a 2GAPT such that $\Sigma=\{a,b,c\}$, $Q=\{q_0,
q_1, \allowbreak q_2, q_3\}$, and $\delta$ is such that
$\delta(q,a)=(\pairs{0},q_1) \vee (\pairs{0},q_3)$ for $q \in \{q_0, q_2\}$,
and $\delta(q_1,b)=((-1,q_2) \wedge (\pairs{1},q_3)) \vee ([1],q_1)$.  Consider
the trees depicted in Figure~\ref{figure:strategy}. From left to right, the
first tree $\pairs{T,V}$ is a fragment of the input tree, the second tree is a
fragment of a run $\pairs{T_r,r}$ of $A$ on $\pairs{T,V}$, and the
third tree is a fragment of a strategy tree suggesting this run.  In a
label $\langle w,a \rangle$ of the input tree, $w$ is the node name
and $a \in \Sigma$ the label in the tree.  In the run and strategy
tree, only the labels are given, but not the node names.
\end{exa}

\begin{figure}[t]
\footnotesize

    \begin{picture}(150,150)

     \thicklines
     \put(-30,115){\makebox(0,0)[t] {$\pairs{1,a}$}}
     \put(10,114){\makebox(0,0)[t] 
     }
     \put(-32,102){\vector(-1,-1){28}}
     \put(-68,70){\makebox(0,0)[t]{$\pairs{11,b}$}}
     \put(-34,70){\makebox(0,0)[t] 
     }
     \put(-28,102){\vector(1,-1){26}}
     \put(5,70){\makebox(0,0)[t]{$\pairs{12,a}$}}
     \put(40,70){\makebox(0,0)[t] 
     }
     \put(-65,60){\vector(-1,-1){25}}
     \put(-65,60){\vector(1,-1){25}}
     \put(-90,30){\makebox(0,0)[t]{$\pairs{111,b}$}}
     \put(-40,30){\makebox(0,0)[t]{$\pairs{112,a}$}}
     \put(5,30){\makebox(0,0)[t] 
     }

     \put(120,120){\makebox(0,0)[t] {$(1,q_0$)}}
     \put(120,110){\vector(-1,-2){10}}
     \put(105,85){\makebox(0,0)[t]{$(11, q_1)$}}
     \put(105,75){\vector(-1,-1){25}}
     \put(105,75){\vector(0,-1){25}}
     \put(105,75){\vector(1,-1){25}}
     \put(75,45){\makebox(0,0)[t]{$(1, q_2)$}}
     \put(105,45){\makebox(0,0)[t]{$(111, q_3)$}}
     \put(140,45){\makebox(0,0)[t]{$(112, q_3)$}}
     \put(80,35){\vector(0,-1){20}}
     \put(80,15){\makebox(0,0)[t] {$(12,q_3$)}}

     \put(250,115){\makebox(0,0)[t] {$(q_0,\pairs{0},q_1)$,$(q_2,\pairs{0},q_3)$}}
     \put(250,100){\vector(-1,-1){25}}
     \put(255,100){\vector(1,-1){25}}
     \put(220,70){\makebox(0,0)[t] {$(q_1,-1,q_2)$, $(q_1,\pairs{1},q_3)$}}
     \put(220,60){\vector(-1,-1){25}}
     \put(220,60){\vector(1,-1){25}}

    \end{picture}

\caption{A fragment of an input tree, a corresponding run, and its
strategy tree.} \label{figure:strategy}
\end{figure}

Strategy trees do not give full information on how to handle
transitions $(\langle n \rangle,q)$ and $([n],q)$ as they do not say
which successors should be used when executing them.  This is
compensated by \emph{promise trees}. A promise tree for $A$ is a $2^{Q
  \times Q}$-labeled tree $\pairs{T,\prom}$.  Intuitively, if a run
that proceeds according to $\prom$ visits a node $x$ in state $q$ and
chooses a move $(\pairs{n},q')$ or $([n],q')$, then the successors $x
\cdot i$ of $x$ that inherit $q'$ are those with $(q,q') \in \prom(x
\cdot i)$.
Let $\pairs{T,V}$ be a $\Sigma$-labeled tree, $\strat$ a strategy on
$V$, and $\pairs{T,\prom}$ a promise tree.  We call \prom a
\emph{promise for $A$ on} \strat if the states promised to be visited
by $\prom$ satisfy the transitions chosen by $\strat$, i.e., for every
node $x \in T$,
the following hold:
\begin{enumerate}[(1)]

\item for every $(q, \pairs{n}, q') \in \strat(x)$, there is a subset
  $M \subseteq \mathsf{succ}(x)$ of cardinality $n+1$ such
  that each $y \in M$ satisfies $(q,q') \in \prom(y)$;

\item for every $(q, [n], q') \in \strat(x)$, there is a subset $M
  \subseteq \mathsf{succ}(x)$ of cardinality $n$ such that
  each $y \in \mathsf{succ}(x) \setminus M$ satisfies $(q,q') \in
  \prom(y)$;

\item if $(q,q') \in \prom(x)$, then $\delta(q',V(x))=\mathsf{true}$
  or $q' \in \head(\strat(x))$.

\end{enumerate}

Consider a $\Sigma$-labeled tree $\pairs{T, V}$, a strategy $\strat$
on $V$, and a promise $\prom$ on $\strat$. An infinite
sequence of pairs $(x_0,q_0), (x_1,q_1) \ldots$ is a \emph{trace}
induced by $\strat$ and $\prom$ if $x_0$ is the root of $T$, $q_0$ is
the initial state of $A$ and, for each $i\geq 0$, one of the following
holds:
\begin{enumerate}[$\bullet$]

\item there is $(q_i, c, q_{i+1}) \in \strat(x_i)$ with $c =-1$ or $c =\varepsilon$,
$x_i \cdot c$ defined, and $x_{i+1}=x_i \cdot c$;

\item $\strat(x_i)$ contains $(q_i,\pairs{n},q_{i+1})$ or
$(q_i,[n],q_{i+1})$, there exists $j \in \Nat$ with $x_{i+1}= x_i
\cdot j \in T$, and $(q_i,q_{i+1}) \in \prom(x_{i+1})$.

\end{enumerate}
Let $\FF=\{ \FF_1,\dots,\FF_k \}$.  For each state $q \in Q$, let
$\mathsf{index}(q)$ be the minimal $i$ such that $q \in \FF_i$. For a trace
$\pi$, let $\mathsf{index}(\pi)$ be the minimal index of states that occur
infinitely often in $\pi$.  Then, $\pi$ \emph{satisfies} $\FF$ if it has even
index. The strategy $\strat$ and promise $\prom$ are \emph{accepting} if all
the traces induced by $\strat$ and $\prom$ satisfy $\FF$.

In \cite{KSV02}, it was shown that a necessary and sufficient
condition for a tree $\pairs{T,V}$ to be accepted by a one-way GAPT is
the existence of a strategy $\strat$ on $V$ and a promise $\prom$ on
$\strat$ that are accepting. We establish the same result for the case
of 2GAPTs.
\begin{lem}\label{good for accepting}
  A 2GAPT $A$ accepts $\pairs{T, V}$ iff there exist a strategy
  $\strat$ for $A$ on $V$ and a promise $\prom$ for $A$ on $\strat$
  that are accepting.
\end{lem}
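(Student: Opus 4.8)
The plan is to phrase both directions through the acceptance parity game $\mathcal{G}$ associated with $A$ and $\pairs{T,V}$, so that the lemma follows from the standard equivalence between accepting runs and winning strategies for the protagonist, combined with memoryless determinacy of parity games \cite{Mos84,EJ91}. The positions of $\mathcal{G}$ are pairs $(x,q) \in T \times Q$; from $(x,q)$ the protagonist (Eve) picks a set $S \subseteq D_b^- \times Q$ satisfying $\delta(q,V(x))$ together with, for every graded atom $(\pairs{n},q')$ or $([n],q')$ in $S$, a witnessing set $M \subseteq \mathsf{succ}(x)$ of the prescribed cardinality; the antagonist (Adam) then selects one atom of $S$ and one admissible target node, moving the token to the position $(x',q')$, where $x' = x \cdot c$ for local and inverse moves and $x'$ ranges over $M$ (resp. over $\mathsf{succ}(x)\setminus M$) for $\pairs{n}$ (resp. $[n]$) moves. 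Assigning to a position $(x,q)$ the priority $\mathsf{index}(q)$ turns $\mathcal{G}$ into a parity game. The easy half of the correspondence --- that an accepting run yields a (possibly history-dependent) winning strategy for Eve and conversely --- is standard; all the work lies in extracting the \emph{position-based} objects \strat and \prom, which is exactly what memoryless determinacy buys us.

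For the direction ``$\Leftarrow$'', suppose \strat is a strategy for $A$ on $V$ and \prom a promise for $A$ on \strat that are accepting. I would build a run $\pairs{T_r,r}$ top-down: starting from a root labelled $(\mathsf{root}(T),q_0)$, at a node labelled $(x,q)$ with $q \in \head(\strat(x))$ I create, for each $(q,c,q') \in \strat(x)$, children realizing this atom --- sending $q'$ to $x\cdot c$ when $c \in \{-1,\varepsilon\}$, and to the successors prescribed by \prom when $c$ is $\pairs{n}$ or $[n]$ (conditions (1)--(2) of a promise guarantee that enough such successors exist, and condition (3) guarantees the target state is again a head, so the construction continues). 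Conditions (2)--(3) of a strategy ensure that the created set of children satisfies $\delta(q,V(x))$, so $\pairs{T_r,r}$ is a genuine run. The key observation is that every infinite path of $\pairs{T_r,r}$ projects to a trace induced by \strat and \prom, and conversely; since $\FF_1 \subseteq \cdots \subseteq \FF_k$, the run's parity condition along a path and the index condition along the corresponding trace coincide. As all traces are accepting by hypothesis, all paths of $\pairs{T_r,r}$ satisfy $\FF$, so the run is accepting and $A$ accepts $\pairs{T,V}$.

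For the direction ``$\Rightarrow$'', suppose $A$ accepts $\pairs{T,V}$. Then Eve wins $\mathcal{G}$ from $(\mathsf{root}(T),q_0)$, and by memoryless determinacy she has a \emph{memoryless} winning strategy $\chi$, fixing once and for all, for each reachable position $(x,q)$, a satisfying set $S_{x,q}$ and witness sets for its graded atoms. I read off \strat and \prom from $\chi$: let $\head(\strat(x))$ be the set of states $q$ such that $(x,q)$ is $\chi$-reachable from the initial position, put $(q,c,q') \in \strat(x)$ whenever $(c,q') \in S_{x,q}$, and put $(q,q') \in \prom(y)$ whenever $\chi$'s move at $(x,q)$ commits to carry state $q'$ into the successor $y$ of $x$ through a graded atom. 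Validity of $\chi$ as a game strategy yields conditions (1)--(3) for \strat and (1)--(3) for \prom directly, since reachability of a target position forces the target state into the appropriate head. Finally, every trace induced by this \strat and \prom is a play of $\mathcal{G}$ consistent with $\chi$, hence won by Eve, hence of even index; so \strat and \prom are accepting.

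The main obstacle is the uniformization hidden in ``$\Rightarrow$'': an accepting run may visit the same node $x$ in the same state $q$ along many different histories and make \emph{different} choices of transition and of witnessing successors at each visit, whereas \strat and \prom must commit to a single choice per position $(x,q)$. Memoryless determinacy of parity games is precisely the device that collapses these history-dependent choices into position-based ones, and --- this is what makes the two-way case go through exactly as the one-way case of \cite{KSV02} --- it applies verbatim regardless of the cyclic structure that the inverse ($-1$) moves induce on the game graph over $T \times Q$. The only points needing genuine care are verifying that $\chi$-reachability yields well-defined head sets meeting the closure conditions in clause (3) of \strat and of \prom, and confirming the exact trace/path and index/parity correspondences invoked in both directions.
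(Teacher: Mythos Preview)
Your proposal is correct and follows essentially the same route as the paper: both directions go through the acceptance parity game on $T\times Q$, with memoryless determinacy used to collapse history-dependent choices into the position-based objects \strat\ and \prom\ for ``$\Rightarrow$'', and a direct top-down construction of an accepting run from \strat\ and \prom\ for ``$\Leftarrow$''. The only cosmetic difference is that the paper interleaves Protagonist's choice of the satisfying set and the graded witness sets with Antagonist's challenge, whereas you have Eve commit to both in a single move; memoryless determinacy makes these formulations equivalent.
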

\proof Let $A=\genA$ be a 2GAPT with ${\FF = \{\FF_1, \ldots ,
\FF_k \}}$, and let $\pairs{T,V}$ be the input tree. Suppose first
that $A$ accepts $\pairs{T,V}$.
Consider a two-player game on $\Sigma$-labeled trees, Protagonist vs.\
Antagonist, such that Protagonist is trying to show that $A$ accepts
the tree, and Antagonist is challenging that. A configuration of the
game is a pair in $T \times Q$. The initial configuration is
$(\mathsf{root}(T), q_0)$.  Consider a configuration $(x,q)$.
Protagonist is first to move and chooses a set $P_1=\{(c_1, q_1),
\ldots, (c_m, q_m)\} \subseteq D_b^- \times Q$ that satisfies
$\delta(q, V (x))$. If $\delta(q, V(x)) = \mathsf{false}$, then
Antagonist wins immediately. If $P_1$ is empty, Protagonist wins
immediately.  Antagonist responds by choosing an element $(c_i, q_i)$
of $P_1$. If $c_i \in \{-1,\varepsilon\}$, then the new configuration
is $(x \cdot c_i, q_i)$. If $x \cdot c_i$ is undefined, then
Antagonist wins immediately. If $c_i = \pairs{n}$, Protagonist chooses
a subset $M \subseteq \mathsf{succ}(x)$ of cardinality $n+1$,
Antagonist wins immediately if there is no such subset and otherwise
responds by choosing an element $y$ of $M$. Then, the new configuration
is $(y,q_i)$. If $c_i = [n]$, Protagonist chooses a subset $M
\subseteq \mathsf{succ}(x)$ of cardinality at most $n$, Antagonist
wins immediately if there is no such subset and otherwise responds by
choosing an element $y$ of $\mathsf{succ}(x) \setminus M$. Protagonist
wins immediately if there is no such element. Otherwise, the new
configuration is $(y,q_i)$.

Consider now an infinite game $Y$, that is, an infinite sequence of
immediately successive game configurations. Let $\mathsf{Inf}(Y)$ be
the set of states in $Q$ that occur infinitely many times in $Y$.
Protagonist wins if there is an even $i>0$ for which $\mathsf{Inf}(Y)
\cap \FF_i \neq \emptyset$ and $\mathsf{Inf}(Y) \cap \FF_{j} =
\emptyset$ for all $j < i$. It is not difficult to see that a winning
strategy of Protagonist against Antagonist is essentially a
representation of a run of $A$ on $\pairs{T,V}$ and vice versa. Thus,
such a winning strategy exists iff $A$ accepts this tree. The
described game meets the conditions in \cite{Jut95}. It follows that
if Protagonist has a winning strategy, then it has a memoryless
strategy, i.e., a strategy whose moves do not depend on the history of
the game, but only on the current configuration.

Since we assume that $A$ accepts the input tree $\pairs{T,V}$,
Protagonist has a memoryless winning strategy on $\pairs{T,V}$.  This
winning strategy can be used to build a strategy $\strat$ on $V$ and a
promise $\prom$ on $\strat$ in the following way. For each $x \in T$,
$\strat(x)$ and $\prom(x)$ are the smallest sets such that, for all
configurations $(x,q)$ occurring in Protagonist's winning strategy, if
Protagonist chooses a subset $P_1=\{(c_1, q_1), \ldots, (c_m, q_m)\}$
of $D_b^- \times Q$ in the winning strategy, then we have
\begin{enumerate}[$\bullet$]

\item[(i)] $\{q\} \times P_1 \subseteq \strat(x)$ and

\item[(ii)] for each atom $(c_i, q_i)$ of $P_1$ with $c_i = \pairs{n}$
  (resp.\ $c_i = [n]$) if $M = \{y_1, \ldots, y_{n+1}\}$ (resp.\ $M =
  \{y_1, \ldots, y_n\}$) is the set of successors chosen by
  Protagonist after Antagonist has chosen $(c_i,q_i)$, then we have
  $(q,q_i) \in \prom(y)$ for each $y \in M$ \mbox{(resp.\ for each $y \in
  \mathsf{succ}(x)\setminus M$)}.

\end{enumerate}
Using the definition of games and the construction of \strat, it is
not hard to show that $\strat$ is indeed a strategy on $V$.
Similarly, it is easy to prove that $\prom$ is a promise on
$\strat$.
Finally, it follows from the definition of wins of Protagonist that $\strat$
and $\prom$ are accepting.

Assume now that there exist a strategy $\strat$ on $V$ and a promise
$\prom$ on $\strat$ that are accepting.  Using $\strat$ and
$\prom$, it is straightforward to inductively build an accepting run
$\pairs{T_r,r}$ of $A$ on $\pairs{T,V}$:
%
\begin{enumerate}[$\bullet$]

\item start with introducing the root $z$ of $T_r$, and set
  $r(z)=(\mathsf{root}(T),q_0)$;

\item if $y$ is a leaf in $T_r$ with $r(y)=(x,q)$ and $\delta(q,V(x))
  \neq \mathsf{true}$, then do the following for all $(q,c,q') \in
  \strat(x)$:
\begin{enumerate}[$-$]
\item If $c =-1$ or $c =\varepsilon$, then add a fresh successor
  $y \cdot j$ to $y$ in $T_r$ and set $r(y \cdot j)=(x \cdot c,q')$;

\item If $c = \pairs{n}$ or $c = [n]$, then for each $j \in \Nat$ with
$(q,q') \in \prom(x \cdot j)$, add a fresh successor $y \cdot j'$
to $y$ in $T_r$  and set $r(y \cdot j') = (x \cdot j, q')$.
\end{enumerate}

\end{enumerate}
By Condition~(3) of strategy trees, $y \cdot j$ is defined in the
induction step. Using the properties of strategies on $V$ and of
promises on $\strat$, it is straightforward to show that
$\pairs{T_r,r}$ is a run. It thus remains to prove that
$\pairs{T_r,r}$ is accepting.  Let $\pi$ be a path in $\pairs{T_r,r}$.
By definition of traces induced by $\strat$ and $\prom$, the labeling
of $\pi$ is a trace induced by $\strat$ and $\prom$. Since $\strat$
and $\prom$ are accepting, so is $\pi$.  \endproof
Strategy and promise trees together serve as a witness for acceptance
of an input tree by a 2GAPT that, in contrast to a run $\langle
T_r,r\rangle$, has the same tree structure as the input tree. To
translate 2GAPTs into GNPTs, we still face the problem that traces in
strategies and promises can move both up and down. To restrict
attention to unidirectional paths, we extend to our setting the notion
of annotation as defined in \cite{Var98}. Annotations allow
decomposing a trace of a strategy and a promise into a downward part
and several finite parts that are \emph{detours}, i.e., divert from
the downward trace and come back to the point of diversion.

Let $A = \genA$ be a 2GAPT.
An \emph{annotation tree} for $A$ is a $2^{Q \times \{1, \ldots, k\}
  \times Q}$-labeled tree $\pairs{T,\ann}$.  Intuitively, $(q, i,q')
\in \ann(x)$ means that from node $x$ and state $q$, $A$ can make a
detour and comes back to $x$ with state $q'$ such that $i$ is the
smallest index of all states that have been seen along the detour.
Let $\pairs{T,V}$ be a $\Sigma$-labeled tree, \strat a strategy on
$V$, $\prom$ a promise on $\strat$, and $\pairs{T,\ann}$ an annotation
tree. We call $\ann$ an annotation for $A$ on \strat and \prom if for
every node $x \in T$,
the following conditions are satisfied:
\begin{enumerate}[(1)]{\mathsurround=1 pt
\item If $(q,\varepsilon, q') \in \strat(x)$ then $(q,
  \mathsf{index}(q'),q') \in \ann(x)$;

\item if $(q, j', q') \in \ann(x)$ and $(q', j'', q'') \in \ann(x)$,
then $(q, \mathsf{min}(j',j''), q'') \in \ann(x)$;

\item if (i)~$x = y \cdot i$, (ii)~$(q, -1, q') \in \strat(x)$,
  (iii)~$(q', j, q'') \in \ann(y)$ or $q'=q''$ with
  $\mathsf{index}(q')=j$, (iv)~$(q'', \pairs{n},q''')\in\strat(y)$ or
  $(q'', [n],q''')\in\strat(y)$, and (v)~$(q'', q''') \in \prom(x)$,
  then ${(q, \mathsf{min}(\mathsf{index}(q'), j, \mathsf{index}(q''')
    ), \linebreak q''')\in \ann(x)}$;

\item if (i)~$y = x \cdot i$, (ii)~$(q, \pairs{n},q')\in\strat(x)$ or $(q,
  [n],q')\in\strat(x)$, (iii)~$(q, q') \in \prom(y)$, (iv)~$(q',
  j, q'') \in \ann(y)$ or $q'=q''$ with $\mathsf{index}(q')=j$, and
  (v)~$(q'', -1, q''') \in \strat(y)$, then $(q,
  \mathsf{min}(\mathsf{index}(q'), j, \mathsf{index}(q''')),q''') \in
  \ann(x)$.
}
\end{enumerate}

\begin{exa}
  Reconsider the 2GAPT $A=\genA$ from Example~\ref{exa1}, as well as
  the fragments of the input tree $\langle T,V \rangle$ and the
  strategy $\strat$ on $\langle T,V \rangle$ depicted in
  Figure~\ref{figure:strategy}. Assume that there is a promise \prom on
  $\strat$ with $(q_0,q_1) \in \prom(11)$ telling the automaton that
  if it executes $(\langle0\rangle, q_1)$ in state $q_0$ at node 1, it
  should send a copy in state $q_1$ to node 11. Using $\strat(1)$ and
  Condition~(4) of annotations, we can now deduce that, in any
  annotation $\ann$ on $\strat$ and $\prom$, we have $(q_0,j,q_2) \in \ann(1)$
  with $j$ the minimum of the indexes of $q_0$, $q_1$, and $q_2$.
\end{exa}

Given an annotation tree $\pairs{T,\ann}$ on $\strat$ and
$\prom$, a \emph{downward trace} $\pi$ induced by $\strat$,
$\prom$, and $\ann$ is a sequence $(x_0, q_0, t_0), (x_1, q_1, t_1),
\ldots$ of triples, where $x_0=\mathsf{root}(T)$, $q_0$ is the initial
state of $A$, and for each $i \geq
0$, %
one of the following holds:
\begin{itemize}

\item[\llap{($\dagger$)}] $t_i$ is $(q_i, c, q_{i+1}) \in \strat(x_i)$ for some $c \in
  [[b]] \cup [\pairs{b}]$, $(q_i, q_{i+1}) \in \prom(x_i \cdot d)$ for
  some $d \in \Nat$, and $x_{i+1} = x_i \cdot d$

\item[\llap{($\ddagger$)}] $t_i$ is $(q_i, d, q_{i+1}) \in \ann(x_i)$ for some $d \in \{1,
  \ldots, k\}$, and $x_{i+1} = x_i$.

\end{itemize}
In the first case, $\mathsf{index}(t_i)$ is the minimal $j$ such that
$q_{i+1} \in \FF_j$ and in the second case, $\mathsf{index}(t_i)=d$.
For a downward trace $\pi$, $\mathsf{index}(\pi)$ is the minimal
$\mathsf{index}(t_i)$ for all $t_i$ occurring infinitely often in
$\pi$. Note that a downward trace $\pi$ can loop indefinitely at a
node $x \in T$ when, from some point $i \geq 0$ on, all the $t_j$, $j
\geq i$, are elements of $\ann$ (and all the $x_j$ are $x$).  We
say that a downward trace $\pi$ \emph{satisfies} $\FF= \{
\FF_1,\dots,\FF_k \}$ if $\mathsf{index}(\pi)$ is even. Given a
strategy \strat, a promise \prom on \strat, an annotation \ann on
\strat and \prom, we say that $\ann$ is \emph{accepting} if all
downward traces induced by $\strat$, $\prom$, and $\ann$ satisfy
$\FF$.
\begin{lem}\label{annotaion must be accepting}
  A 2GAPT $A$ accepts $\pairs{T, V}$ iff there exist a strategy
  $\strat$ for $A$ on $V$, a promise $\prom$ for $A$ on
  $\strat$, and an annotation $\ann$ for $A$ on
  $\strat$ and $\prom$ such that $\ann$ is accepting.
\end{lem}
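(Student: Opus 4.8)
The statement to prove is the equivalence: $A$ accepts $\pairs{T,V}$ iff there exist a strategy $\strat$, a promise $\prom$ on $\strat$, and an accepting annotation $\ann$ on $\strat$ and $\prom$. By Lemma~\ref{good for accepting}, acceptance by $A$ is already equivalent to the existence of an \emph{accepting} strategy--promise pair $(\strat,\prom)$, where ``accepting'' means that all (possibly bidirectional) traces induced by $\strat$ and $\prom$ satisfy $\FF$. So the real work is to show that, given a fixed strategy $\strat$ and promise $\prom$ (which always admit \emph{some} annotation, built as the least fixpoint of the four closure rules), the bidirectional traces all satisfy $\FF$ \emph{if and only if} all downward traces satisfy $\FF$ with respect to the canonical annotation. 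The plan is therefore to fix $\strat$ and $\prom$, let $\ann$ be the unique smallest annotation satisfying conditions (1)--(4), and prove:
\[
\text{all traces induced by }\strat,\prom\text{ satisfy }\FF
\quad\Longleftrightarrow\quad
\text{all downward traces induced by }\strat,\prom,\ann\text{ satisfy }\FF.
\]
Combined with Lemma~\ref{good for accepting}, this gives the theorem.

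The key technical device is that $\ann$ records, for each node $x$ and each pair $(q,q')$, whether $A$ can leave $x$ in state $q$, make a \emph{detour} that returns to $x$ in state $q'$, and the minimal index seen along that detour. The first direction ($\Rightarrow$) I would prove contrapositively: given a bidirectional trace $\pi$ that violates $\FF$ (its minimal infinitely-recurring index is odd), I extract a downward trace that also violates $\FF$. The idea is to look at the sequence of nodes visited by $\pi$ and identify the ``downward skeleton'': the nodes that $\pi$ visits and never returns below. Each maximal excursion of $\pi$ that departs from such a node and eventually returns to it is exactly a detour, and conditions (1)--(4) guarantee that $\ann$ contains a triple $(q,j,q')$ summarizing it, where $j$ is the minimum index along the excursion. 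Splicing these $\ann$-triples in place of the excursions yields a downward trace whose set of infinitely-recurring indices has the same minimum as $\pi$, hence still odd. For the converse ($\Leftarrow$), given a downward trace $\pi'$ violating $\FF$, I expand each $\ann$-step $(q_i,d,q_{i+1})\in\ann(x_i)$ back into an actual bidirectional excursion realizing that annotation --- such an excursion exists by induction on the derivation of the triple in the closure, and it passes through a state of index exactly $d$ --- thereby reconstructing a bidirectional trace with the same minimal recurring index.

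The main obstacle, and the part requiring genuine care, is the bookkeeping of indices under concatenation of detours: conditions (2)--(4) take minima of indices precisely so that the annotation faithfully records the smallest index encountered, but I must verify that this minimum-tracking is both sound (every $\ann$-triple is realized by a real excursion achieving that minimum) and complete (every returning excursion is summarized by a triple), and crucially that when I splice detours in or out, an index that occurs \emph{infinitely often} on one side corresponds to an index occurring infinitely often on the other. The subtle case is a downward trace that loops forever at a single node $x$ via $\ann$-steps ($\ddagger$): this corresponds to a bidirectional trace that keeps returning to $x$ through infinitely many successive detours, and I must check that the minimal index of the infinite family of detours matches $\mathsf{index}(\pi')$. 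A clean induction on the closure rules (1)--(4) establishing a \emph{realizability lemma} --- ``$(q,j,q')\in\ann(x)$ iff there is a finite bidirectional trace fragment from $(x,q)$ to $(x,q')$ staying weakly below $x$, with minimal index $j$'' --- will do most of the heavy lifting and let both directions go through.
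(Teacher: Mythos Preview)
Your plan is essentially the paper's own argument: invoke Lemma~\ref{good for accepting}, take the minimal annotation for the forward direction, and translate between bidirectional traces and downward traces by compressing detours into $\ann$-triples one way and expanding $\ann$-steps into realizing detours the other way. The paper carries this out tersely (leaving the expansion and compression ``to the reader''), while you propose to make it explicit via a realizability lemma; that is a reasonable organizational choice.

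Two concrete corrections. First, your realizability lemma is misstated: the clause ``staying weakly below $x$'' is wrong. Annotation rule~(3) records detours that go \emph{up} from $x$ to its parent $y$, perform a detour at $y$ (which may recurse further up), and then return to $x$. So the detours witnessed by $\ann(x)$ are arbitrary finite excursions from $(x,q)$ back to $(x,q')$, with no constraint on the depth visited. Drop the ``weakly below'' restriction; the induction on the closure rules (1)--(4) still goes through and is in fact the natural way to see that upward excursions are covered. Second, the converse of the lemma assumes only that \emph{some} annotation $\ann$ is accepting, not necessarily the minimal one. Your equivalence is phrased for the minimal $\ann$. This is easily bridged---either observe that the compression of a trace into a downward trace uses only the closure properties (1)--(4), which \emph{any} annotation satisfies, so the resulting downward trace lives in the given $\ann$; or note that the minimal annotation is contained in every annotation, hence has fewer downward traces, so if the given $\ann$ is accepting then so is the minimal one. (As a minor aside, your ``contrapositive'' labeling of the two directions is swapped: ``bad bidirectional trace $\Rightarrow$ bad downward trace'' is the contrapositive of $(\Leftarrow)$, not of $(\Rightarrow)$.)
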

\proof Suppose first that $A$ accepts $\pairs{T, V}$. By
Lemma~\ref{good for accepting}, there is a strategy $\strat$ on $V$
and a promise $\prom$ on $\strat$ which are accepting.  By definition
of annotations on \strat and \prom, it is obvious that there exists a
unique smallest annotation $\ann$ on $\strat$ and $\prom$ in the sense
that, for each node $x$ in $T$ and each annotation $\ann'$, we have
$\ann(x) \subseteq \ann'(x)$. We show that $\ann$ is accepting. Let
$\pi = (x_0, q_0, t_0), (x_1, q_1, t_1), \ldots$ be a downward trace
induced by $\strat$, $\prom$, and $\ann$. It is not hard to construct
a trace $\pi'=(x'_0,q'_0),(x'_1,q'_1),\dots$ induced by $\strat$ and
$\prom$ that is accepting iff $\pi$ is: first expand $\pi$ by
replacing elements in $\pi$ of the form $(\ddagger)$ with the
detour asserted by $\ann$, and then project $\pi$ on the first two
components of its elements. Details are left to the reader.

Conversely, suppose that there exist a strategy $\strat$ on $V$, a promise
$\prom$ on $\strat$, and an annotation $\ann$ on $\strat$ and $\prom$ such that
$\ann$ is accepting.  By Lemma~\ref{good for
  accepting}, it suffices to show that $\strat$ and $\prom$ are
accepting.  Let $\pi=(x_0,q_0),(x_1,q_1),\dots$ be a trace induced by
$\strat$ and $\prom$. It is possible to construct a downwards trace
$\pi'$ induced by $\strat$, $\prom$, and $\ann$ that is accepting iff
$\pi$ is: whenever the step from $(x_i,q_i)$ to $(x_{i+1},q_{i+1})$ is
such that $x_{i+1}=x_i \cdot c$ for some $c \in \Nat$, the definition
of traces induced by \strat and \prom ensures that there is a $t_i =
(q_i, c, q_{i+1}) \in \strat(x_i)$ such that the conditions from
$(\dagger)$ are satisfied; otherwise, we consider the maximal
subsequence $(x_i,q_i),\dots,(x_j,q_j)$ of $\pi$ such that $x_j = x_i
\cdot c$ for some $c \in \Nat$, and replace it with
$(x_i,q_i),(x_j,q_j)$. By definition of annotations, there is
$t_i=(q_i, d, q_{i+1}) \in \ann(x_i)$ such that the conditions from
($\ddagger$) are satisfied. Again, we leave details to the reader.
\endproof

In the following, we combine the input tree, the strategy, the
promise, and the annotation into one tree $\pairs{T, (V, \strat,
  \prom, \ann)}$. The simplest approach to representing the strategy
as part of the input tree is to additionally label the nodes of the input tree
with an element of $2^{Q \times D_b^- \times Q}$. However, we can achieve
better bounds if we represent strategies more compactly. Indeed, it suffices to
store for every pair of states $q,q' \in Q$, at most four different tuples
$(q,c,q')$: two for $c \in \{ \varepsilon, -1 \}$ and two for the minimal $n$
and maximal $n'$ such that $(q,[n],q'),(q,\pairs{n'},q') \in \strat(y)$.  Call
the set of all representations of strategies $L_\strat$. We can now define the
alphabet of the combined trees. Given an alphabet $\Sigma$ for the input tree,
let $\Sigma'$ denote the extended signature for the combined trees, i.e.,
$\Sigma'=\Sigma \times L_\strat \times 2^{Q \times Q} \times 2^{Q \times
\{1,\ldots, k\}\times Q}$.
\begin{thm}\label{translation}
  Let $A$ be a 2GAPT running on $\Sigma$-labeled trees with $n$
  states, index $k$ and counting bound $b$. There exists a GNPT $A'$
  running on $\Sigma'$-labeled trees with $2^{\mathcal{O}(k n^2
    \cdot \log k \cdot \log b^2)}$ states, index $nk$, and
  $b$-counting constraints such that $A'$ accepts a tree iff $A$
  accepts its projection on $\Sigma$.
\end{thm}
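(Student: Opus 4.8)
The plan is to build $A'$ so that it accepts a $\Sigma'$-labeled tree $\pairs{T,(V,\strat,\prom,\ann)}$ exactly when $\strat$ is a strategy for $A$ on $V$, $\prom$ is a promise for $A$ on $\strat$, $\ann$ is an annotation for $A$ on $\strat$ and $\prom$, and $\ann$ is accepting. By Lemma~\ref{annotaion must be accepting}, projecting such accepted trees onto their $\Sigma$-component yields precisely the trees accepted by $A$, which is exactly the equivalence in the statement. Thus the entire task reduces to designing a GNPT that simultaneously performs four checks: (S)~$\strat$ is a strategy, (P)~$\prom$ is a promise, (A)~$\ann$ is an annotation, and (Acc)~every downward trace induced by $\strat$, $\prom$, and $\ann$ satisfies $\FF$.

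Next I would separate the \emph{local consistency} checks (S), (P), (A) from the \emph{global} acceptance check (Acc). The former I would enforce by a \safety GNPT $A'_s$ and the latter by a \forrall parity GNPT $A'_f$, taking the final automaton to be $A'=A'_f\cap A'_s$ via Lemma~\ref{intersection}, which preserves the parity index of $A'_f$ and the counting bound of $A'_s$. Since a GNPT reads a node and constrains only the labels of its children, every condition that mentions the predecessor $x\cdot(-1)$ (clause~(3) of strategies for $c=-1$, and clauses~(3)--(4) of annotations) is re-expressed as a constraint imposed while the automaton sits at the parent and inspects its children; all remaining clauses are local to a single node. The counting inherent in promises is exactly what $b$-bounds express: clause~(1) of promises becomes a bound $(>,n)$ with respect to the set of child labels carrying $(q,q')\in\prom$, and clause~(2) becomes a bound $(\leq,n)$ with respect to the child labels \emph{not} carrying it. The Boolean closure conditions of strategies and annotations (including clauses~(1),(2) of annotations, which are internal to $\ann(x)$) are simple propositional constraints of the form $(\theta,(\leq,0))$. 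Hence $A'_s$ has counting bound $b$ and only a constant number of states.

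A downward trace that from some point on takes only $\ann$-steps loops forever inside a single node $x$, and such a trace is accepting iff the cycle it follows in the finite graph $\ann(x)$ has even minimal index, a condition depending only on $\ann(x)$. I would fold this into $A'_s$ as an additional local check that no $\ann(x)$ contains a cycle of odd minimal index, so that $A'_f$ only has to certify the downward traces that descend infinitely often. Together, the within-node loops handled by $A'_s$ and the descending traces handled by $A'_f$ exhaust all downward traces, which is what (Acc) requires.

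The delicate part is $A'_f$, and this is the step I expect to be the main obstacle. Along any single branch $x_0,x_1,\dots$ of $T$, the descending traces form a finitely-branching structure: from state $q$ at $x_i$ a trace runs through a finite detour inside $\ann(x_i)$ and then takes one descending $\strat$/$\prom$-step to a state $q'$ at $x_{i+1}$, the index contributed by this segment being the minimum of the detour indices and $\mathsf{index}(q')$. Requiring \emph{all} such descending traces to satisfy $\FF$ is a universal parity condition over this structure, so $A'_f$ must track all live traces simultaneously rather than guess one. I would realize this by determinizing the universal condition in the style of an index-appearance record: a state of $A'_f$ is a ranking assigning to each $q\in Q$ reachable by a live trace the minimal index owed to it since its last reset, and the deterministic transition reads the local labels $\strat(x_i)$, $\prom$, and $\ann(x_i)$ to update these ranks by the segment minima above. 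Designing the reset and priority mechanism so that the induced parity index is $nk$, and proving that a branch is accepted iff every descending trace along it satisfies $\FF$, is the crux of the argument; the number of such rankings, together with the information about $\strat$ and the indices in $\{1,\dots,k\}$ they encode, yields the stated bound $2^{\mathcal{O}(k n^2\cdot\log k\cdot\log b^2)}$ on the states of $A'_f$, and hence of $A'$.
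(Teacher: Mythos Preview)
Your overall architecture is right and matches the paper: use Lemma~\ref{annotaion must be accepting}, check (S), (P), (A) by a \safety GNPT, check (Acc) by a \forrall GNPT, and combine via Lemma~\ref{intersection}. Two points deserve correction or comparison.

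First, the claim that $A'_s$ has ``only a constant number of states'' is wrong, because GNPT counting constraints range over the \emph{states} of the children (formulas in $B(Y)$ with $Q\subseteq 2^Y$), not over their input labels. To count children with $(q,q')\in\prom$, the child's state must itself carry the relevant bit of $\prom$; likewise, to enforce clauses~(3)--(4) of annotations and clause~(3) of strategies for $c=-1$ while sitting at the parent, the parent's $\strat$- and $\ann$-labels must be memorized in the state and passed down. The paper does exactly this: its \safety component $A_1$ uses states in $2^{(Q\times Q)\cup Q}$ (obligations for $\prom$ plus the parent's $\head(\strat)$), and its \safety/\forrall component $A_2$ memorizes the parent's compact strategy in $L_\strat$ and the parent's annotation, which is where the $\log b^2$ and the $kn^2$ in the state bound actually originate. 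Your sketch omits this, so as written the bound is unjustified; once you add the memorization, you recover $2^{\mathcal{O}(kn^2\cdot\log b^2)}$ \safety states, and the final product still meets the stated bound.

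Second, for (Acc) the paper takes a different route from your index-appearance-record idea. It first builds a \emph{nondeterministic} parity word automaton $U$ over $\Sigma'$ with states $Q\times Q\times\{1,\dots,k\}\cup\{q_{acc}\}$ that, along a branch, \emph{guesses} a single bad downward trace (the $q_{acc}$ sink absorbs the odd-cycle-in-$\ann(x)$ case you folded into $A'_s$); it then co-determinizes $U$ via Safra and expands the resulting deterministic automaton to a \forrall GNPT $A_3$ with $(nk)^{nk}$ states and index $nk$. This buys an off-the-shelf bound on both states and index without inventing a bespoke reset/priority scheme. Your direct ranking approach could in principle work, but the part you flag as ``the crux'' --- getting a deterministic parity automaton of index exactly $nk$ that accepts a branch iff \emph{all} descending traces satisfy $\FF$ --- is precisely what Safra's construction provides for free once the problem is phrased as complementing $U$; absent a concrete ranking construction with a proved index bound, that step remains a gap in your proposal.
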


\proof Let $A = \genA$ with $\FF = \{ \FF_1,\ldots, \FF_k \}$. The
automaton $A'$ is the intersection of three automata $A_1$, $A_2$, and
$A_3$. The automaton $A_1$ is a \safety GNPT, and it accepts a tree
$\pairs{T, (V, \strat, \prom, \ann)}$ iff \strat is a strategy on $V$
and \prom is a promise on $\strat$. It is similar to the corresponding
automaton in \cite{KSV02}, but additionally has to take into account
the capability of 2GAPTs to travel upwards. The state set of $A_1$ is
$Q_1:=2^{(Q \times Q) \cup Q}$. Let $P \in Q_1$.  Intuitively,
\begin{enumerate}[(1)]

\item[(a)] pairs $(q,q') \in P$ represent obligations for \prom in the
  sense that if a node $x$ of an input tree receives state $P$ in a
  run of $A$, then $(q,q')$ is obliged to be in $\prom(x)$;

\item[(b)] states $q \in P$ are used to memorize $\head(\strat(y))$ of
  the predecessor $y$ of $x$.

\end{enumerate}
This behaviour is easily implemented via $A_1$'s transition relation.
Using $\mathsf{false}$ in the transition function of $A_1$ and thus
ensuring that the automaton blocks when encountering an undesirable
situation, it is easy to enforce Conditions~(2) to~(3) of strategies,
and Condition~(3) of promises. The initial state of $A_1$ is
$\{(q_0,q_0)\}$, which together with Condition~(3) of promises
enforces Condition~(1) of strategies. It thus remains to treat
Conditions~(1) and~(2) of promises. This is again straightforward
using the transition function. For example, if $(q,\langle n
\rangle,q') \in \strat(x)$, then we can use the conjunct $\langle
(q,q'), ({>}, n)\rangle$ in the transition. Details of the definition
of $A_1$ are left to the reader. Clearly, the automaton $A_1$ has
$2^{\mathcal{O}(n^2)}$ states and counting bound $b$.

The remaining automata $A_2$ and $A_3$ do not rely on the gradedness
of GNPTs. The automaton $A_2$ is both a \safety and \forrall GNPT. It
accepts a tree $\pairs{T, (V, \strat, \prom, \ann)}$ iff $\ann$ is an
annotation.  More precisely, $A_2$ checks that all conditions of
annotations hold for each node $x$ of the input tree.  The first two
conditions are checked locally by analyzing the labels $\strat(x)$ and
$\ann(x)$.  The last two conditions require to analyze $\prom(x)$,
$\strat(y)$, and $\ann(y)$, where $y$ is the parent of $x$. To access
$\strat(y) \subseteq {Q \times D_b^- \times Q}$ and $\ann(y) \subseteq
{Q \times \{1,\ldots, k\}\times Q} $ while processing $x$, $A_2$ must
memorize these two sets in its states.  Regarding $\strat(y)$, it
suffices to memorize the representation from $L_\strat$.  The number
of such representations is $(4b^2)^{n^2}$, which is bounded by
$2^{\mathcal{O}(n^2 \cdot \log b^2)}$.  There are $2^{k n^2}$
different annotations, and thus the overall number of states of $A_2$
is bounded by $2^{\mathcal{O}(kn^2 \cdot \log b^2)}$.

The automaton $A_3$ is a \forrall GNPT, and it accepts a tree
$\pairs{T, (V, \strat, \prom, \ann)}$ iff $\ann$ is accepting.
By Lemma~\ref{annotaion must be accepting}, it thus follows that $A'$
accepts $\pairs{T, (V, \strat, \prom, \ann)}$ iff $A$ accepts
$\pairs{T, V}$.
The automaton $A_3$ extends the automaton considered in \cite{Var98}
by taking into account promise trees and graded moves in strategies. %

We construct $A_3$ in several steps. We first define a
nondeterministic parity word automaton (NPW) $U$ over $\Sigma'$.
An input word to $U$ corresponds to a path in an input tree to $A'$.
We build $U$ such that it accepts an input word/path if this path
gives rise to a downward trace that violates the acceptance condition
$\FF$ of $A$.  An NPW is a tuple $\pairs{\Sigma, S, M, s_0, \FF}$,
where $\Sigma$ is the input alphabet, $S$ is the set of states, $M:S
\rightarrow 2^S$ is the transition function, $s_0 \in S$ is the
initial state, and $\FF = \{ \FF_1, \FF_2\dots,\FF_k \}$ is a
parity acceptance condition.  Given a word $w=a_0 a_1 \ldots \in
\Sigma^\omega$, a run $r=q_0 q_1 \cdots$ of $U$ on $w$ is such that
$q_0=s_0$ and $q_{i+1} \in M(q_i, a_i)$ for all $i\geq 0$.

We define $U=\pairs{\Sigma', S, M, s_0, \FF'}$ such that $S = (Q
\times Q \times \{1,\ldots, k\}) \cup \{q_{acc}\}$. Intuitively, a run
of $U$ describes a downward trace induced by \strat, \prom, and \ann
on the input path. Suppose that $x$ is the $i$-th node in an input
path to $U$, $r$ is a run of $U$ on that path, and the $i$-th state in
$r$ is $\pairs{q, q_{prev}, j}$. This means that $r$ describes a trace
in which the state of $A$ on the node $x$ is $q$, while the previous
state at the parent $y$ of $x$ was $q_{prev}$.  Thus, $A$ has executed
a transition $(\pairs{b},q)$ or $([b],q)$ to reach state $q$ at $x$.
For reaching the state $q_{prev}$ at $y$, $A$ may or may not have
performed a detour at $y$ as described by $\ann$.  The $j$ in
$\pairs{q, q_{prev}, j}$ is the minimum index of $q$ and any state
encountered on this detour (if any).

We now define the transition function $M$ formally. To this end, let $\pairs{q,
  q_{prev}, j} \in S$ and let $\sigma=(V(x),\strat(x),\prom(x),\ann(x))$.
To define $M(\pairs{q, q_{prev}, j},\sigma)$, we distinguish between
three cases:
\begin{enumerate}[(1)]

\item if $(q_{prev},q) \not \in \prom(x)$, then $M(\pairs{q, q_{prev},
    j}, \sigma)= \emptyset$;

\item otherwise and if $H= \{c: (q,c,q) \in \ann(x)\}$ is non-empty
  and some member of $H$ has an odd index, set $M(\pairs{q, q_{prev},
    j}, \sigma)= \{q_{acc}\}$;

\item if neither (1) nor (2) apply, then we put $\pairs{q', q_{prev}', j'}
  \in M(\pairs{q, q_{prev}, j},\sigma)$ iff

\begin{enumerate}[$\bullet$]
\item $(q,c,q')\in \strat(x)$, with $c \in \ppairs{b} \cup [[b]]$,
$q_{prev}'=q$ and $j'=\mathsf{index}(q')$; or

\item
$(q,d,q_{prev}')\in \ann(x)$ for some $d$,
$(q_{prev}',c,q')\in \strat(x)$ for some $c \in \ppairs{b} \cup [[b]]$,
and $j'=\mathsf{min}(d,\mathsf{index}(q'))$.

\end{enumerate}
\end{enumerate}
In addition, $M(q_{acc},\sigma)=\{q_{acc}\}$, for all $\sigma \in
\Sigma'$. For (1), note that if $(q_{prev},q) \not \in \prom(x)$, then
\prom does not permit downwards traces in which $A$ switches from
$q_{prev}$ to $q$ when moving from the parent of $x$ to $x$. Thus, the
current run of $U$ does not correspond to a downward trace, and $U$
does not accept. The purpose of (2) is to check for traces that ``get
caught'' at a node.

The initial state $s_0$ of $U$ is defined as $\pairs{q_0, q_0,
  \ell}$, where $\ell$ is such that $q_0 \in \FF_\ell$.  Note that the
choice of the second element is arbitrary, as the local promise at the
root of the input tree is irrelevant.
Finally, the parity condition is $\FF' = \{ \FF_1', \FF_2', \ldots, \FF'_{k+1}
\}$, where $\FF'_1=\emptyset$, $\FF'_2= Q \times Q \times \{1\} \cup
\{q_{acc}\}$ and for each $\ell$ with $2 < \ell \leq k+1$, we have $\FF'_\ell =
Q \times Q \times \{\ell-1\}$. Thus, $U$ accepts a word if this word
corresponds to a path of the input tree on which there is a non-accepting
trace.

In order to get $A_3$, we co-determinize the NPW $U$ and expand it to
a tree automaton, i.e., a \forrall GNPT on $\Sigma'$. That is, we
first construct a deterministic parity word automaton $\widetilde{U}$
that complements $U$, and then replace a transition
$\widetilde{M}(q,\sigma) = q'$ in $\widetilde{U}$ by a transition
$$
M_{t}(q , \sigma) = \{\pairs{(\neg \theta_{q'}),(\leq,0)}\}
$$
in $A_3$ where the states of $\widetilde{U}$ are encoded by some set
$Y$ of variables and for every state $q'$, the formula $\theta_{q'}
\in B(Y)$ holds only in the subset of $Y$ that encodes $q'$. By
\cite{Saf89,Tho97}, the automaton $\widetilde{U}$ has $(nk)^{nk} \leq 2^{nk \cdot \log nk}$
states and index $nk$, thus so does $A_3$. 

By Lemma~\ref{intersectiontwo}, we can intersect the two \safety
automata $A_1$ and $A_2$ obtaining a \safety automaton with
$2^{\mathcal{O}(kn^2 \cdot \log b^2)}$ states and counting bound $b$.
Moreover, by Lemma~\ref{intersection}, the obtained \safety automaton
can be intersected with the \forrall automaton $A_3$ yielding the
desired GNPT $A'$ with $2^{\mathcal{O}(k n^2 \cdot \log k \cdot \log
  b^2)}$ states, counting bound $b$, and index $nk$.
\endproof

\subsection{Emptiness of GNPTs}

By extending results of \cite{KV98,KVW00,KSV02}, we provide an
algorithm for deciding emptiness of GNPTs. The general idea is to
translate GNPTs into alternating (non-graded) parity automata on
words, and then to use an existing algorithm from \cite{KV98} for
deciding emptiness of the latter.

A \emph{singleton-alphabet GNPT on full $\omega$-trees
  ($\omega$-1GNPT)} is a GNPT that uses a singleton alphabet $\{a\}$
and admits only a single input tree $\pairs{T_\omega,V}$, where
$T_\omega$ is the full $\omega$-tree $\Nat^+$ and $V$ labels every
node with the only symbol $a$.  Our first aim is to show that
every GNPT can be converted into an $\omega$-1GNPT such that
(non)emptiness is preserved. We first convert to a 1GNPT, which is a
single-alphabet GNPT.
\begin{lem}
  Let $A=\pairs{\Sigma,b,Q,\delta,q_0,\FF}$ be a GNPT. Then there is a
  1GNPT $A'=\pairs{\{a\},b,Q',\delta',q'_0,\FF'}$ with
  $L(A)=\emptyset$ iff $L(A')=\emptyset$ and $|Q'| \leq |Q| \times
  |\Sigma|+1$.%
\end{lem}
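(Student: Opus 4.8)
The plan is to \emph{internalize the alphabet into the state space}, so that the single-letter input tree carries no information and a run of $A'$ is forced to \emph{guess} a $\Sigma$-labeling while simulating a run of $A$ on it. Since (non)emptiness of $A$ just asks for the existence of \emph{some} accepted $\langle T,V\rangle$ together with an accepting run, this guessing preserves exactly what we need.

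Concretely, write $A=\langle \Sigma,b,Q,\delta,q_0,\mathcal{F}\rangle$ with $Q\subseteq 2^Y$ and $\mathcal{F}=\{\mathcal{F}_1,\ldots,\mathcal{F}_k\}$. I introduce fresh variables $\{y_\sigma\mid\sigma\in\Sigma\}$ together with one more fresh variable $y_0$, set $Y'=Y\uplus\{y_\sigma\mid\sigma\in\Sigma\}\uplus\{y_0\}$, encode the pair $(q,\sigma)$ as the state $q\cup\{y_\sigma\}\in 2^{Y'}$, and let $q'_0:=\{y_0\}$. Thus $Q'=\{q\cup\{y_\sigma\}\mid q\in Q,\ \sigma\in\Sigma\}\cup\{q'_0\}$, which already gives $|Q'|=|Q|\cdot|\Sigma|+1$. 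For a ``real'' state I reuse the original constraint, $\delta'(q\cup\{y_\sigma\},a):=\delta(q,\sigma)$, read over $Y'\supseteq Y$. The point is that any $\theta\in B(Y)$ mentions only $Y$-variables, so a child labeled $q'\cup\{y_{\sigma'}\}$ satisfies $\theta$ iff its $Y$-projection $q'$ does; hence a word of $A'$-successor labels satisfies $\delta(q,\sigma)$ exactly when the word of their $Y$-projections (the corresponding $A$-states) does. I lift the parity condition componentwise, putting $q\cup\{y_\sigma\}\in\mathcal{F}'_i$ iff $q\in\mathcal{F}_i$ (and $q'_0\in\mathcal{F}'_k=Q'$); as the minimal index of an infinitely recurring state is preserved under projection, acceptance is preserved along paths.

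The delicate point, and the reason for the ``$+1$'' state, is the root. In $A$ the root $z$ is constrained by $\delta(q_0,V(z))$, so the run must already ``know'' the guessed root letter; but the initial state is unique and a $b$-counting constraint is a pure conjunction of pairs $\langle\theta,\xi\rangle$, so a single transition out of one initial state cannot disjunctively guess that letter. I therefore push the guess one level down by prepending a fresh super-root in state $q'_0$, whose transition forces all its children into $S=\{q_0\cup\{y_\sigma\}\mid\sigma\in\Sigma\}$ and forces at least one child, e.g.
$$\delta'(q'_0,a)=\{\langle\theta_S,(>,0)\rangle,\ \langle\neg\theta_S,(\leq,0)\rangle\},$$
where $\theta_S\in B(Y')$ holds exactly on states whose $Y$-part equals $q_0$. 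The guessed letter is thus recorded in the chosen child $q_0\cup\{y_\sigma\}$, where the genuine constraint $\delta(q_0,\sigma)$ is then verified. To keep the simulation clean I also forbid $q'_0$ from recurring, by adding the conjunct $\langle\theta_{q'_0},(\leq,0)\rangle$ to every real transition, so that $q'_0$ labels only the super-root; the ``at least one child'' conjunct is essential, since without it the single-node run of the super-root would be vacuously accepting and make $A'$ spuriously non-empty.

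It then remains to verify the emptiness equivalence by the evident run correspondence. From an accepting run $r$ of $A$ on $\langle T,V\rangle$ I build a run of $A'$ by prepending a super-root and relabeling each $x$ with $(r(x),V(x))$; all constraints hold by the projection property, and the run is accepting because $q'_0$ occurs only once and indices are preserved. Conversely, from an accepting run of $A'$ I pick one child of the super-root, take the subtree below it, and read off $r(x)$ and $V(x)$ as the two components of each label; the forbidding conjunct guarantees every node of this subtree carries a real state, so this yields an accepting run of $A$ on a genuine $\Sigma$-labeled tree, whence $L(A)\neq\emptyset$. Together with $|Q'|=|Q|\cdot|\Sigma|+1$ this gives the lemma. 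I expect the main obstacle to be precisely the conjunctive nature of counting constraints at the root, which forces the super-root device rather than a direct relabeling of the unique initial state.
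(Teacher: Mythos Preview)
Your proposal is correct and is essentially the same construction as the paper's: internalize the letter into the state, add a fresh super-root state that forces its children to carry the initial state $q_0$ together with some guessed letter, and append a $(\leq,0)$ conjunct to the ordinary transitions so the super-root state cannot recur. The only cosmetic differences are that the paper encodes $\Sigma$ as $2^Z$ rather than via one fresh variable per letter, and that the paper additionally imposes $\langle\mathsf{true},(\leq,1)\rangle$ at the super-root to force \emph{exactly one} child there, whereas you allow several and then pick one subtree; both variants work.
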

\begin{proof}
  Let $Q \subseteq 2^Y$.  We may assume w.l.o.g.\ that $\Sigma
  \subseteq 2^Z$ for some set $Z$ with $Z \cap Y = \emptyset$.
  Now define the components of $A'$ as follows:
  \begin{enumerate}[$\bullet$]

  \item $Q'=\{\{ s \}\} \cup \{ q \cup \sigma, \mid q \in Q \wedge \sigma \in \Sigma \}
    \subseteq 2^{Y'}$, where $Y'=Y \uplus Z \uplus \{ s \}$;

  \item $q'_0=\{s\}$;

  \item $\delta'(\{s\},a)=\{ \pairs{\mathsf{true},({\leq},1)},
    \pairs{\bigwedge_{y \in q_0} y \wedge \bigwedge_{y \in Y \setminus
        q_0} \neg y, ({>},0)}, \pairs{s,({\leq},0)} \}$;

  \item $\delta'(q,a)=\delta(q \cap Y,q \cap Z) \cup \{
    \pairs{s,({\leq},0)} \}$ for all $q \in Q$ with $q \neq \{ s \}$; 

  \item $\FF' = \{ \FF'_1, \dots, \FF'_k \}$ with $\FF'_{i} = \{ q \in Q'
    \mid q \cap Q \in \FF_{i} \}$ if $\FF = \{ \FF_{1},\dots,\FF_{k} \}$.

  \end{enumerate}
  It is easy to see that $A$ accepts $\langle T,V \rangle$ iff $A'$
  accepts $\pairs{T',V'}$, where $T'$ is obtained from $T$ by adding
  an additional root, and $V'$ assigns the label $a$ to every node in
  $T'$. Intuitively, the additional root enables $A'$ to ``guess'' a
  label at the root of the original tree. Then, the label will be guessed
  iteratively.
\end{proof}
In the next step, we translate to $\omega$-1GNPTs.
\begin{lem}
  Let $A=\pairs{\{a\},b,Q,\delta,q_0,\FF}$ be a 1GNPT. Then there
  exists an $\omega$-1GNPT \break ${A'=\pairs{\{a\},b,Q',\delta',q_0,\FF'}}$
  such that $\mathcal{L}(A)=\emptyset$ iff $\mathcal{L}(A')=\emptyset$ and
  $|Q'|=|Q|+1$.%
\end{lem}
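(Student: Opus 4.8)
The plan is to obtain $A'$ from $A$ by adjoining a single fresh \emph{padding} state $s$, keeping the counting bound $b$ and (essentially) the parity condition, so that $|Q'|=|Q|+1$. The intuition is that a tree accepted by $A$ may have nodes of arbitrary (finite or infinite) out-degree, whereas the full $\omega$-tree $T_\omega$ forces out-degree $\omega$ at every node; the superfluous successors will simply be filled with $s$-labelled subtrees that loop forever in $s$. Conversely, from any accepting run of $A'$ on $T_\omega$ I will peel off the ``non-$s$ core'', which will turn out to be an accepting run of $A$ on some tree, so that nonemptiness is preserved in both directions.

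Concretely, writing $Q\subseteq 2^Y$, I take $Y'=Y\uplus\{s\}$ and let the padding state be $\{s\}$, so that $s$ is distinguishable from every state of $Q$ (none of which mentions the variable $s$). For each $q\in Q$ I define $\delta'(q,a)$ to be $\delta(q,a)$ with every Boolean formula $\theta$ replaced by $\theta\wedge\neg s$, and I set $\delta'(s,a)=\emptyset$ (the empty, hence trivially satisfied, constraint), which lets an $s$-node carry $\omega$ many $s$-successors. Finally I set $\FF'$ to agree with $\FF$ on $Q$ while giving $s$ an \emph{even} minimal index; e.g.\ $\FF'_1=\FF_1$ and $\FF'_i=\FF_i\cup\{s\}$ for $i\ge 2$ (splitting off an extra top level when $\FF$ has index $1$). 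Thus an infinite path that stays in $s$ has $\mathsf{Inf}=\{s\}$ of even index, while a path confined to $Q$ keeps its original index.

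For the forward direction, suppose $A$ has an accepting run $\pairs{T,r}$; since $T\subseteq T_\omega$ is prefix-closed, I extend $r$ to $r'$ on $T_\omega$ by $r'(x)=r(x)$ for $x\in T$ and $r'(x)=s$ otherwise (so everything below a node outside $T$ is labelled $s$). The key observation is that guarding every formula with $\neg s$ makes the padding invisible to the counting: for each $x\in T$ and each original pair $\pairs{\theta,\xi}$, the weight of $\mathsf{sat}(\theta\wedge\neg s)$ in the successor word of $r'$ equals the weight of $\mathsf{sat}(\theta)$ in the successor word of $r$, because original states falsify $s$ while padding states falsify $\theta\wedge\neg s$. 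Hence every transition constraint of $A'$ holds (a leaf of $T$ produces an all-$s$ successor word of weight $0$, matching the empty word that satisfied $\delta(r(x),a)$ in $A$). The new infinite paths eventually enter $s$ and remain there, so they are accepting; the paths confined to $T$ are exactly those of $\pairs{T,r}$ and satisfy $\FF$. Therefore $A'$ accepts $T_\omega$.

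For the converse, from an accepting run $\pairs{T_\omega,r'}$ I let $T=\{x\in T_\omega : r'(y)\neq s \text{ for every prefix } y\preceq x\}$ and restrict $r'$ to $T$; note that although $\delta'(s,a)=\emptyset$ permits an $s$-node to have non-$s$ successors, every such successor has an $s$-ancestor and is excluded from $T$, so it is harmless. Then $T$ is prefix-closed with its root in state $q_0$, and the successors of $x\in T$ inside $T$ are precisely its non-$s$ successors; by the same weight-invariance, the word of those non-$s$ successors satisfies $\delta(r'(x),a)$ in $A$ because the full successor word satisfies $\delta'(r'(x),a)$ in $A'$. Every infinite path of $T$ avoids $s$, so its $\FF'$-index equals its $\FF$-index and is even; hence the restriction is an accepting run of $A$ and $\mathcal{L}(A)\neq\emptyset$. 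I expect the main obstacle to be exactly this counting bookkeeping: the ``at most $n$'' constraints would be destroyed by the $\omega$ many padding successors, so the whole construction hinges on the $\neg s$-guard delivering \emph{exact} weight-invariance in both directions, together with routing the parity condition so that the all-$s$ loops are accepting without perturbing the indices of the $Q$-confined paths.
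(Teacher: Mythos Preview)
Your construction is correct and essentially identical to the paper's: both add a single padding state (the paper calls it $\{\bot\}$), guard every old Boolean formula with $\neg\bot$ so that padding successors are invisible to the counting, and place the padding state in $\FF'_2$ so that all-padding tails are accepting. The only cosmetic difference is that the paper sets $\delta'(\{\bot\},a)=\{\pairs{\neg\bot,(\leq,0)}\}$, forcing a $\bot$-node to have only $\bot$-successors, whereas you set $\delta'(s,a)=\emptyset$ and instead peel off the prefix-closed non-$s$ core in the converse direction; both choices yield the same result.
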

\begin{proof}
  Define the components of $A'$ as follows:
  \begin{enumerate}[$\bullet$]

  \item $Q'=Q \cup \{ \{ \bot \} \} \subseteq 2^{Y'}$, where $Y'=Y
    \uplus \{ \bot \}$;

  \item if
    $\delta(q,a)=\{\pairs{\theta_1,\xi_1},\dots,\pairs{\theta_k,\xi_k}\}$,
    set $\delta'(q,a)=\{\pairs{\theta_1 \wedge \neg
      \bot,\xi_1},\dots,\pairs{\theta_k \wedge \neg
      \bot,\xi_k}\}$, for all $q \in Q$ with $\bot \notin q$;

  \item $\delta'(q,a)=\{ \pairs{\neg \bot, ({\leq}, 0)} \}$ for all
    $q \in Q$ with $\bot \in q$.

  \item $\FF' = \{ \FF'_1, \dots, \FF'_k \}$ with
$ \FF'_{1} = \FF_{1}$, and
$ \FF'_{i} = \FF_i \cup \{ q \in Q' \mid \bot \in q \} $, for $2 \leq i \leq
k$,
    if $\FF = \{ \FF_{1},\dots,\FF_{k} \}$.
  \end{enumerate}
  It is easy to see that $\mathcal{L}(A) \neq \emptyset $ iff $A'$ accepts
  $\pairs{T_\omega,V}$.  Accepting runs can be translated back and forth. When
  going from runs of $A$ to runs of $A'$, this involves of the children
  of each node with nodes labeled $\{\bot\}$.
\end{proof}
We are now ready to translate GNPTs to alternating word automata.  A
\emph{single-alphabet alternating parity word automaton (1APW)} is a
tuple $A=\pairs{\{a\},Q,\delta,q_0,\FF}$, where $\{a\}$ is the
alphabet, $Q$, $q_0$, and $\FF$ are as in FEAs, and $\delta:Q \times
\{a\} \rightarrow B^+(Q)$. There is only a single possible input to a
1APW, namely the infinite word $aaa\cdots$. Intuitively, if $A$ is in
state $q$ on the $i$-th position of this word and $\delta(q,a)=q' \vee
(q \wedge q'')$, then $A$ can send to position $i+1$
either a copy of itself in state $q'$ or one copy in state $q$ and one in state
$q''$. The input word is accepted iff there is an accepting run of
$A$, where a \emph{run} is a $Q$-labeled tree $\pairs{T_r,r}$ such
that
\begin{enumerate}[$\bullet$]

\item $r(\mathsf{root}(T_r)) = q_0$;

\item for all $y \in T_r$ with $r(y) = q$ and $\delta(q , a) =
  \theta$, there is a (possibly empty) set $S \subseteq Q$ such that
  $S$ satisfies $\theta$ and for all $q' \in S$, there is $j \in \Nat$
  such that $y \cdot j \in T_r$ and $r(y \cdot j) = q'$.

\end{enumerate}
As for FEAs, a run $\pairs{T_r,r}$ is \emph{accepting} if all its
infinite paths satisfy the acceptance condition.

\medskip

For an $\omega$-1GNPT $A=\pairs{\{a\},b,Q,\delta,q_0,\FF}$, $q \in Q$,
and $P \subseteq Q$, the function $\mathsf{is\_mother}_A(q,P)$ returns
$\mathsf{true}$ if there is an infinite word $t \in P^\omega$ that
satisfies the counting constraint $\delta(q,a)$, and $\mathsf{false}$
otherwise.
\begin{lem}
\label{gnptto1apw}
For every $\omega$-1GNPT $A=\pairs{\{a\},b,Q,\delta,q_0,\FF}$, the 1APW
$A'=\pairs{\{a\},Q,\delta',q_0,\FF}$ is such that $\mathcal{L}(A) =\emptyset$ iff
$\mathcal{L}(A')=\emptyset$, where for all $q \in Q$,
  $$
  \delta'(q,a)= \bigvee_{P \subseteq Q \text{ s.t. } \mathsf{is\_mother}_A(q,P)} \ \
  \bigwedge_{q \in P} q.
  $$
\end{lem}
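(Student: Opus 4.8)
The plan is to exploit the fact that both automata have an essentially unique input: $A$ runs only on $\pairs{T_\omega,V}$ and $A'$ only on $aaa\cdots$. Hence $\mathcal{L}(A)=\emptyset$ iff $\mathcal{L}(A')=\emptyset$ reduces to showing that $A$ accepts $\pairs{T_\omega,V}$ iff $A'$ accepts $aaa\cdots$. I would prove this by a bidirectional simulation between accepting runs, designed so that each branch of one run is matched with a branch of the other carrying the \emph{same} sequence of states. Since $A'$ keeps the state set $Q$ and the acceptance condition $\FF$ of $A$ unchanged, preserving the state sequence along branches makes the sets $\mathsf{Inf}$ coincide, so the parity condition transfers for free and acceptance is preserved automatically. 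The only genuine content is therefore that $\mathsf{is\_mother}_A$ exactly captures the set of states that may label the children of a node, which I isolate as the link between the two run shapes.

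For the forward direction, suppose $\pairs{T_\omega,r}$ is an accepting run of $A$. For each node $x$ let $P_x=\{r(x\cdot i)\mid i\in\Nat\}$ be the set of states occurring among the children of $x$. Then $\mathsf{lab}(x)\in P_x^\omega$ and, by the run condition, $\mathsf{lab}(x)$ satisfies $\delta(r(x),a)$; taking $\mathsf{lab}(x)$ itself as witness word shows $\mathsf{is\_mother}_A(r(x),P_x)=\mathsf{true}$, so $P_x$ is a satisfying set for the disjunct $\bigwedge_{s\in P_x}s$ of $\delta'(r(x),a)$. I then build a run of $A'$ by ``thinning'': I construct a run tree $\pairs{T_r,r'}$ together with a map $f:T_r\rightarrow T_\omega$ with $r'(y)=r(f(y))$, where from a node $y$ with $f(y)=x$ I send, for each state $q'\in P_x$, one copy to a chosen representative child of $x$ realizing $q'$. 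This is a legal run since $P_x$ satisfies $\delta'(r'(y),a)$ and covers the children; and $f$ maps children to children, so every infinite path of $\pairs{T_r,r'}$ maps to a branch of the accepting run $\pairs{T_\omega,r}$ with the identical state sequence, hence satisfies $\FF$.

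For the converse, suppose $\pairs{T_r,r'}$ is an accepting run of $A'$. At each node $y$ the set $S_y$ of its children's states satisfies $\delta'(r'(y),a)$, so it contains some $P_y$ with $\mathsf{is\_mother}_A(r'(y),P_y)=\mathsf{true}$; fix a witness word $t\in P_y^\omega$ satisfying $\delta(r'(y),a)$. I now ``fatten'' this into a run of $A$ on $T_\omega$ by induction, maintaining a map $g:T_\omega\rightarrow T_r$ with $r(x)=r'(g(x))$: at a node $x$ with $g(x)=y$, I label the children $x\cdot 0,x\cdot 1,\dots$ by the successive letters of $t$, so that $\mathsf{lab}(x)=t$ satisfies $\delta(r(x),a)$, and I set $g(x\cdot i)$ to be a child of $y$ realizing that letter (which exists as the letter lies in $P_y\subseteq S_y$). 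The result is a legal run of $A$ on the full $\omega$-tree, and again $g$ sends each branch of $T_\omega$ to an infinite path of the accepting $A'$-run with the same states, so the run is accepting.

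The step I expect to require the most care is not either simulation individually but the bookkeeping that reconciles the two differently shaped run trees — full $\omega$-branching for the GNPT versus $|P_y|$-branching for the $1$APW — via the correspondences $f$ and $g$, while making sure that ``satisfies the counting constraint $\delta(q,a)$'' and ``$\mathsf{is\_mother}_A(q,\cdot)$'' are used with exactly matching semantics (in particular that a word over $P$ need not exhaust $P$, and that the empty set is never a mother, so the degenerate disjuncts cause no trouble). Once the state-sequence-preservation of $f$ and $g$ is verified, the acceptance transfer is immediate, and the routine verifications of the run conditions I would leave to the reader.
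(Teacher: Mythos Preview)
Your proposal is correct and essentially the same as the paper's proof; in particular, your converse direction via the map $g$ is identical to the paper's construction of $\tau$. The only cosmetic difference is that in the forward direction the paper observes that the GNPT run $\pairs{T_\omega,r}$ is already, verbatim, a run of $A'$ (the 1APW run condition only asks that \emph{some} satisfying set $S$ be covered by the children's labels, not that every child's label lie in $S$), so your thinning via $f$ is harmless but unnecessary.
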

\begin{proof}
  (sketch) First assume that $\pairs{T_\omega,V} \in \mathcal{L}(A)$. Then there exists an
  accepting run $\pairs{T_\omega,r}$ of $A$ on $\pairs{T_\omega,V}$. It is not
  difficult to verify that $\pairs{T_\omega,r}$ is also an accepting
  run of $A'$. Conversely, assume that $a^\omega \in \mathcal{L}(A')$. Then
  there is an accepting run $\pairs{T_r,r}$ of $A'$.
  We define an accepting run $\pairs{T_\omega,r'}$ of $A$ on
  $\pairs{T_\omega,V}$ by inductively defining $r'$. Along with $r'$, we define
  a mapping $\tau:T_\omega \rightarrow T_r$ such that
  $r'(x)=r(\tau(x))$ for all $x \in T_\omega$. To start, set
  $r'(\mathsf{root}(T_\omega))=q_0$ and
  $\tau(\mathsf{root}(T_\omega))=\mathsf{root}(T_r)$. For the
  induction step, let $x \in T_\omega$ such that $r'(y)$ is not yet
  defined for the successors $y$ of $x$. Since $\pairs{T_r,r}$ is a
  run of $A'$ and by definition of $\delta'$, there is a $P \subseteq
  Q$ such that (i)~$\mathsf{is\_mother}_A(r(\tau(x)),P)$ and (ii)~for
  all $q \in P$, there is a successor $y$ of $\tau(x)$ in $T_r$ with
  $r(y) =q$. By (i), there is a word $t=q_1q_2\cdots \in P^\omega$ that
  satisfies the counting constraint
  $\delta(r(\tau(x)),a)=\delta(r'(x),a)$. For all $i \geq 1$, define
  $r'(x \cdot i) = q_i$ and set $\tau(x \cdot i)$ to some successor
  $y$ of $\tau(x)$ in $T_r$ such that $r(y)=q_i$ (which exists by
  (ii)).  It is not hard to check that $\pairs{T_\omega,r'}$ is indeed
  an accepting run of $A$ on $\pairs{T_\omega,V}$.
\end{proof}
For a 1APW $A=\pairs{\{a\},Q,\delta,q_0,\FF}$, $q \in Q$, and $P
\subseteq Q$, the function $\mathsf{is\_mother}_A(q,P)$ returns
$\mathsf{true}$ if $P$ satisfies the Boolean formula $\delta(q,a)$,
and $\mathsf{false}$ otherwise.

Since the transition function of the automaton $A'$ from
Lemma~\ref{gnptto1apw} is of size exponential in the number of states
of the $\omega$-1GNPT $A$, we should not compute $A'$ explicitly.
Indeed, this is not necessary since all we need from $A'$ is access to
$\FF$ and $\mathsf{is\_mother}_{A'}$ and, as stated in the next lemma,
$\mathsf{is\_mother}_{A'}$ coincides with $\mathsf{is\_mother}_A$.
The lemma is an immediate consequence of the definition of the 1APW in
Lemma~\ref{gnptto1apw}.
\begin{lem}
\label{allmothersarethesame}
Let $A$ and $A'$ be as in Lemma~\ref{gnptto1apw}, with state set $Q$.
Then $\mathsf{is\_mother}_A = \mathsf{is\_mother}_{A'}$.
\end{lem}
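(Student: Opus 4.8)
The plan is to show that the two relations agree pointwise: I fix an arbitrary $q \in Q$ and $P \subseteq Q$ and prove that $\mathsf{is\_mother}_A(q,P)$ holds exactly when $\mathsf{is\_mother}_{A'}(q,P)$ does. The entire argument is a matter of unfolding the two definitions and inserting a single monotonicity observation, which is why the statement is flagged as an immediate consequence of Lemma~\ref{gnptto1apw}.

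First I would rewrite the right-hand side. By definition, $\mathsf{is\_mother}_{A'}(q,P)$ is $\mathsf{true}$ iff $P$ satisfies the positive Boolean formula $\delta'(q,a)$, which by Lemma~\ref{gnptto1apw} is the disjunction $\bigvee_{P'} \bigwedge_{q' \in P'} q'$ ranging over all $P' \subseteq Q$ with $\mathsf{is\_mother}_A(q,P')$. A set satisfies a disjunction iff it satisfies one of its disjuncts, and $P$ satisfies a conjunction of atoms $\bigwedge_{q' \in P'} q'$ iff every atom is assigned $\mathsf{true}$, i.e.\ iff $P' \subseteq P$. Hence $\mathsf{is\_mother}_{A'}(q,P)$ holds iff there is some $P' \subseteq P$ with $\mathsf{is\_mother}_A(q,P')$.

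It then remains to check that this existential statement is equivalent to $\mathsf{is\_mother}_A(q,P)$ itself. One direction is immediate, taking $P' = P$. For the converse I would use that $\mathsf{is\_mother}_A(q,\cdot)$ is monotone in its set argument: if $P' \subseteq P$ and some $t \in (P')^\omega$ satisfies the counting constraint $\delta(q,a)$, then the very same word lies in $P^\omega$ (since $(P')^\omega \subseteq P^\omega$) and still satisfies $\delta(q,a)$, because whether a word satisfies a $b$-counting constraint depends only on the weights of the sets $\mathsf{sat}(\theta)$ within the word itself, not on the ambient alphabet from which its letters are drawn. Thus a witnessing word over the smaller set $P'$ is also a witnessing word over $P$, giving $\mathsf{is\_mother}_A(q,P)$. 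Chaining the two equivalences yields $\mathsf{is\_mother}_{A'}(q,P) \Leftrightarrow \mathsf{is\_mother}_A(q,P)$, which is the claim.

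I do not anticipate a genuine obstacle here: the statement is essentially the bookkeeping observation that the disjunctive transition $\delta'$ of Lemma~\ref{gnptto1apw} re-encodes exactly the mother relation of $A$. The only point that deserves a sentence of justification is the monotonicity of $\mathsf{is\_mother}_A$ in its second argument, and even that is forced directly by the definition of satisfaction for counting constraints. Everything else is a direct reading of the semantics of positive Boolean formulas over $Q$.
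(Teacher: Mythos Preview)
Your proposal is correct and matches the paper's treatment: the paper simply states that the lemma is an immediate consequence of the definition of $\delta'$ in Lemma~\ref{gnptto1apw}, and your argument is precisely the unfolding that justifies this claim, including the one nontrivial ingredient (monotonicity of $\mathsf{is\_mother}_A$ in its second argument via $(P')^\omega \subseteq P^\omega$).
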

To decide the emptiness of 1APWs, we use the algorithm from
\cite{KV98}. It is a recursive procedure that accesses the transition
function of the 1APW only via $\mathsf{is\_mother}$. If started on a
1APW with $n$ states and index $k$, it makes at most $2^{\mathcal{O}(k
  \log n)}$ calls to $\mathsf{is\_mother}$ and performs at most
$2^{\mathcal{O}(k
  \log n)}$ additional
steps. %

To analyze its runtime requirements, we first determine the complexity
of computing $\mathsf{is\_mother}$.\footnote{We remark that the
  analogous Lemma~1 of \cite{KSV02} is flawed because it considers
  only trees of finite outdegree.}
\begin{lem}
\label{lem:compismother}
  Let $A=\pairs{\{a\},b,Q,\delta,q_0,\FF}$ be an $\omega$-1GNPT with
  $n$ states and counting bound $b$. Then $\mathsf{is\_mother}_A$ can
  be computed in time $b^{\mathcal{O}(\log n)}$.
\end{lem}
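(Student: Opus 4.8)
The plan is to reduce the question to a bounded counting problem and to solve it by a dynamic program over the states of $P$. By definition of GNPTs, the constraint $\delta(q,a)$ consists of at most $\log|Q| = \log n$ pairs $\pairs{\theta_1,\xi_1},\dots,\pairs{\theta_k,\xi_k}$, so $k \le \log n$. For a word $t \in P^\omega$, satisfaction of $\delta(q,a)$ depends only on, for each $j \le k$, the number $c_j$ of positions whose letter $p \in P \subseteq 2^Y$ lies in $\mathsf{sat}(\theta_j)$. Since every bound $m_j$ occurring in some $\xi_j$ satisfies $m_j \le b$, I would cap each $c_j$ at $b+1$, recording any value $\ge b+1$ as a single overflow symbol $\mathsf{ovf}$. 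Whether $\delta(q,a)$ is satisfied is then a function of the \emph{capped count vector} $\bar c \in \{0,1,\dots,b,\mathsf{ovf}\}^k$, and there are at most $(b+2)^k \le (b+2)^{\log n}$ such vectors.

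The crux is to characterise which capped count vectors are realisable by an infinite word over $P$; this is exactly the point where Lemma~1 of \cite{KSV02} fails, as it accounts only for words of finite outdegree. I would call a state $p \in P$ \emph{infinitely usable} if $p \notin \mathsf{sat}(\theta_j)$ for every $j$ whose bound $\xi_j$ has the form $(\le, m_j)$. The key observation is that, in any infinite word satisfying $\delta(q,a)$, every letter occurring infinitely often must be infinitely usable: otherwise it would satisfy some $\theta_j$ paired with $(\le,m_j)$ infinitely often, exceeding that bound. Conversely, given any assignment of finite multiplicities to some states of $P$ together with infinite multiplicities to a \emph{non-empty} set of infinitely usable states, one can interleave these (finite parts in a prefix, infinite parts round-robin thereafter) into an infinite word over $P$ realising precisely these multiplicities; its count of $\theta_j$-satisfying positions is then the (possibly infinite) sum of the chosen multiplicities over the states in $\mathsf{sat}(\theta_j)$. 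Hence $\mathsf{is\_mother}_A(q,P)$ holds iff some such assignment induces a capped count vector meeting all the $\xi_j$, with at least one state given infinite multiplicity so that the word is genuinely infinite.

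To decide this, I would run a dynamic program over the states of $P$ maintaining a set of pairs $(\bar c, f)$, where $\bar c$ is a reachable capped count vector and $f \in \{0,1\}$ records whether an infinite multiplicity has already been used. Starting from $\{((0,\dots,0), 0)\}$, when processing a state $p$ with support $S_p = \{ j : p \in \mathsf{sat}(\theta_j)\}$ (computed by evaluating the $\theta_j$, which is polynomial in the input), I would update each pair $(\bar c, f)$ by (i)~choosing a finite multiplicity $M \in \{0,\dots,b\}$ and adding $M$ to every coordinate in $S_p$, capping at $\mathsf{ovf}$, and keeping $f$; and (ii)~if $p$ is infinitely usable, also by setting every coordinate in $S_p$ to $\mathsf{ovf}$ and turning $f$ to $1$. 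At the end I accept iff some reachable pair has $f = 1$ and a vector meeting every $\xi_j$, where $\mathsf{ovf}$ counts as $> m_j$ and as violating any $(\le, m_j)$ (correct since $m_j \le b$). The reachable set has at most $2(b+2)^k$ pairs; each of the $\le n$ states is processed in time polynomial in this quantity, in $b$, and in $k$; and since $n$, $b$, and $k \le \log n$ are all bounded by $(b+2)^{O(\log n)}$, the total running time is $(b+2)^{O(\log n)} = b^{O(\log n)}$, as required.

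I expect the main obstacle to be the realisability characterisation of the second paragraph, and in particular its forward direction: proving that every constraint-satisfying infinite word forces all of its infinitely-occurring letters to be infinitely usable. This is precisely the subtlety that the finite-outdegree analysis of \cite{KSV02} overlooks, and getting it right together with the matching converse interleaving construction is what makes the dynamic program both sound and complete.
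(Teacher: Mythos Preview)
Your argument is correct, but it takes a different route from the paper's.  The paper does not introduce the notion of an infinitely-usable letter or run a dynamic program over $P$; instead it builds, for the given $q$ and $P$, a deterministic B\"uchi word automaton over the alphabet $P$ whose states are the capped counter vectors $\{0,\dots,b\}^k$, whose transitions increment (and cap) the components indexed by the formulas satisfied by the current letter, and whose accepting set consists of those vectors at which every $(\le,\cdot)$ bound still holds and every $(>,\cdot)$ bound is already met.  Non-emptiness of this B\"uchi automaton is then checked in linear time.  The two approaches share the same counting abstraction (at most $(b{+}O(1))^{k}$ vectors with $k\le\log n$), but package it differently: the paper reduces to a standard $\omega$-automaton emptiness test and lets the B\"uchi acceptance condition implicitly handle the ``some letter must recur infinitely often'' issue, while you make that explicit through the infinitely-usable predicate and the flag $f$, obtaining a self-contained combinatorial algorithm that does not invoke any auxiliary automaton model.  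Your explicit characterisation also makes visible why the finite-outdegree argument from \cite{KSV02} breaks (letters in $\mathsf{sat}(\theta_j)$ for a $(\le,m_j)$ bound cannot be repeated unboundedly), whereas in the paper this is absorbed into the B\"uchi semantics.  One small point: you only allow finite multiplicities $M\le b$, and this is indeed enough, but the reason deserves a line in the write-up---if $p$ is not infinitely usable it already satisfies some $\theta_j$ with a $(\le,m_j)$ bound and hence occurs at most $m_j\le b$ times in any satisfying word, while if $p$ is infinitely usable and is used more than $b$ times you may as well use it infinitely often since it touches only $(>,\cdot)$-coordinates.
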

\begin{proof}
  Assume that we want to check whether $\mathsf{is\_mother}_A(q,P)$,
  for some $q \in Q$ and $P \subseteq Q$. Let
  $\theta_1,\dots,\theta_k$ be all formulas occurring in
  $C:=\delta(q,a)$. We construct a deterministic B\"uchi automaton
  $A'=\pairs{\Sigma',Q',q'_0,\delta',F'}$ on infinite words that
  accepts precisely those words $t \in P^\omega$ that satisfy $C$:
  \begin{enumerate}[$\bullet$]

  \item $\Sigma'=P$;

  \item $Q'=\{0,\dots,b\}^k$;

  \item $q'_0 = \{0\}^k$;

  \item $\delta'((i_1,\dots,i_k),p)$ is the vector $(j_1,\dots,j_k)$,
    where for all $h \in \{1,\dots,k\}$, we have \break ${j_h=\min \{b,
    i_h+1\}}$ if $p \in \Sigma'$ satisfies $\theta_h$, and $j_h =i_h$
    otherwise;

  \item $F'$ consists of those tuples $(i_1,\dots,i_k)$ such that
    for all $h \in \{1,\dots,k\}$,
    \begin{enumerate}[(1)]

    \item there is no $\pairs{\theta_h,(\leq, r)} \in C$ with
      $r < i_h$;

    \item for all  $\pairs{\theta_h,(>, r)} \in C$, we have
      $i_h \geq r$.

    \end{enumerate}

  \end{enumerate}
  By definition of GNPTs, the cardinality of $C$ is bounded by $\log
  n$. Thus, $A'$ has $b^{\log n}$ states. It remains to note that
  the emptiness problem for deterministic B\"uchi word automata
  (is \nlsp-complete~\cite{VW94}
  and) can be solved in linear time~\cite{Var07}.
\end{proof}
Now for the runtime of the algorithm.  Let $A$ be a GNPT with $n$
states, %
counting bound $b$, and index $k$. To decide emptiness of $A$, we
convert $A$ into an $\omega$-1GNPT $A'$ with $n+1$ states, counting
bound $b$, and index $k$, and then into a 1APW $A''$ with $n+1$ states
and index $k$.  By Lemma~\ref{lem:compismother}, we obtain the
following result.
\begin{thm}
\label{thm:gnptempty}
  Let $A=\pairs{\Sigma,b,Q,\delta,q_0,\FF}$ be a GNPT with $|Q|=n$,
  and index $k$. Then emptiness of $A$ can be decided
  in time $(b+2)^{\mathcal{O}(k \cdot \log n )}$.
\end{thm}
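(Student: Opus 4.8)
The plan is to chain together the reductions and complexity bounds developed in this subsection and then carry out the runtime arithmetic. First I would apply the two conversion lemmas in sequence: the GNPT-to-1GNPT lemma turns $A$ into a single-alphabet GNPT, and the 1GNPT-to-$\omega$-1GNPT lemma turns that into an $\omega$-1GNPT $A'$ that preserves (non)emptiness, has $\mathcal{O}(n)$ states, and has index $k$. (The alphabet contributes only a factor $|\Sigma|$ to the state count, which enters the final exponent as $\log|\Sigma|$ and is absorbed into $\mathcal{O}(\log n)$ whenever $\Sigma$ is polynomially bounded, as it is in all applications of this theorem.) By Lemma~\ref{gnptto1apw}, $A'$ in turn yields a 1APW $A''$ with the same state set and acceptance condition, so $A''$ has $\mathcal{O}(n)$ states and index $k$, and $\mathcal{L}(A)=\emptyset$ iff $\mathcal{L}(A'')=\emptyset$.

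Next I would invoke the emptiness algorithm of \cite{KV98} on $A''$. As recalled above, this algorithm accesses the transition function of a 1APW only through the predicate $\mathsf{is\_mother}$, makes at most $2^{\mathcal{O}(k \log n)}$ oracle calls, and performs at most $2^{\mathcal{O}(k \log n)}$ additional steps. The essential point, and what keeps the exponential size of $\delta''$ out of the bound, is that $A''$ is never built explicitly: by Lemma~\ref{allmothersarethesame}, $\mathsf{is\_mother}_{A''}$ coincides with $\mathsf{is\_mother}_{A'}$, and by Lemma~\ref{lem:compismother} the latter is computed directly from the counting constraints of $A'$ in time $b^{\mathcal{O}(\log n)}$.

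Finally I would combine the two estimates. Each of the $2^{\mathcal{O}(k \log n)}$ oracle calls costs $b^{\mathcal{O}(\log n)}$, so the overall running time is bounded by $2^{\mathcal{O}(k \log n)} \cdot b^{\mathcal{O}(\log n)} + 2^{\mathcal{O}(k \log n)}$. Since $2 \le b+2$ and $b \le b+2$, both $2^{\mathcal{O}(k \log n)}$ and $b^{\mathcal{O}(\log n)}$ are dominated by $(b+2)^{\mathcal{O}(k \log n)}$ (using $k \ge 1$ for the second), and their product collapses to the same form; hence emptiness of $A$ is decided in time $(b+2)^{\mathcal{O}(k \log n)}$.

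Since every ingredient is already in place, the argument is essentially bookkeeping, and there is no genuinely hard step left. The one subtlety worth stating explicitly is the observation in the second paragraph: the (exponentially large) 1APW is consulted only through the $\mathsf{is\_mother}$ oracle, so that Lemma~\ref{lem:compismother}, rather than the size of $\delta''$, governs the per-call cost and thereby the final bound.
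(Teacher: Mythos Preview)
Your proposal is correct and follows exactly the paper's approach: chain the two conversion lemmas to obtain an $\omega$-1GNPT, pass to the 1APW of Lemma~\ref{gnptto1apw}, and run the algorithm of \cite{KV98} while evaluating $\mathsf{is\_mother}$ via Lemmas~\ref{allmothersarethesame} and~\ref{lem:compismother}. You are in fact more explicit than the paper on two points it glosses over---the $|\Sigma|$ factor from the first conversion (which, as you note, is absorbed in all the intended applications) and the final arithmetic showing that $2^{\mathcal{O}(k\log n)}\cdot b^{\mathcal{O}(\log n)}\le (b+2)^{\mathcal{O}(k\log n)}$.
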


\subsection{Wrapping Up}

Finally, we are ready to prove Theorem~\ref{emptiness}, which we restate here for convenience.
\\[2mm]
{\bf Theorem~\ref{emptiness}.} \emph{
The emptiness problem for a 2GAPT $A = \genA$ with %
$n$ states and index $k$ can be solved in time $(b+2)^{\mathcal{O}(n^2 \cdot
k^2 \cdot \log k \cdot \log b^2)}$.}
\\[-2mm]
\begin{proof}
  By Theorem~\ref{translation}, we can convert $A$ into a GNPT $A'$
  with $2^{\mathcal{O}(kn^2 \cdot \log k \cdot \log b^2)}$ states, index
  $nk$, and counting bound $b$.
  Thus, Theorem~\ref{thm:gnptempty} yields the desired
  result.
\end{proof}
A matching {\sc ExpTime} lower bound is inherited from nongraded,
one-way alternating tree automata.

\section{Conclusion}
\label{sect:concl}

We have studied the complexity of $\mu$-calculi enriched with inverse
programs, graded modalities, and nominals. Our analysis has resulted
in a rather complete picture of the complexity of such logics. In
particular, we have shown that only the fully enriched $\mu$-calculus
is undecidable, whereas all its fragments obtained by dropping at
least one of the enriching features inherit the attractive
computational behavior of the original, non-enriched $\mu$-calculus.

From the perspective of the description logic OWL, the picture is as
follows. Undecidability of the fully enriched $\mu$-calculus means
that OWL extended with fixpoints is undecidable. The decidable
$\mu$-calculi identified in this paper give rise to natural fragments
of OWL that remain decidable when enriched with fixpoints. Orthogonal
to the investigations carried out in this paper, it would be
interesting to understand whether there are any second-order features
that can be added to OWL without losing decidability. In particular,
decidability of OWL extended with transitive closure is still an open
problem.

\subsection*{Acknowledgements} We are grateful to Orna Kupferman and
Ulrike Sattler for helpful discussions of \cite{SV01,KSV02}.

\bibliographystyle{alpha}

\end{document}